\newcommand{\stkout}[1]{\ifmmode\text{\sout{\ensuremath{#1}}}\else\sout{#1}\fi}
\newtheorem{theorem}{Theorem}[section]
\newtheorem{remark}[theorem]{Remark}
\newtheorem{assumption}[theorem]{Assumption}
\newtheorem{lemma}[theorem]{Lemma}
\newtheorem{proposition}[theorem]{Proposition}
\newtheorem{corollary}[theorem]{Corollary}
\newcommand{\one}{\text{$\mathbbm{1}$}}
\title[Irreversible Investment Problem with Two-Factor Uncertainty]{On an Irreversible Investment Problem with Two-Factor Uncertainty}
\author[Dammann]{F. Dammann\dag}
\author[Ferrari]{G. Ferrari\dag}
\date{\today \\\text{} \\ Felix Dammann \\ \href{mailto:}{dammann@uni-bielefeld.de} \\ Giorgio Ferrari \\ \href{mailto:}{ giorgio.ferrari@uni-bielefeld.de} \\ \text{} \\
\dag Center for Mathematical Economics (IMW), Bielefeld University, Universit\"{a}tsstraße 25, D-33615 Bielefeld, Germany 
}
\numberwithin{equation}{section}
\begin{document}

\maketitle

\begin{abstract}
We consider a real options model for the optimal irreversible investment problem of a profit maximizing company. The company has the opportunity to invest into a production plant capable of producing two products, of which the prices follow two independent geometric Brownian motions. After paying a constant sunk investment cost, the company sells the products on the market and thus receives a continuous stochastic revenue-flow. This investment problem is set as a two-dimensional optimal stopping problem. We find that the optimal investment decision is triggered by a convex curve, which we characterize as the unique continuous solution to a nonlinear integral equation. Furthermore, we provide analytical and numerical comparative statics results of the dependency of the project's value and investment decision with respect to the model's parameters.  
\end{abstract}
\text{}\\
\smallskip
{\textbf{Keywords}}: Real Options; Irreversible Investment; Optimal Stopping; Nonlinear Integral Equation; Comparative Statics \\
\smallskip{\textbf{JEL Classification}}: G11, C61, D25 \\
\section{Introduction}
In this paper, we study a real options model of a company facing an irreversible investment decision in the presence of two sources of uncertainty. By paying a fixed sunk cost the company generates a continuous stochastic cash-flow, which results from selling two products on the market. In this framework, the company aims at maximizing its total expected profit arising from this investment and seeks to find a decision rule, which determines the optimal time to undertake this expenditure. \\
We will see that this amounts in solving a two-dimensional optimal stopping problem of the form
\begin{equation}\label{Introduction}
V(x,y) = \sup_{\tau } \mathbb{E} [ e^{-r \tau} F(X_{\tau}^{x} , Y_{\tau}^{y} ) ],
\end{equation}
where the supremum is taken over the set of stopping times and the function $F$ represents the value of the investment, dependent on two It$\hat{\text{o}}$-diffusions $X$ and $Y$ modelling the prices of the two products on the market (cf. (\ref{Objective J}) below). \\
\hspace*{0.2cm}Dating back to the seminal works of Myers  \cite{Myers} and McDonald and Siegel \cite{McDonald}, the real options approach to irreversible investment decisions has received much attention in economics and finance with various settings regarding the dimensionality and characteristics of the underlying stochastic process (cf.\,Dixit \cite{Dixit1}, Pindyck \cite{Pindyck1,Pindyck2} or Alvarez \cite{Alvarez}, Battauz et al.\,\cite{Battauz}, Luo et al.\,\cite{Luo} for more recent contributions). In the simplest form, where the underlying economic shock process is one-dimensional and the investment option gives rise to a perpetual payoff stream, explicit solutions are often feasible (cf.\,Dixit and Pindyck \cite{Dixit}, Stokey \cite{Stokey}, Trigeorgis \cite{Trigeorgis} for a survey). On the other hand, there are still only few examples of solvable multidimensional optimal stopping problems, despite the fact that real options models naturally deal with multiple sources of uncertainty. \\
\hspace*{0.2cm}In some models, the dimensionality of the problem can be effectively reduced to one. For instance, McDonald and Siegel \cite{McDonald} derive the optimal solution for the ratio of investment value and investment cost, and thus trace the problem back to a one-dimensional problem, for which an explicit solution could be found. This operative method was used and improved by a number of authors such as Gerber and Shiu \cite{Gerber}, Shepp and Shiryaev \cite{Shepp} as well as Thijssen \cite{Thijssen} (see also Christensen et al.\,\cite{Christensen} and references therein). Nevertheless, in presence of a constant sunk cost of investment, a reduction of dimension \`{a} la McDonald and Siegel \cite{McDonald} is typically not feasible, as the problem's value function fails to be homogeneous of degree one. \\ 
\hspace*{0.2cm}Characterizing the solution in optimal stopping/real options models where the state space cannot be reduced is a challenging task. 
Hu and \O ksendal \cite{Hu} as well as Olsen and Stensland \cite{Olsen} consider an investment problem involving a multidimensional geometric Brownian motion, but their given conjecture regarding the shape of the stopping region only holds true in trivial cases, as pointed out by Christensen and Irle \cite{Christensen2}. Adkins and Paxson \cite{Adkins} proposed a quasi-analytical approach, which results in solving a set of simultaneous equations, but their methodology seems to trigger sub-optimal solutions (see also Compernolle et al.\,\cite{Dutch}, Lange et al.\,\cite{Lange}).
There are, however, some recent contributions in which a complete characterization of the solution to truly multidimensional irreversible investment problems was derived. De Angelis et al.\,\cite{GF} study a singular stochastic control problem and the associated two-dimensional optimal stopping problem, for which they characterize the optimal boundary as the unique solution to a nonlinear integral equation. Christensen and Salminen \cite{ChrisSalm} propose a solution method relying on the Riesz representation of excessive functions and study a classical investment problem, for which they derive an integral equation of similar structure. In both references, the uniqueness of the representation is established by relying on arguments first presented in Peskir \cite{PeskirU}. \\
\hspace*{0.2cm}In this paper, we consider and solve optimal investment problem (\ref{Introduction}), which was first introduced by Compernolle et al.\,\cite{Dutch}. In that work, the authors derived some important preliminary results regarding the value function as well as the corresponding optimal boundary, but did not achieve a complete characterization of the latter. In this work, we push the analysis of Compernolle et al.\,\cite{Dutch} much further. Borrowing arguments from De Angelis et al.\,\cite{GF}, we determine an integral equation for the optimal investment boundary (cf. Theorem \ref{Theorem Integral Equation}). Moreover, we provide an analytical rigorous study of the dependency of the optimal boundary on some model's parameters. To our knowledge, such a result appears here for the first time. As a matter of fact, the analytical approach to comparative statics in Olsen and Stensland \cite{Olsen} (also employed by Compernolle et al.\,\cite{Dutch}) seems to overlook the delicate issue of the regularity of the value function. We are able to fix this issue by providing the proper regularity property, that in turn allows for a rigorous proof of the claimed monotonicity results and for additional findings (cf. Section \ref{Section Comparative Statics Analysis}).
Finally, inspired by the numerical analysis in Detemple and Kitapbayev \cite{Detemple} and Christensen and Salminen \cite{ChrisSalm}, we propose a probabilistic numerical approach for the determination of the optimal boundary through the derived integral equation. We provide details about the algorithm, with the aim of making a service to other studies dealing with related questions. It is worth noticing that the proposed probabilistic numerical method employs a Monte Carlo simulation, and  as such it does not face the curse of dimensionality, which is typical of analytical methods in large dimensions. \\
\hspace*{0.2cm}Overall, we believe that our main contributions are the following. From a mathematical point of view, given the limited amount of solvable multidimensional optimal stopping problems, we believe that our detailed study nicely complements the existing literature on optimal stopping as well as real options theory. Moreover, we suggest that our approach also has a methodological value for other real options problems. In fact, it defines an operative recipe for the determination of the optimal investment trigger analytically (by an integral equation) and numerically (by an approximation scheme), which can be easily adapted to different settings as well. \\
\hspace*{0.2cm}The paper is organized as follows. In Section 2 we introduce the optimal investment problem. In Section 3 we consider two benchmark problems, before we continue by characterizing the value function and the related optimal boundary in Sections 4 and 5. Analytical and numerical comparative statics results are then obtained in Section 6. Finally, some technical proofs and results are collected in the Appendices.
\section{The Irreversible Investment Problem}
\label{sec:setting}
Let $(\Omega, \mathcal{F}, \mathbb{F}:= (\mathcal{F}_{t})_{t \geq 0}, \mathbb{P})$ be a complete filtered probability space, with the filtration $\mathbb{F}$ generated by a two-dimensional Brownian motion $W = (W_{t}^{X}, W_{t}^{Y})_{t \geq 0 }$ and augmented with $\mathbb{P}$-null sets. We consider a profit-maximizing and risk-neutral company, which has the opportunity to invest into a production plant by paying a constant investment cost $I$. The production plant is capable of producing two goods in given quantities $Q_1$ and $Q_2$ and we assume that the prices of the two goods evolve stochastically according to the dynamics
\begin{equation} \label{dyn}
\begin{cases}
dX_{t}^{x} = \alpha_{1} X_{t}^{x} dt + \sigma_{1} X_{t}^{x} d W_{t}^{X}, & \quad X_0^x = x > 0, \\
dY_{t}^y \, = \alpha_{2} Y_{t}^y dt + \sigma_{2} Y_{t}^y dW_{t}^{Y}, & \quad Y_0^y = y\,  > 0,
\end{cases}
\end{equation}
for some constants $\alpha_{1}, \alpha_{2} \in \mathbb{R}$ and $\sigma_{1} , \sigma_{2} > 0$.
We assume that after the company has made the investment, it is able to sell the goods in their given quantities instantaneously and over an infinite time horizon on the market. If the investment is performed at initial time, its value for given price levels $x$ and $y$ is then obtained through the discounted perpetual revenue flow, net of the investment cost; that is,
\begin{equation}\label{defF(x,y)}
 \mathbb{E} \Bigl[ \int_{0}^{\infty} e^{-rt} \pi (X_{t}^{x} , Y_{t}^{y}) dt  -  I \Bigl] =: F(x,y) .
\end{equation}
Here $\pi(x,y) := Q_1 x + Q_2 y$ denotes the profit function and $r > 0$  is a discount factor. In order to guarantee finite integrals, we make the following \textbf{standing assumption}. 
\begin{assumption}\label{Assr>a}
We have $r > \alpha_{1} \lor \alpha_{2}$.
\end{assumption}
Clearly, an investment at initial time is not necessarily optimal. Hence, setting
\begin{equation}\label{Stopping times T}
\mathcal{T} := \{ \tau: ~  \tau \text{ are } \mathbb{F} \text{-stopping times}\},
\end{equation}
the company aims at determining the entry rule $\tau^* \in \mathcal{T}$ that maximizes its net total expected profits from $\tau^*$ on. That is, for any initial price levels $x,y \in \mathbb{R}_+ := [0, \infty)$, it seeks to determine $\tau^* \in \mathcal{T}$ such that 
\begin{align}\label{Value Function}
V(x,y) := \mathcal{J}(x,y,\tau^*) = \max_{\tau \in \mathcal{T}} \mathcal{J}(x,y,\tau),
\end{align}
where
\begin{align}\label{Objective J}
\mathcal{J}(x,y,\tau) := \mathbb{E} \Bigl[ e^{-r \tau} F(X_{\tau}^{x} , Y_{\tau}^{y} ) \Bigl] = \mathbb{E} \Bigl[ e^{-r \tau} \Bigl( \frac{Q_{1} X_{\tau}^{x}}{\delta_{1}} + \frac{Q_{2} Y_{\tau}^{y}}{\delta_{2}} - I \Bigl) \Bigl]
\end{align}
for $\delta_i = r - \alpha_i$, $i = 1,2.$ The last equality in (\ref{Objective J}) follows by straightforward  calculations upon using Assumption \ref{Assr>a}. Throughout this paper, we will refer to (\ref{Value Function}) as to \emph{the optimal investment problem}.
\begin{remark}\label{Remark Technicalities}
Assumption \ref{Assr>a} guarantees $
\mathbb{E} [  \sup_{t \geq 0} e^{-rt} X_{t}^{x} ] < + \infty$ and $ \mathbb{E} [ \sup_{t \geq 0} e^{-rt} Y_{t}^{y} ] < + \infty$, standard technical assumptions in the theory of optimal stopping (cf. Karatzas and Shreve \cite{Kara}, p.\,35). Amongst other things, these conditions imply that the families of random variables 
\begin{equation}\label{e -rt Xt uniformly integrable}
\{ e^{-r\tau}X_{\tau}^{x} \one_{\{\tau < \infty \} }  , \, \tau \in \mathcal{T} \}
\quad \text{and} \quad \{e^{-r\tau} Y_{\tau}^{y} \one_{\{\tau < \infty \} }, \, \tau \in \mathcal{T} \}
\end{equation} 
are uniformly integrable. Moreover $\lim_{t \to \infty} e^{-rt} X_{t}^{x} = 0$ as well as $\lim_{t \to \infty} e^{-rt} Y_{t}^{y} = 0$ $\mathbb{P}$-a.s., we thus adopt the convention
\begin{align}\label{convention lim e X eins=infty}
e^{-r \tau} X_{\tau}^{x} \one_{ \{ \tau = \infty \}} &:= \lim_{ t \to \infty} e^{-rt} X_{t}^{x} = 0, \quad \mathbb{P}\text{-a.s.} \nonumber \\  
\text{as well as} \quad e^{-r \tau}\, Y_{\tau}^{y} \one_{ \{ \tau = \infty \} } &:= \lim_{ t \to \infty} e^{-rt}\text{\,}  Y_{t}^{y} = 0, \quad \mathbb{P}\text{-a.s.} 
\end{align}
and set 
\begin{equation}\label{e -r tau f(X,Y) on tau=infty}
e^{-r \tau} \mid f(X_{\tau}^{x} , Y_{\tau}^{y}) \mid \, := \limsup_{t \to \infty} e^{-rt} \mid f(X_{t}^{x} , Y_{t}^{y} )\mid \quad \mathbb{P}\text{-a.s.} 
\end{equation}
on $\{ \tau = \infty \}$ for any Borel-measurable function $f$. 
\end{remark}
\section{Two Benchmark Problems} \label{Section Benchmark problem}
Before we study the optimal entry problem introduced in the previous section, it is useful to focus on two related classical real options problems. Notice that the values $x=0$ and $y=0$ are absorbing boundaries for the processes $X_t^x$ and $Y_t^y$. In particular, when $X_0^0 = 0$ (resp. $Y_0^0 =0$), then $X_t^0 =0$ (resp. $Y_t^0 =0$) for all $t \geq 0$ $\mathbb{P}$-a.s. Therefore, we can naturally associate to (\ref{Value Function}) the two one-dimensional optimal stopping problems 
\begin{align}\label{OSP one dimensional}
v_1(x)  :=   \sup_{\tau \in \mathcal{T}} \mathbb{E} \Bigl[ e^{-r\tau} \Bigl( \frac{Q_{1} X_{\tau}^{x} } {\delta_{1}} - I \Bigl) \Bigl] \quad \text{and} \quad 
v_2 (y) :=  \sup_{\tau \in \mathcal{T} } \mathbb{E} \Bigl[ e^{-r\tau} \Bigl( \frac{Q_{2} Y_{\tau}^{y} } {\delta_{2}} - I \Bigl) \Bigl] 
\end{align}
Due to the one-dimensional structure of this problem, their solution is standard and can be obtained by a \textit{guess-and-verify approach} (cf.\,Dixit and Pindyck \cite{Dixit}).   \\
Let us consider $v_1$, as analogous considerations can be made for $v_2$. It is reasonable to assume that the company invests into the production plant only when the current price of the product is large enough. We thus expect that the optimal stopping time for problem (\ref{OSP one dimensional}) is of the form 
\begin{align*}
\tau^*_x := \inf \{ t \geq 0:~ X_t^x \geq x^* \},
\end{align*}
where $x^*$ denotes the critical price level, at which the company decides to invest. Accordingly, the candidate value function $w$ should satisfy $(\mathcal{L}_X -r)w(x) = 0$ for all $x < x^*$, where $\mathcal{L}_X$ denotes the second-order differential operator (acting on twice-continuously differentiable functions) given by 
\begin{align}
\mathcal{L}_X := \frac{1}{2} \sigma_1^2 x^2 \frac{\partial}{\partial x^2} + \alpha_1 x \frac{\partial}{\partial x}. 
\end{align}
 It is well known that the equation $(\mathcal{L}_X -r)w(x)=0$ admits two fundamental solutions $\psi (x) = x^{\beta_1}$ and $\varphi (x) = x^{\beta_2}$, where $\beta_1$ and $\beta_2$ are the positive and negative solutions to the equation
\begin{align*}
\frac{1}{2} \sigma_1^2 \beta (\beta -1) + \alpha_1 \beta - r = 0
\end{align*}
and Assumption \ref{Assr>a} guarantees $\beta_1 > 1$. Consequently, any of its solutions takes the form $w(x) = A \psi (x) + B \varphi (x)$ for $x < x^*$, where $A$ and $B$ are constants to be found. As $x \mapsto \varphi (x)$ diverges as $x \downarrow 0$, and it is easy to see that $v_1$ has instead sublinear growth, we guess $B=0$. The candidate value function $w$ thus can be written as 
\begin{align*}
w(x) = \begin{cases}
Ax^{\beta_1} & x < x^* \\
\frac{Q_1 x}{\delta_1} - I & x \geq x^*
\end{cases}
\end{align*}   
for $A$ and $x^*$ to be derived. By employing the standard smooth-pasting and smooth-fit condition, it is straightforward to see that they are given by 
\begin{align}\label{A and xstar}
x^{*} = \frac{\beta_1}{(\beta_1 - 1) Q_{1}} \delta_{1} I \quad \text{and} \quad A = \frac{Q_{1}}{\beta_{1} \delta_{1}} x^{* 1-\beta_{1}}.
\end{align} 
The following proposition verifies that the candidate value function $w$ constructed in this way indeed coincides with the value function $v_1$ of (\ref{OSP one dimensional}). Its proof is standard and we refer to the classical textbook of Peskir and Shiryaev \cite{Peskir} for techniques and results. 
\begin{proposition}
Recall $v_1$ from (\ref{OSP one dimensional}). Then we have
\begin{align*}
v_1 (x) = 
\begin{cases}
A x^{\beta_1} &0 < x<x^{*}, \\
\frac{Q_{1} x}{\delta_{1}} - I &x \geq x^{*},
\end{cases}
\end{align*}
where $A$ and $x^*$ are given by (\ref{A and xstar}). Also,
\begin{align*}
\tau_{x}^{*}  :=  \inf \{ t \geq 0: ~ X_{t}^{x} \geq x^{*} \}
\end{align*}
is the optimal stopping time. 
\end{proposition}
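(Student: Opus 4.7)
The plan is to run the standard verification argument for the guess-and-verify candidate $w$ constructed before the proposition. First I would record the regularity implied by (\ref{A and xstar}): by the smooth-fit prescription, $w$ belongs to $C^1((0,\infty))$ and to $C^2((0,\infty) \setminus \{x^*\})$, and is convex. A direct computation yields $(\mathcal{L}_X - r)w \equiv 0$ on $(0, x^*)$ by the defining quadratic for $\beta_1$, while on $(x^*, \infty)$ one finds $(\mathcal{L}_X - r)w(x) = -Q_1 x + rI < 0$; the last inequality follows from $x > x^* > rI/Q_1$, the bound $x^* > rI/Q_1$ reducing to $\alpha_1 \beta_1 < r$, which is itself a direct consequence of the quadratic for $\beta_1$ together with $\beta_1 > 1$. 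Convexity and the tangency at $x^*$ immediately give the global dominance $w(x) \geq Q_1 x/\delta_1 - I$ on $(0, \infty)$.

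With these properties in hand, I would apply a generalized It\^o formula to $e^{-rt}w(X_t^x)$ (the contribution of a local-time term at $x^*$ vanishes thanks to $C^1$ regularity), obtaining
\[
e^{-r(t \wedge \tau_n)} w(X_{t \wedge \tau_n}^x) = w(x) + \int_0^{t \wedge \tau_n} e^{-rs}(\mathcal{L}_X - r)w(X_s^x)\,ds + M_{t \wedge \tau_n},
\]
where $(\tau_n)$ is a localizing sequence and $M$ a local martingale. Since the integrand in the drift is non-positive, evaluating at $\tau \wedge t \wedge \tau_n$ for an arbitrary $\tau \in \mathcal{T}$, taking expectations, and using the dominance $w \geq Q_1 \cdot / \delta_1 - I$ yields a supermartingale bound for the payoff. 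Letting $n \to \infty$ and then $t \to \infty$, I would exploit the linear growth of $w$ together with the uniform integrability supplied by Remark \ref{Remark Technicalities} and the conventions recorded there on $\{\tau = \infty\}$ to conclude $w(x) \geq \mathbb{E}[e^{-r\tau}(Q_1 X_\tau^x/\delta_1 - I)]$, and hence $w \geq v_1$ upon taking the supremum over $\tau$.

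For the reverse inequality I would plug in $\tau = \tau_x^*$. On $[0, \tau_x^*)$ the process $X^x$ stays in $(0, x^*)$ and $(\mathcal{L}_X - r)w \equiv 0$ there, so the drift in the It\^o identity vanishes; boundedness of $w'(X)X$ on $[0, x^*]$ renders $M$ a uniformly integrable martingale up to $\tau_x^*$. Passing to the limit and using $w(x^*) = Q_1 x^*/\delta_1 - I$ together with the conventions of Remark \ref{Remark Technicalities} on $\{\tau_x^* = \infty\}$ produces $w(x) = \mathbb{E}[e^{-r\tau_x^*}(Q_1 X_{\tau_x^*}^x/\delta_1 - I)] \leq v_1(x)$, completing the proof. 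The main technical obstacle is the infinite-horizon passage combined with the fact that $\tau_x^*$ may equal $+\infty$ with positive probability (e.g.\ when $\alpha_1 \leq 0$): this is precisely where Assumption \ref{Assr>a} and the uniform integrability recalled in Remark \ref{Remark Technicalities} are invoked. The non-$C^2$ regularity of $w$ at $x^*$ is the familiar smooth-fit issue, handled routinely (see Peskir and Shiryaev \cite{Peskir}).
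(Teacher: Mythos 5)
Your verification argument is correct and is precisely the standard guess-and-verify proof that the paper constructs the candidate for and then omits as ``standard'' with a reference to Peskir and Shiryaev \cite{Peskir}; all the key points check out, including the strict inequality $\alpha_1\beta_1 < r$ (hence $x^* > rI/Q_1$ and negativity of $(\mathcal{L}_X - r)w$ on $(x^*,\infty)$), the dominance via convexity and tangency at $x^*$, and the handling of $\{\tau = \infty\}$ through the conventions of Remark \ref{Remark Technicalities}. No gaps.
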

Analogously, we have the next result concerning $v_2$.
\begin{proposition}
Recall $v_2$ from (\ref{OSP one dimensional}). Then
\begin{align*}
v_2 (y) = 
\begin{cases}
D y^{\eta_1} &0 < y<y^{*}, \\
\frac{Q_{2} y}{\delta_{2}} - I &y \geq y^{*},
\end{cases}
\end{align*}
where the constant $D$ and the investment threshold $y^*$ are given by 
\begin{align}\label{ystar}
y^* =  \frac{\eta_1}{(\eta_1 - 1) Q_{2}} \delta_{2} I \quad \text{and} \quad
D = \frac{Q_{2}}{\eta_{1} \delta_{2}} y^{* 1-\eta_{1}},
\end{align}
where $\eta_1 > 1$ denotes the positive root of the quadratic equation $\frac{1}{2} \sigma_2^2 \eta (\eta -1) + \alpha_2 \eta -r = 0$. Moreover, the optimal stopping time is of the form 
\begin{align*}
\tau_{y}^{*}  :=  \inf \{ t \geq 0: ~ Y_{t}^{y} \geq y^{*} \}
\end{align*}
\end{proposition}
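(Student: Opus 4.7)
The plan is to observe that the proposition is a verbatim analogue of the one for $v_1$ under the substitution $(X^x, \alpha_1, \sigma_1, Q_1, \delta_1, \beta_1) \rightsquigarrow (Y^y, \alpha_2, \sigma_2, Q_2, \delta_2, \eta_1)$. Therefore the proof parallels the preceding one, and I would carry it out by the standard guess-and-verify recipe.

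First, I would introduce the infinitesimal generator $\mathcal{L}_Y := \tfrac{1}{2}\sigma_2^2 y^2 \partial_{yy} + \alpha_2 y \partial_y$ of $Y^y$ and argue that on the conjectured continuation region $\{y < y^*\}$ the candidate value $w$ solves $(\mathcal{L}_Y - r)w = 0$, whose fundamental solutions are $y^{\eta_1}$ and $y^{\eta_2}$ with $\eta_1$ (resp.\ $\eta_2$) the positive (resp.\ negative) root of $\tfrac{1}{2}\sigma_2^2 \eta(\eta-1) + \alpha_2 \eta - r = 0$. Assumption \ref{Assr>a} ensures $\eta_1 > 1$. Since the one-dimensional problem has sublinear growth near zero (indeed $0 \le v_2(y) \le Q_2 y/\delta_2$ by picking $\tau \equiv 0$ and $\tau \equiv \infty$ respectively), I would rule out the divergent solution $y^{\eta_2}$ and make the ansatz $w(y) = D y^{\eta_1}$ on $\{y<y^*\}$, glued to the payoff $Q_2 y/\delta_2 - I$ on $\{y \ge y^*\}$.

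Imposing continuous pasting $D (y^*)^{\eta_1} = Q_2 y^*/\delta_2 - I$ and smooth fit $D \eta_1 (y^*)^{\eta_1 - 1} = Q_2/\delta_2$ at $y^*$ then yields, after a short elimination, the formulas (\ref{ystar}). One checks that the resulting $w$ is $C^1(\mathbb{R}_+)$ and $C^2$ away from $y^*$, convex, and satisfies $w \ge Q_2 y/\delta_2 - I$ on all of $\mathbb{R}_+$ (the last inequality being automatic on the stopping region and obtained on $\{y<y^*\}$ from the convexity of $w$ and the smooth-fit condition).

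Second, I would conclude via the standard verification argument in the spirit of Peskir and Shiryaev \cite{Peskir}. Applying Itô's formula to $e^{-rt} w(Y_t^y)$ up to a localising sequence and using $(\mathcal{L}_Y - r)w \le 0$ everywhere (which is trivially true in the continuation region and is verified by a direct computation on $\{y > y^*\}$ thanks to Assumption \ref{Assr>a}), I obtain that $e^{-rt} w(Y_t^y)$ is a supermartingale. The uniform integrability stated in Remark \ref{Remark Technicalities}, together with the convention (\ref{convention lim e X eins=infty}), allows to pass to the limit under any stopping time $\tau \in \mathcal{T}$ and conclude $w(y) \ge \mathcal{J}_2(y,\tau)$. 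Taking the supremum gives $w \ge v_2$. Equality, and hence the optimality of $\tau_y^*$, then follows from the fact that $e^{-r(t\wedge \tau_y^*)} w(Y_{t\wedge \tau_y^*}^y)$ is a true martingale (since $(\mathcal{L}_Y-r)w = 0$ on the continuation region) and that $w(y^*) = Q_2 y^*/\delta_2 - I$. Since every step is a mirror of the $v_1$-argument, the only point that needs care is checking that $\tau_y^* < \infty$ $\mathbb{P}$-a.s.\ is not required: the supermartingale inequality is valid on $\{\tau_y^* = \infty\}$ as well thanks to (\ref{convention lim e X eins=infty}). There is no real obstacle here — the proof is mechanical and entirely parallel to the preceding one.
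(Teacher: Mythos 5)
Your proposal is correct and follows exactly the route the paper takes: the guess-and-verify construction via the generator $\mathcal{L}_Y$, elimination of the divergent fundamental solution by the sublinear growth of $v_2$, determination of $D$ and $y^*$ through value matching and smooth fit, and a standard verification argument in the spirit of Peskir and Shiryaev, which is precisely what the paper invokes (it only sketches the analogous $v_1$ case and cites \cite{Peskir} for the verification step). The details you add -- checking $(\mathcal{L}_Y - r)w \le 0$ on $\{y > y^*\}$ and handling $\{\tau_y^* = \infty\}$ via the convention (\ref{convention lim e X eins=infty}) -- are the correct ones and present no gap.
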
 
As expected, the optimal thresholds $x^*$ and $y^*$ will be shown in our subsequent analysis to identify the limits as $y \downarrow 0 $ and $x \downarrow 0$, respectively, of the curve triggering the optimal investment rule in \eqref{Value Function}. 

\section{On the Value Function of the Optimal Investment Problem}\label{Section Value Function}
Consistently with the two benchmark problems of last section, we can expect that also for Problem (\ref{Value Function}) it will be optimal to invest when the price processes $X$ and $Y$ are sufficiently large. However, differently to $v_1$ and $v_2$ as in (\ref{OSP one dimensional}), (\ref{Value Function}) defines a two-dimensional optimal stopping problem for which a \textit{guess-and-verify approach} is not feasible. Hence, in the following we will perform a direct study of $V$. After deriving some preliminary results, we move on by defining the associated continuation and stopping regions. The main result is then stated in Theorem \ref{Theorem Probabilistic Repres of v}, where we borrow arguments from De Angelis et al.\,\cite{GF} in order to derive a probabilistic representation of $V$. The proof of the next proposition can be found in Appendix \ref{Appendix Proof of properties of value function}.
\begin{proposition}\label{Proposition Properties of value function}
Recall $V$ from (\ref{Value Function}). There exists a constant $C > 0$ such that for all $(x,y) \in \mathbb{R}_+^2$
\begin{align}\label{F < V < C}
\max \{0, F(x,y) \}  \leq V(x,y) \leq C ( x + y),
\end{align}
and the value function $V$ is nondecreasing with respect to $x$ and $y$. Moreover, $V$ is continuous and convex on $\mathbb{R}_+^2$. 
\end{proposition}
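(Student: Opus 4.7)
The plan is to exploit the affine structure of $F$ together with the linearity of geometric Brownian motion in its initial condition, so that for each fixed stopping time $\tau$ the map $(x,y) \mapsto \mathcal{J}(x,y,\tau)$ becomes an affine function of $(x,y)$ with coefficients that depend only on $\tau$. Once this is observed, convexity and Lipschitz continuity follow essentially for free, and the bounds and monotonicity require only elementary supermartingale arguments.

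For the bounds, the lower estimate $V \geq F$ is obtained by taking $\tau \equiv 0$ in (\ref{Objective J}), while $V \geq 0$ comes from taking $\tau \equiv \infty$ and using the convention introduced in Remark \ref{Remark Technicalities}. For the upper estimate, discard the $-I$ term in $\mathcal{J}(x,y,\tau)$ and invoke optional sampling for the supermartingales $e^{-rt}X^x_t$ and $e^{-rt}Y^y_t$: they are supermartingales because $r>\alpha_1 \vee \alpha_2$ under Assumption \ref{Assr>a}, and the uniform integrability in (\ref{e -rt Xt uniformly integrable}) justifies the application of the optional sampling theorem at every $\tau \in \mathcal{T}$. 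This yields $\mathbb{E}[e^{-r\tau}X^x_\tau] \leq x$ and $\mathbb{E}[e^{-r\tau}Y^y_\tau] \leq y$, hence $V(x,y) \leq (Q_1/\delta_1)x + (Q_2/\delta_2)y \leq C(x+y)$ with $C := \max\{Q_1/\delta_1,\,Q_2/\delta_2\}$. Monotonicity is immediate from the pathwise identities $X^x_t = x\,X^1_t$ and $Y^y_t = y\,Y^1_t$, which show that $\mathcal{J}(\cdot,\cdot,\tau)$ is nondecreasing in both variables (since $Q_i,\delta_i>0$), a property preserved by the supremum.

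For convexity, the pathwise identities above together with the linearity of $F$ in $(x,y)$ give
\begin{equation*}
\mathcal{J}\bigl(\lambda(x_1,y_1)+(1-\lambda)(x_2,y_2),\tau\bigr) = \lambda\,\mathcal{J}(x_1,y_1,\tau) + (1-\lambda)\,\mathcal{J}(x_2,y_2,\tau),
\end{equation*}
so $V$ is a supremum of affine functions, hence convex. For continuity, the same affine structure combined with the supermartingale bounds above yields the uniform Lipschitz estimate
\begin{equation*}
|\mathcal{J}(x,y,\tau) - \mathcal{J}(x',y',\tau)| \leq \frac{Q_1}{\delta_1}|x-x'| + \frac{Q_2}{\delta_2}|y-y'|, \qquad \forall \tau \in \mathcal{T}.
\end{equation*}
Taking the supremum over $\tau$ on both sides (using the elementary fact that a supremum of functions with a common Lipschitz constant is itself Lipschitz with the same constant) gives Lipschitz continuity of $V$ on $\mathbb{R}_+^2$, and in particular continuity up to the axes $\{x=0\}\cup\{y=0\}$.

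The main obstacle in executing this plan is the fact that $V$ is defined on the closed quadrant $\mathbb{R}_+^2$, where convex functions can in principle fail to be continuous on the relative boundary. The Lipschitz route above sidesteps this issue by producing a uniform modulus of continuity valid everywhere, including on the axes and in the limits $\tau\to\infty$; the fact that all such limits behave correctly is precisely what is encoded in Remark \ref{Remark Technicalities} and Assumption \ref{Assr>a}. This uniform boundary behaviour will be crucial for the subsequent definition and analysis of the continuation and stopping regions.
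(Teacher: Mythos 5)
Your proof is correct, and for the bounds, monotonicity and convexity it follows essentially the same route as the paper: $\tau=0$ for the lower bound, optional sampling for the discounted supermartingales $e^{-rt}X^x_t$, $e^{-rt}Y^y_t$ for the upper bound, and the pathwise affine structure $X^x_t = xX^1_t$, $Y^y_t = yY^1_t$ for convexity. Where you genuinely diverge is continuity: the paper runs a two-sided argument with $\epsilon$-optimal stopping times along a sequence $(x_n,y_n)\to(x,y)$, establishing $\liminf_n V(x_n,y_n)\ge V(x,y)-\epsilon$ and $\limsup_n V(x_n,y_n)\le V(x,y)+\epsilon$ via the identities \eqref{Expectation e -rt X mit Ito }, whereas you extract from the same affine structure the uniform estimate $|\mathcal{J}(x,y,\tau)-\mathcal{J}(x',y',\tau)|\le \tfrac{Q_1}{\delta_1}|x-x'|+\tfrac{Q_2}{\delta_2}|y-y'|$ for all $\tau$ and pass it through the supremum. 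Your route is shorter and yields the strictly stronger conclusion that $V$ is globally Lipschitz on $\mathbb{R}_+^2$, which in particular disposes cleanly of the boundary-continuity issue for convex functions on a closed quadrant; the paper's route is more robust in situations where the payoff is not affine in the initial data and no uniform Lipschitz constant is available. Two minor remarks: your choice $\tau\equiv\infty$ for the bound $V\ge 0$ is legitimate under the convention \eqref{e -r tau f(X,Y) on tau=infty}, though note the paper instead uses the hitting time of $\{F>0\}$ to obtain the slightly stronger statement $V>0$; and your Lipschitz estimate implicitly uses $\mathbb{E}[e^{-r\tau}X^1_\tau]\le 1$ and $\mathbb{E}[e^{-r\tau}Y^1_\tau]\le 1$, which is justified by exactly the optional sampling argument you already invoked for the upper bound, so nothing is missing.
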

\subsection*{Continuation and Stopping Regions.}\label{Subsection Continuation and Stopping Region}
As it is customary in optimal stopping, continuation and stopping regions of the optimal stopping problem (\ref{Value Function}) are given by
\begin{align}\label{Continuation and Stopping region}
\mathcal{C} := \{ (x,y) \in \mathbb{R}_{+}^{2}: ~ V(x,y) > F(x,y) \} \quad \quad
\mathcal{S} := \{(x,y) \in \mathbb{R}_{+}^{2}: ~ V(x,y) = F(x,y) \}.
\end{align}  
Notice that, since the value function $V$ and the function $F$ are continuous, the continuation region is open and the stopping region is closed (cf.\,Peskir and Shiryaev \cite{Peskir}, p.\,36).
Moreover, the optimal stopping time is given by the first entry time of the process $(X_{t}^{x} , Y_{t}^{y})$ into the stopping region
\begin{align} \label{Optimal stopping time}
\tau^* = \tau^*(x,y) := \inf \{ t \geq 0 : ~ (X_{t}^{x}, Y_{t}^{y}) \in \mathcal{S} \},
\end{align}
whenever it is $\mathbb{P}$-a.s. finite (cf.\,Peskir and Shiryaev \cite{Peskir}, p.\,46).
\subsection*{Probabilistic Representation of the Value Function.}\label{Subsection Probabilistic representation of the value function}
We now provide a probabilistic representation of the value function $V$ of the stopping problem (\ref{Value Function}). This representation is essential for the  forthcoming characterization of the optimal boundary being the solution to an integral equation. Its technical proof employs an approximation argument as in De Angelis et al.\,\cite{GF} and it is postponed to Appendix \ref{Appendix Proof of probabilistic repr.}.   
\begin{theorem}\label{Theorem Probabilistic Repres of v}
The value function $V$ of the optimal investment problem (\ref{Value Function}) admits the following representation.
\begin{align}\label{Probabilistic representation of v}
V(x,y) = \mathbb{E} \Bigl[ \int_{0}^{\infty} e^{-rt} (Q_{1} X_{t}^{x} + Q_{2} Y_{t}^{y} - rI) \one_{ \{ (X_{t}^{x} , Y_{t}^{y} ) \in \mathcal{S} \}} dt \Bigl].
\end{align}
for all $(x,y) \in \mathbb{R}_{+}^{2}$. 
\end{theorem}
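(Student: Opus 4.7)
The plan is to establish a Dynkin-type identity for $V$ along the diffusion $(X^x, Y^y)$ and then identify the right-hand side of \eqref{Probabilistic representation of v} by computing the infinitesimal generator of $(X, Y)$ applied to $V$. Denote this generator by $\cL := \cL_X + \cL_Y$. The key algebraic ingredient is the observation that, since $F(x,y) = Q_1 x/\delta_1 + Q_2 y/\delta_2 - I$ is affine in $(x,y)$ and $\delta_i = r - \alpha_i$, a direct computation gives
\begin{equation*}
(\cL - r)F(x,y) \;=\; \tfrac{\alpha_1 Q_1}{\delta_1} x + \tfrac{\alpha_2 Q_2}{\delta_2} y - r\!\left(\tfrac{Q_1 x}{\delta_1} + \tfrac{Q_2 y}{\delta_2} - I\right) \;=\; -(Q_1 x + Q_2 y - rI).
\end{equation*}
Consequently the integrand appearing in \eqref{Probabilistic representation of v} is exactly $-(\cL-r)F(X_t^x, Y_t^y)\one_{\{(X_t^x, Y_t^y) \in \cS\}}$, and the theorem will follow if I can prove the Dynkin formula $V(x,y) = -\E[\int_0^\infty e^{-rt}(\cL - r)V(X_t^x, Y_t^y)\,dt]$ in a suitable sense, together with the identification $(\cL - r)V = (\cL - r)F \cdot \one_{\cS}$, using that $V$ satisfies the variational inequality $\max\{(\cL - r)V,\, F - V\} = 0$.

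Because Proposition \ref{Proposition Properties of value function} provides only continuity and convexity of $V$, and not enough regularity for a direct application of It\^o's formula, I would proceed by mollification, in the spirit of De Angelis et al.\,\cite{GF}. Introducing a family of smooth mollifiers $\phi_\epsilon$ and setting $V_\epsilon := V * \phi_\epsilon$, convexity ensures that $V_\epsilon \to V$ locally uniformly as $\epsilon \downarrow 0$, that $\nabla V_\epsilon$ converges a.e. to the a.e.-defined gradient of $V$, and that the distributional second derivatives of $V_\epsilon$ converge to those of $V$ as positive Radon measures. I would then apply It\^o's formula to $t \mapsto e^{-rt}V_\epsilon(X_t^x, Y_t^y)$ stopped at $\tau_n \wedge T$, where $(\tau_n)_n$ is a localising sequence keeping $(X, Y)$ in a compact subset of $(0,\infty)^2$, and take expectations to obtain
\begin{equation*}
\E\bigl[e^{-r(\tau_n \wedge T)} V_\epsilon(X_{\tau_n \wedge T}^x, Y_{\tau_n \wedge T}^y)\bigr] \;=\; V_\epsilon(x,y) + \E\!\left[\int_0^{\tau_n \wedge T} e^{-rt}(\cL - r)V_\epsilon(X_t^x, Y_t^y)\,dt\right].
\end{equation*}
Successively sending $\epsilon \downarrow 0$, $n \to \infty$ and $T \to \infty$ is where the dynamic programming content enters: on $\cC$ the function $V$ is $C^2$ and satisfies $(\cL - r)V = 0$ by standard interior regularity for the obstacle problem, on the interior of $\cS$ one has $V = F$, and the free boundary $\partial \cS$ is Lebesgue-null and carries zero occupation time for the non-degenerate two-dimensional geometric Brownian motion $(X,Y)$. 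The left-hand side of the displayed identity vanishes in the limit thanks to the linear growth bound $V \leq C(x+y)$ from Proposition \ref{Proposition Properties of value function} combined with the uniform integrability in \eqref{e -rt Xt uniformly integrable}, and one arrives at \eqref{Probabilistic representation of v}.

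The main obstacle I anticipate is controlling the limit $\epsilon \downarrow 0$ across the free boundary: convexity only yields that the second distributional derivatives of $V$ are Radon measures, and a priori these could carry a singular part concentrated on $\partial \cS$, which would produce an extra boundary term absent from \eqref{Probabilistic representation of v}. Ruling this out requires either establishing a smooth-fit ($C^1$) property of $V$ across $\partial \cS$, or a probabilistic argument exploiting the absolute continuity of the transition density of $(X, Y)$ to show that the occupation measure of the lower-dimensional set $\partial \cS$ is a.s. zero, so that only the absolutely continuous part of the second derivatives contributes in the limit. Providing this argument rigorously is precisely the technical content of the approximation scheme of De Angelis et al.\,\cite{GF}, which the present proof would adapt to the setting of \eqref{Value Function}.
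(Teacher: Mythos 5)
You have the right target identity --- $(\mathcal{L}-r)F(x,y)=-(Q_1x+Q_2y-rI)$, so that \eqref{Probabilistic representation of v} is formally Dynkin's formula for $V$ combined with $(\mathcal{L}-r)V=(\mathcal{L}-r)F\,\one_{\mathcal{S}}$ --- and you correctly isolate the crux, namely that a possible singular part of $D^2V$ on $\partial\mathcal{C}$ would contaminate the limit. However, your two proposed ways of resolving this crux do not work as stated. First, invoking the smooth-fit ($C^1$) property of $V$ is circular relative to the logical architecture of the paper: here $V\in C^1(\mathbb{R}_+^2)$ is \emph{deduced from} representation \eqref{Probabilistic representation of v} in Proposition \ref{Proposition Value function C1}, so it cannot be an input to the proof of Theorem \ref{Theorem Probabilistic Repres of v} (one could prove smooth fit independently \`a la De Angelis--Peskir, as Remark \ref{rem:approach} notes, but that is a different proof and you would have to supply it). Second, the occupation-time argument is genuinely flawed: the fact that the nondegenerate diffusion $(X,Y)$ spends zero Lebesgue time on the curve $\partial\mathcal{C}$ does \emph{not} imply that the singular part of $D^2V_\epsilon$ contributes nothing in the limit $\epsilon\downarrow 0$. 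The mollified second derivative scales like $\epsilon^{-1}$ on an $\epsilon$-neighbourhood of the curve while the time spent there scales like $\epsilon$, and the product converges to a local-time-on-curve term weighted by the jump of the normal derivative of $V$ (this is exactly the content of the It\^o--Tanaka--Meyer/change-of-variable formulae with local time on surfaces). That term vanishes if and only if smooth fit holds, which brings you back to the circularity above. The remaining content of your proposal defers precisely this point to ``the technical content of the approximation scheme of De Angelis et al.'', so the essential step is not actually supplied.

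It is also worth noting that the paper's proof avoids mollifying $V$ altogether, which is why it does not run into your obstacle. Instead of regularising the (unknown) function $V$, it regularises the \emph{problem}: it localises \eqref{Value Function} to an increasing sequence of bounded smooth domains $Q_n$, defines $V_n(x,y)=\sup_\tau\mathbb{E}[e^{-r(\tau\wedge\sigma_n)}F(X^x_{\tau\wedge\sigma_n},Y^y_{\tau\wedge\sigma_n})]$, and identifies $V_n$ with the unique $\mathcal{W}^{2,p}(Q_n)$ solution of the obstacle problem on $Q_n$ supplied by elliptic PDE theory (Friedman), the coefficients being bounded on $\bar Q_n$. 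Dynkin's formula is then justified for $V_n$ by approximating it in $\mathcal{W}^{2,p}$ with smooth functions (Meyers--Serrin) and passing to the limit via H\"older's inequality against the transition densities (Lemma \ref{Lemma Densities}); since $(\mathcal{L}-r)V_n=(\mathcal{L}-r)F\,\one_{\mathcal{S}_n}$ a.e., one gets the representation for $V_n$ with no boundary local-time issue because the $\mathcal{W}^{2,p}$ (hence $C^1$) regularity of $V_n$ is already guaranteed. Finally $V_n\uparrow V$, $\sigma_n\uparrow\infty$, $\mathcal{S}_n\downarrow\mathcal{S}$, and Vitali/dominated convergence give \eqref{Probabilistic representation of v}. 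If you want to salvage your mollification route, you would need to first establish smooth fit by an independent argument (e.g.\ probabilistic regularity of the boundary points as in De Angelis--Peskir) and then use a change-of-variable formula with local time on curves; as written, your proof has a gap at exactly the step you flagged.
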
 
It is worth anticipating already here that representation \eqref{Probabilistic representation of v} will be employed in Proposition \ref{Proposition Value function C1} in order to prove that actually $V\in C^1(\mathbb{R}^2_+)$, and in Theorem \ref{Theorem Integral Equation} in order to determine a nonlinear equation that uniquely characterizes the free boundary triggering the optimal investment time.

\begin{remark}\label{Remark Martingale}
Let $H(x,y) := (Q_{1} x + Q_{2} y - rI) \one_{\{ (x,y) \in \mathcal{S} \}}$. The expression (\ref{Probabilistic representation of v}) can thus be formulated as 
\begin{align}\label{value function prob with H in Theorem}
V(x,y) = \mathbb{E} \Bigl[ \int_{0}^{\infty} e^{-rt} H(X_{t}^{x} , Y_{t}^{y}) dt \Bigl].
\end{align}
Notice that $\vert H(x,y) \vert \leq Q_1 x + Q_2 y - rI$. Upon using Assumption \ref{Assr>a}, the strong Markov property and standard arguments on conditional expectation, we have
\begin{align}\label{V and H identity}
\mathbb{E}\Big[ \int_0^\infty e^{-rt} H(X_t^x , Y_t^y ) dt \Big\vert \mathcal{F}_\tau \Big] = \int_0^\tau e^{-rs} H(X_s^x , Y_s^y) ds + e^{-r \tau} V(X_\tau^x , Y_\tau^y ).
\end{align}
Consequently, the process   
\begin{align}\label{V and H martingale}
\big\{ e^{-rt} V(X_t^x , Y_t^y ) + \int_0^t e^{-rs} H (X_s^x , Y_s^y)ds , \, t \geq 0 \big\}
\end{align}
is an $(\mathcal{F}_t)$-martingale. Furthermore, equation (\ref{V and H identity}) implies
\begin{align}
\vert e^{-rt} V(X_t^x , Y_t^y ) \vert  \leq  \mathbb{E} \Big[ \int_0^\infty e^{-rt} H(X_t^x , Y_t^y ) dt \Big\vert \mathcal{F}_\tau \Big]
\end{align}
and it follows that the family $\{ e^{-rt} V(X_\tau^x , Y_\tau^y ), \tau \in \mathcal{T} \}$ is uniformly integrable. 
\end{remark} 

\begin{remark}
\label{rem:XY}
Notice that the results of Theorem \ref{Theorem Probabilistic Repres of v} can be generalized to the case in which $X$ and $Y$ are general one-dimensional It$\hat{\text{o}}$-diffusions. Indeed, the arguments of the proof in Appendix \ref{Appendix Proof of probabilistic repr.} do not actually hinge on the particular form of the price processes. 
\end{remark}

\section{On The Optimal Boundary} 
In this section, we study the optimal price level triggering the investment in Problem (\ref{Value Function}). Some of the subsequent results have already been derived by Compernolle et al.\,\cite{Dutch}, Theorem 1, and we repeat them briefly for the sake of completeness. The main novel result is then stated in Theorem \ref{Theorem Integral Equation}, where we characterize the optimal trigger as the unique solution to a nonlinear integral equation in a certain functional class.   \\
Define
\begin{align}\label{Definition b(x)}
b(x) := \sup \{ y \in \mathbb{R}_{+}: ~ V(x,y) > F(x,y) \}, \quad x \in \mathbb{R}_{+},
\end{align}
with the convention $\sup \emptyset = 0 $. We state the following proposition. 
\begin{proposition}\label{Proposition rewrite C and S}
The continuation region and stopping region of (\ref{Continuation and Stopping region}) can be written as
\begin{align}\label{Rewrite Continuation and Stopping region}
\mathcal{C} = \{ (x,y) \in \mathbb{R}_{+}^{2}: ~ y < b(x) \}, \quad \quad \mathcal{S}= \{ (x,y) \in \mathbb{R}_{+}^{2}: ~ y \geq b(x) \}
\end{align}
\end{proposition}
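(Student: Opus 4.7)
The plan is to establish that for each fixed $x \in \mathbb{R}_+$ the map $y \mapsto V(x,y) - F(x,y)$ is nonincreasing; once this monotonicity is in hand the characterization in terms of $b(x)$ follows from the definition of $b$ together with the openness of $\mathcal{C}$ and closedness of $\mathcal{S}$ already noted after \eqref{Continuation and Stopping region}. So the bulk of the work is the monotonicity statement.

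To obtain it, I would write out, for an arbitrary $\tau \in \mathcal{T}$,
\[
\mathcal{J}(x,y,\tau) - F(x,y) = \frac{Q_1}{\delta_1}\bigl(\mathbb{E}[e^{-r\tau}X_\tau^x] - x\bigr) + \frac{Q_2}{\delta_2}\bigl(\mathbb{E}[e^{-r\tau}Y_\tau^y] - y\bigr) - I\bigl(\mathbb{E}[e^{-r\tau}] - 1\bigr),
\]
and exploit the linearity $\mathbb{E}[e^{-r\tau}Y_\tau^y] = y\,\mathbb{E}[e^{-r\tau}Y_\tau^1]$ inherent to the geometric Brownian motion $Y$. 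Under Assumption \ref{Assr>a} the process $\{e^{-rt}Y_t^1\}_{t\geq 0}$ is a nonnegative supermartingale, and the uniform integrability recorded in Remark \ref{Remark Technicalities} allows optional sampling, giving $\mathbb{E}[e^{-r\tau}Y_\tau^1] \leq 1$. Hence the $y$-dependent middle term is nonpositive, so $y \mapsto \mathcal{J}(x,y,\tau) - F(x,y)$ is affine with nonpositive slope, and passing to the supremum over $\tau$ preserves monotonicity: $y \mapsto V(x,y) - F(x,y)$ is nonincreasing.

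From this monotonicity, upward-closedness of $\mathcal{S}$ in the $y$-direction follows immediately: if $(x,y_0) \in \mathcal{S}$ and $y \geq y_0$, then the inequality $0 \leq V(x,y) - F(x,y) \leq V(x,y_0) - F(x,y_0) = 0$ (using $V \geq F$ from Proposition \ref{Proposition Properties of value function}) forces $(x,y) \in \mathcal{S}$. Equivalently, $\mathcal{C}$ is downward-closed in $y$. To finish: if $y < b(x)$ then by definition of the supremum there exists $y' \in (y, b(x)]$ with $(x,y') \in \mathcal{C}$, and downward-closedness yields $(x,y) \in \mathcal{C}$; conversely, if $(x,y) \in \mathcal{C}$ then by openness of $\mathcal{C}$ some neighborhood of $(x,y)$ stays in $\mathcal{C}$, so $b(x) > y$. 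The representation of $\mathcal{S}$ follows by taking complements in $\mathbb{R}_+^2$.

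No genuine obstacle is anticipated; the only delicate point is justifying optional sampling for $\{e^{-rt}Y_t^1\}$ at an arbitrary stopping time, and this is exactly what the uniform integrability in \eqref{e -rt Xt uniformly integrable} delivers. The rest is set-theoretic manipulation of $b$.
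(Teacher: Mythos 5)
Your argument is correct and rests on the same mechanism as the paper's proof: the supermartingale property of $\{e^{-rt}Y_t^1\}_{t\geq 0}$ (hence $\mathbb{E}[e^{-r\tau}Y_\tau^1]\leq 1$) combined with the affine dependence of $F$ on $y$ yields that $\mathcal{C}$ is down-connected in the $y$-direction. The only cosmetic difference is that the paper perturbs the optimal stopping time $\tau^*(x,y)$ directly, whereas you work with an arbitrary $\tau$ and pass to the supremum; both routes deliver the same monotonicity of $y\mapsto V(x,y)-F(x,y)$, and your closing set-theoretic step is the part the paper leaves implicit.
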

\begin{proof}
It is sufficient to prove that the continuation region is down-connected. Take $(x,y) \in \mathcal{C}$ and $\tau^* (x,y)$ of (\ref{Optimal stopping time}). We then have
\begin{align*}
V(x,y) = \sup_{\tau \in \mathcal{T}} \mathbb{E} [ e^{-r \tau} F(X_{\tau}^{x} , Y_{\tau}^{y}) ] = \mathbb{E} [ e^{-r \tau^*} F(X_{\tau^*}^{x} , Y_{\tau^*}^{y})] > F(x,y)
\end{align*}
Let $\epsilon \in (0,y]$ and notice that $\tau^*(x,y)$ is a-priori suboptimal for the stopping problem with value function $V(x, y -\epsilon)$. It follows that
\begin{align*}
V(x, y- \epsilon) 
&\geq \mathbb{E} [ e^{-r \tau^*} F(X_{\tau^*}^{x} , Y_{\tau^*}^{y - \epsilon})] 
= \mathbb{E} [ e^{-r \tau^*} F(X_{\tau^*}^{x} , Y_{\tau^*}^{y})]
 - \frac{Q_{2} \epsilon}{\delta_{2}}\mathbb{E} [ e^{-r \tau^*} Y_{\tau^*}^{1} ] 
\\
&\geq V(x,y) - \frac{Q_{2} \epsilon}{\delta_{2}} 
> F(x,y) - \frac{Q_{2} \epsilon}{\delta_{2}} 
 = F(x, y - \epsilon)
\end{align*}
where we used that $\{ e^{-rt} Y_{t}^{y}, ~ t \geq 0 \}$ is a supermartingale due to (\ref{e -rt Xt uniformly integrable}) and Assumption \ref{Assr>a}. Hence, $(x, y- \epsilon) \in \mathcal{C}$ for every $\epsilon \in (0,y]$, which concludes our proof. 
\end{proof}
The next proposition states some preliminary results of the 
boundary (\ref{Definition b(x)}).
\begin{proposition}\label{Proposition b nonincreasing and rightcontinuous}
The function $b$ of (\ref{Definition b(x)}) inherits the following properties. 
\begin{enumerate}
\item[(i)] $x \mapsto b(x)$ is nonincreasing on $\mathbb{R}_{+}$, 
\item[(ii)] $x \mapsto b(x)$ is right continuous on $\mathbb{R}_{+}$. 
\end{enumerate} 
\end{proposition}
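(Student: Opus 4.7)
The plan is to prove the two items separately. For (i), I will mimic the suboptimality-test argument used in the proof of Proposition \ref{Proposition rewrite C and S}, but now applied to the $x$-coordinate rather than the $y$-coordinate. For (ii), I will combine the monotonicity from (i) with the closedness of $\mathcal{S}$ — equivalently, the continuity of $V$ and $F$ from Proposition \ref{Proposition Properties of value function} — via a standard monotone-boundary argument.

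For (i), it is enough to show that $\mathcal{C}$ is left-connected along horizontal slices: if $(x, y) \in \mathcal{C}$, then $(x - \epsilon, y) \in \mathcal{C}$ for every $\epsilon \in (0, x]$, which is equivalent to $b(x - \epsilon) \geq b(x)$. I would fix $(x,y) \in \mathcal{C}$ and let $\tau^* = \tau^*(x,y)$ from \eqref{Optimal stopping time}, which is a-priori suboptimal for $V(x - \epsilon, y)$. Exploiting the linearity of $F$ in its first argument together with $X_t^{x-\epsilon} = X_t^x - \epsilon X_t^1$, I would derive
\begin{align*}
V(x - \epsilon, y) &\geq \mathbb{E}\left[ e^{-r \tau^*} F\left(X_{\tau^*}^{x - \epsilon}, Y_{\tau^*}^y\right) \right] = V(x,y) - \frac{Q_1 \epsilon}{\delta_1}\, \mathbb{E}\left[ e^{-r \tau^*} X_{\tau^*}^1 \right] \\
&\geq V(x,y) - \frac{Q_1 \epsilon}{\delta_1} > F(x,y) - \frac{Q_1 \epsilon}{\delta_1} = F(x - \epsilon, y),
\end{align*}
where the second inequality would use that $\{e^{-rt} X^1_t\}_{t \geq 0}$ is a supermartingale — a consequence of Assumption \ref{Assr>a} together with the uniform integrability recalled in Remark \ref{Remark Technicalities} — so that $\mathbb{E}[e^{-r\tau^*} X^1_{\tau^*}] \leq 1$, and the strict inequality would follow from $V(x,y) > F(x,y)$.

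For (ii), I would note that the monotonicity from (i) gives the existence of the right limit $b(x+) := \lim_{\epsilon \downarrow 0} b(x + \epsilon) \leq b(x)$. Assuming for contradiction that $b(x+) < b(x)$, I would pick $y \in (b(x+), b(x))$, so $(x, y) \in \mathcal{C}$. Since $b$ is nonincreasing, $b(x + \epsilon) \leq b(x+) < y$ for every $\epsilon > 0$, hence $(x + \epsilon, y) \in \mathcal{S}$ and $V(x+\epsilon, y) = F(x+\epsilon, y)$. Passing to the limit $\epsilon \downarrow 0$ and invoking the continuity of $V$ and $F$ on $\mathbb{R}^2_+$ (Proposition \ref{Proposition Properties of value function}), I would obtain $V(x,y) = F(x,y)$, contradicting $(x,y) \in \mathcal{C}$.

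The only genuinely non-cosmetic step is the supermartingale bound $\mathbb{E}[e^{-r\tau^*} X^1_{\tau^*}] \leq 1$ in (i); once that is in hand, both parts amount to a clean transposition of already-used arguments to the $x$-axis and a routine monotone-boundary argument, so no serious obstacle is expected.
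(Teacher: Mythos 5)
Your proposal is correct and follows essentially the same route as the paper: part (i) is exactly the paper's suggestion of repeating the down-connectedness argument of Proposition \ref{Proposition rewrite C and S} with the roles of $x$ and $y$ reversed (including the supermartingale bound $\mathbb{E}[e^{-r\tau^*}X^1_{\tau^*}]\leq 1$ from Assumption \ref{Assr>a}), and part (ii) is the standard ``nonincreasing plus closed stopping region'' argument, which is just an unpacked version of the paper's appeal to lower semicontinuity of $b$ combined with monotonicity.
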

\begin{proof}
\textit{(i)} The proof follows in the same spirit as the proof of Proposition \ref{Proposition rewrite C and S}, with the roles of $x$ and $y$ reversed. \\
\hspace*{0.3cm} \textit{(ii)} The functions $V$ and $F$ are continuous on $\mathbb{R}_{+}^{2}$, consequently $b$ is lower-semicontinuous. Since it is nonincreasing by point \textit{(i)},  the claim follows. 
\end{proof}
The results stated in Propositions \ref{Proposition rewrite C and S} and \ref{Proposition b nonincreasing and rightcontinuous} guarantee that the continuation region $\mathcal{C}$ and the stopping region $\mathcal{S}$ are connected. Moreover, we can rewrite the optimal stopping time (\ref{Optimal stopping time}) due to (\ref{Rewrite Continuation and Stopping region}) and obtain
\begin{align}\label{Optimal stopping time with boundary}
\tau^{*} = \tau^{*} (x,y) := \inf \{ t \geq 0: ~ Y_{t}^{y} \geq b(X_{t}^{x}) \}
\end{align}
for any $(x,y) \in \mathbb{R}_{+}^{2}$.
Furthermore, the probabilistic representation (\ref{Probabilistic representation of v}) rewrites as
\begin{align}\label{Value function Probabilistic new}
V(x,y) = \mathbb{E} \Bigl[ \int_{0}^{\infty} e^{-rt} \Bigl( Q_{1} X_{t}^{x} + Q_{2} Y_{t}^{y} - rI \Bigl) \one_{\{  Y_{t}^{y} \geq b(X_{t}^{x})) \}} dt \Bigl].
\end{align} 
for any $(x,y) \in \mathbb{R}_{+}^2$.
In the next step, we prove $V \in C^{1}(\mathbb{R}_{+}^{2})$. As a by-product, we obtain the well known \textit{smooth-fit} condition across the free boundary, which states the continuity of $V_{x}$ as well as $V_{y}$ at $\partial \mathcal{C}$. To this end, it is important to bear in mind the following well known fact.
\begin{lemma}\label{Lemma Densities}
The processes $X^{x}$ and $Y^{y}$ are given by two independent geometric Brownian motions, hence they have a log-normal distribution with transition densities  
\begin{align*}
\rho_{1} (t,x, \psi) &= \frac{1}{\sigma_{1} \psi \sqrt{2 \pi t}} \exp \Bigl( - \frac{(\log \psi - \log x - (\alpha_{1} - \frac{1}{2} \sigma_{1}^{2}) t)^2}{2 \sigma_{1}^{2} t} \Bigl)\\
\rho_{2} (t,y, \eta) &= \frac{1}{\sigma_{2} \eta \sqrt{2 \pi t}} \, \exp \Bigl( - \frac{(\log \eta - \log y - ( \alpha_{2} - \frac{1}{2} \sigma_{2}^{2} )t)^2 }{2 \sigma_{2}^{2} t} \Bigl)
\end{align*}
for every $(t,x,\psi), (t, y, \eta) \in (0, \infty) \times \mathbb{R}_+ \times \mathbb{R}_+$. Moreover, 
\begin{enumerate}
\item[i)] $(t, \zeta, \xi) \mapsto \rho_{i} (t, \zeta, \xi)$ is continuous on $(0, \infty) \times \mathbb{R}_+ \times \mathbb{R}_+$ for i=1,2;
\item[ii)] Let $ \mathcal{K} \subset \mathbb{R}_{+}^2$ be a compact set. Then there exists some $q >1$, which is possibly depending on $\mathcal{K}$, such that
\begin{align*}
\int_{0}^{\infty} e^{-rt} \Bigl( \int_{\mathcal{K}} \mid \rho_{1} (t,x, \psi) \rho_{2} (t,y, \eta) \mid^q d\psi d \eta \Bigl)^{1/q} dt  < + \infty
\end{align*}
for all $(x,y) \in \mathcal{K}$.
\end{enumerate}
\end{lemma}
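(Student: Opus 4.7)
The plan is to treat the three sub-claims separately, with the real technical content concentrated in part (ii).

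\textbf{Density formula and part (i).} For the log-normal formula, I apply It\^o's formula to $\log X_t^x$, which gives $\log X_t^x = \log x + (\alpha_1 - \sigma_1^2/2)\,t + \sigma_1 W_t^X$, so that $\log X_t^x$ is normally distributed with mean $\log x + (\alpha_1 - \sigma_1^2/2)t$ and variance $\sigma_1^2 t$. The stated expression for $\rho_1$ then follows by the change of variables $\psi = e^u$, the argument for $\rho_2$ is identical, and independence of $X^x$ and $Y^y$ is inherited from the independence of the two components $W^X, W^Y$ of the driving $2$-dimensional Brownian motion. For part (i), on $(0,\infty)\times(0,\infty)\times(0,\infty)$ the function $\rho_i$ is a composition of $\log$, $\exp$, products and quotients with denominators bounded away from zero, hence continuous; at boundary points where $\zeta$ or $\xi$ vanishes, the squared logarithm inside the exponential diverges to $+\infty$ strictly faster than the multiplicative factor $1/\xi$ blows up, so $\rho_i$ extends continuously by $0$ away from the exceptional corner where both $\zeta,\xi\to 0$.

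\textbf{Exact $L^q$ computation for part (ii).} The core step is to compute $\int_0^\infty \rho_1(t,x,\psi)^q\, d\psi$ explicitly. Substituting $u = \log \psi$ turns this into a Gaussian integral; after completing the square one obtains
\begin{equation*}
\int_0^\infty \rho_1(t,x,\psi)^q\, d\psi \;=\; \frac{x^{1-q}}{\sigma_1^{q-1}\sqrt{q}\,(2\pi t)^{(q-1)/2}}\, \exp\!\bigl(\gamma_1(q)\, t\bigr),
\end{equation*}
where $\gamma_1(q) := (q-1)\bigl[\tfrac{(2q-1)\sigma_1^2}{2q}-\alpha_1\bigr]$ and $\gamma_1(q)\to 0$ as $q\downarrow 1$; an analogous identity with a constant $\gamma_2(q)$ of the same form holds for $\rho_2$. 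Fubini together with $\mathcal{K}\subset\mathbb{R}_+^2$ then yields
\begin{equation*}
\Bigl(\int_{\mathcal{K}} |\rho_1(t,x,\psi)\rho_2(t,y,\eta)|^q\, d\psi\, d\eta\Bigr)^{1/q} \;\leq\; C(x,y,q)\, t^{-(q-1)/q}\, \exp\!\bigl((\gamma_1(q)+\gamma_2(q))\,t/q\bigr),
\end{equation*}
with a constant $C(x,y,q)$ that is finite for every $x,y>0$.

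\textbf{Integrability and choice of $q$.} I then integrate this bound against $e^{-rt}$. Near $t=0$, since $(q-1)/q \in (0,1)$ for any $q>1$, the factor $t^{-(q-1)/q}$ is locally integrable and the remaining terms are bounded, so the contribution of $t\in(0,1]$ is finite. Near $t=\infty$, the algebraic factor is harmless and it suffices that $r>(\gamma_1(q)+\gamma_2(q))/q$; because $\gamma_i(q)\to 0$ as $q\downarrow 1$ while $r>0$ is fixed, this holds for every $q$ sufficiently close to $1$ (with a choice that depends only on $r,\alpha_i,\sigma_i$, hence only on $\mathcal{K}$ through the problem data). For such $q$ the full double integral is finite for every $(x,y)\in\mathcal{K}$ with $x,y>0$; the degenerate cases $x=0$ or $y=0$ make $\rho_1$ or $\rho_2$ identically zero and are therefore trivial.

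\textbf{Main obstacle.} The decisive point is the sharp small-time behaviour: the crude pointwise bound $\rho_i(t,x,\psi)\leq (\sigma_i\psi\sqrt{2\pi t})^{-1}$ would yield only a non-integrable $t^{-1/2}$ singularity at the origin. One must instead exploit the Gaussian structure to obtain the improved $L^q$-scaling $t^{-(q-1)/(2q)}$, whose exponent is strictly less than $1$; balancing this against the large-$t$ exponential growth by taking $q$ close to $1$ is then routine.
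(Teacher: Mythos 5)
Your proof is correct. Note first that the paper itself offers no proof of this lemma: it is introduced as a ``well known fact'' and stated without argument, so your write-up actually supplies a justification the paper omits. The substance is in part (ii), and your Gaussian computation checks out: completing the square in $u=\log\psi$ does give
\begin{equation*}
\int_0^\infty \rho_1(t,x,\psi)^q\,d\psi=\frac{x^{1-q}}{\sigma_1^{q-1}\sqrt{q}\,(2\pi t)^{(q-1)/2}}\exp\bigl(\gamma_1(q)t\bigr),\qquad \gamma_1(q)=(q-1)\Bigl[\tfrac{(2q-1)\sigma_1^2}{2q}-\alpha_1\Bigr],
\end{equation*}
and the two limiting regimes are handled correctly: near $t=0$ the product of the two densities contributes $t^{-(q-1)/q}$ after the $1/q$-th power, which is locally integrable since $(q-1)/q<1$, and near $t=\infty$ the condition $r>(\gamma_1(q)+\gamma_2(q))/q$ is achievable for $q$ close to $1$ because $\gamma_i(q)\to 0$ there (your choice of $q$ in fact depends only on the model parameters, not on $\mathcal{K}$, which is stronger than required). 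The treatment of the axes ($x=0$ or $y=0$, where the formal density vanishes and the claim is trivial) is also the right way to reconcile the statement with the fact that the constant $C(x,y,q)\propto (xy)^{-(q-1)/q}$ is not uniform up to the boundary. One cosmetic slip: in your closing paragraph you quote the single-density scaling $t^{-(q-1)/(2q)}$ where your own displayed bound (correctly) has $t^{-(q-1)/q}$ for the product of the two densities; both exponents are below $1$, so nothing breaks, but the two statements should be made consistent.
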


\begin{proposition}\label{Proposition Value function C1}
The value function of (\ref{Value Function}) is such that $V \in C^{1}(\mathbb{R}_{+}^{2})$. 
\end{proposition}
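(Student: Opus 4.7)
The plan is to exploit the probabilistic representation of Theorem \ref{Theorem Probabilistic Repres of v} together with the smoothness of the log-normal transition densities of Lemma \ref{Lemma Densities}. With $H(\psi,\eta):=(Q_1\psi+Q_2\eta-rI)\one_{\{(\psi,\eta)\in\mathcal{S}\}}$ and using independence of $X^x$ and $Y^y$, representation (\ref{Probabilistic representation of v}) takes the form
\begin{align*}
V(x,y) = \int_0^\infty e^{-rt}\int_{\R_+^2} H(\psi,\eta)\,\rho_1(t,x,\psi)\,\rho_2(t,y,\eta)\,d\psi\,d\eta\,dt.
\end{align*}
Although $H$ is only bounded and has a jump across $\partial\mathcal{C}$, each $\rho_i(t,\cdot,\cdot)$ is $C^\infty$ for $t>0$, so all regularity of $V$ will be inherited from the kernel rather than from $H$.

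I would then formally differentiate under both integrals and justify the exchange by dominated convergence. Direct differentiation of the log-normal density yields
\begin{align*}
\partial_x\rho_1(t,x,\psi) = \rho_1(t,x,\psi)\cdot\frac{\log\psi-\log x-(\alpha_1-\sigma_1^2/2)t}{x\sigma_1^2 t},
\end{align*}
and, recognising $\log X_t^x-\log x-(\alpha_1-\sigma_1^2/2)t=\sigma_1 W_t^X$, the inner integral satisfies
\begin{align*}
\partial_x \mathbb{E}[H(X_t^x,Y_t^y)] = \frac{1}{x\sigma_1 t}\,\mathbb{E}\bigl[H(X_t^x,Y_t^y)\,W_t^X\bigr].
\end{align*}
Applying H\"older's inequality with a conjugate pair $(p,q)$, with $q>1$ taken small enough that Assumption \ref{Assr>a} still gives $r>\alpha_i+(q-1)\sigma_i^2/2$ for $i=1,2$, and using the moment identities $\|W_t^X\|_{L^p}=C_p\sqrt{t}$ and $\|X_t^x\|_{L^q}=x\,e^{\alpha_1 t+(q-1)\sigma_1^2 t/2}$, one obtains, for every compact $K\subset(0,\infty)^2$ and $(x,y)\in K$,
\begin{align*}
e^{-rt}\bigl|\partial_x\mathbb{E}[H(X_t^x,Y_t^y)]\bigr| \leq \frac{C(K)}{\sqrt{t}}\bigl(e^{-(r-\alpha_1-\varepsilon_q)t}+e^{-(r-\alpha_2-\varepsilon_q)t}+e^{-rt}\bigr),
\end{align*}
which is integrable in $t\in(0,\infty)$ and locally uniform in $(x,y)$.

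This domination legitimises interchanging $\partial_x$ with both integrals and yields an explicit formula for $\partial_x V$ on $(0,\infty)^2$. A second application of dominated convergence, combined with the joint continuity of $\rho_i$ from Lemma \ref{Lemma Densities}(i), then gives continuity of $\partial_x V$; the argument for $\partial_y V$ is entirely symmetric. Values on the axes are covered by the identifications $V(x,0)=v_1(x)$ and $V(0,y)=v_2(y)$ (which are $C^1$ by the benchmark analysis of Section \ref{Section Benchmark problem}), and matching with the interior derivatives follows by passing to the limit in the dominated integrand. The main technical obstacle is the $1/(xt)$ blow-up of $\partial_x\rho_1/\rho_1$ as $t\downarrow 0$: it is defused precisely by the $\sqrt{t}$-scaling of the Brownian increments and by the strict slack $r>\alpha_1\vee\alpha_2$, which leaves room to pick $q>1$ so that the exponential growth of $\|H(X_t^x,Y_t^y)\|_{L^q}$ is still beaten by $e^{-rt}$.
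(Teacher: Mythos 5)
Your proposal is correct and follows essentially the same route as the paper: both start from the probabilistic representation of Theorem \ref{Theorem Probabilistic Repres of v}, rewrite $V$ via the log-normal transition densities of Lemma \ref{Lemma Densities}, and then differentiate under the integral sign, justifying the exchange and the continuity of $\partial_x V$, $\partial_y V$ by dominated convergence. The only difference is one of detail: where the paper appeals to Lemma \ref{Lemma Densities}(ii) and ``standard dominated convergence arguments,'' you make the domination explicit through the score-function identity $\partial_x\mathbb{E}[H(X_t^x,Y_t^y)]=\mathbb{E}[H(X_t^x,Y_t^y)W_t^X]/(x\sigma_1 t)$ and a H\"older estimate with $q>1$ close to $1$, correctly identifying that the $1/\sqrt{t}$ singularity at $t=0$ and the slack $r>\alpha_1\vee\alpha_2$ at $t=\infty$ are what make the dominating function integrable.
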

\begin{proof}
We can rewrite (\ref{Probabilistic representation of v}) with Lemma \ref{Lemma Densities} and obtain 
\begin{align*}
V(x,y) &= \int_{0}^{\infty} e^{-rt}  \int_{0}^{\infty } \rho_{1} (t,x, \psi) \int_{b(\psi)}^{\infty} (Q_{1} \psi + Q_{2} \eta - rI) \rho_{2} (t,y, \eta)  d \eta \, d \psi  \, dt.
\end{align*}
Due to (\ref{e -rt Xt uniformly integrable}) and Lemma \ref{Lemma Densities}, we are able to take derivates with respect to $x$ and $y$. Standard dominated convergence arguments then show that $\partial_{x} V(x,y)$ as well as $\partial_{y} V(x,y)$ are continuous for all $(x,y) \in \mathbb{R}_{+}^2$. 
\end{proof}

\begin{remark}
\label{rem:approach}
As it has become clear from the analysis developed so far, our approach employs an approximation procedure and PDE results in order to obtain representation \eqref{Probabilistic representation of v}, which then allows to show that actually $V \in C^{1}(\mathbb{R}_{+}^{2})$ (cf.\,Proposition \ref{Proposition Value function C1}). Such an approach is quite flexible since the general results from PDE theory that we employ do actually hold for more general dynamics and payoff functions (see also Remark \ref{rem:XY}).\\
However, it is worth noticing that also other solution methods can be followed as well. For example, one could rely on the more probabilistic approach \`a la Peskir and Shiryaev \cite{Peskir} and the recent results on the smooth-fit principle by De Angelis and Peskir \cite{DeAngelisPeskir} in order to show that: (i) the value function $V$ is convex and  continuous, stopping and continuation regions are not empty, and the optimal stopping time is given by the first entry time of $(X,Y)$ into the stopping region; (ii) stopping and continuation regions are separated by a right-continuous decreasing boundary $b$, and $V$ solves its corresponding free boundary problem, in particular being a classical solution to a linear elliptic PDE in the interior of the continuation region; (iii) the points of the boundary $\partial \mathcal{C}$ are probabilistically regular (in the sense that the process $(X,Y)$ immediately enters the stopping region, as soon as it hits its boundary), which in turn guarantees the validity of the smooth-fit; i.e.\ $V \in C^{1}(\mathbb{R}_{+}^{2})$; (iv) the free boundary $b$ is continuous, e.g.\ by Peskir \cite{PeskirC}; (v) finally, by using the convexity of $V$ and arguing as in Proposition \ref{Proposition Variational Inequality}, we could prove that $V$ has in fact the needed regularity to apply (a weak version of) Dynkin's formula and derive \eqref{Probabilistic representation of v}. The latter in turn gives the integral equation for $b$.
\end{remark}


\subsection*{Continuity of the Optimal Boundary.}
\label{Subsection Continuity of the Free Boundary}

In order to derive the continuity of the optimal boundary over the whole state-space, it would be sufficient to prove the left-continuity, as we already established the right-continuity of $b$ in Proposition \ref{Proposition b nonincreasing and rightcontinuous}. Nevertheless, we follow the arguments of Compernolle et al.\,\cite{Dutch} relying on the convexity of the optimal boundary. 

\begin{proposition}\label{Proposition S convex}
The stopping region $\mathcal{S}$ of (\ref{Continuation and Stopping region}),
 (equivalently of (\ref{Rewrite Continuation and Stopping region})) is convex on $\mathbb{R}_{+}^{2}$. 
\end{proposition}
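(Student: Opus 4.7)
The plan is to exploit the convexity of the value function $V$ (already established in Proposition \ref{Proposition Properties of value function}) together with the linearity of the payoff function $F$. Recall that $F(x,y) = \frac{Q_1 x}{\delta_1} + \frac{Q_2 y}{\delta_2} - I$ is affine in $(x,y)$, hence in particular convex, and that by definition of the stopping region we always have $V \geq F$ on $\mathbb{R}_+^2$, with equality precisely on $\mathcal{S}$.

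The concrete step is as follows. Fix two points $(x_1, y_1), (x_2, y_2) \in \mathcal{S}$ and a scalar $\lambda \in [0,1]$, and set $(x_\lambda, y_\lambda) := \lambda(x_1, y_1) + (1-\lambda)(x_2, y_2)$. By convexity of $V$ we have
\begin{align*}
V(x_\lambda, y_\lambda) \leq \lambda V(x_1, y_1) + (1-\lambda) V(x_2, y_2) = \lambda F(x_1, y_1) + (1-\lambda) F(x_2, y_2),
\end{align*}
where the equality uses that $(x_i, y_i) \in \mathcal{S}$ for $i=1,2$. By linearity of $F$, the right-hand side equals $F(x_\lambda, y_\lambda)$. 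Combining with the reverse inequality $V \geq F$ yields $V(x_\lambda, y_\lambda) = F(x_\lambda, y_\lambda)$, i.e.\ $(x_\lambda, y_\lambda) \in \mathcal{S}$, which proves convexity.

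There is essentially no obstacle here: the entire argument rests on the two ingredients already in hand, namely the convexity of $V$ and the affine structure of $F$. The only subtlety worth mentioning is that the argument works on the closed orthant $\mathbb{R}_+^2$ (so convex combinations stay in the domain), and that convexity of $\mathcal{S}$ in turn implies the convexity of the complementary epigraph-type set characterizing $\mathcal{C}$, which is the property actually used downstream to deduce convexity (and hence continuity) of the boundary $x \mapsto b(x)$.
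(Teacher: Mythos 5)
Your proof is correct and is essentially the paper's argument: both rest on the convexity of $V$ from Proposition \ref{Proposition Properties of value function} together with the affine structure of $F$, the only cosmetic difference being that the paper phrases it as a proof by contradiction while you argue directly via the sandwich $F(x_\lambda,y_\lambda) \leq V(x_\lambda,y_\lambda) \leq \lambda V(x_1,y_1)+(1-\lambda)V(x_2,y_2) = F(x_\lambda,y_\lambda)$.
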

\begin{proof}
Assume there exist $(x_{1} , y_{1}), (x_{2}, y_{2}) \in \mathcal{S}$ and  $\lambda \in (0,1)$ such that $
(x,y)  :=  \lambda (x_{1} , y_{1})  +  (1-\lambda)(x_{2} , y_{2})  \in \mathcal{C}$. Thus we must have $V(x,y) > F(x,y)$ as well as $V(x_{i}, y_{i}) = F(x_{i}, y_{i})$ for $i =1,2$. It follows that
\begin{align*}
V(x,y)  >  F(x,y)  
=  \lambda F(x_{1}, y_{1}) + (1- \lambda) F(x_{2} , y_{2})
=  \lambda V(x_{1}, y_{1}) + (1- \lambda) V(x_{2} , y_{2}),
\end{align*}
which contradicts the convexity of $V$, as seen in Proposition \ref{Proposition Properties of value function}.
\end{proof}

\begin{proposition}\label{Proposition b convex}
The optimal boundary $b$ of (\ref{Definition b(x)}) is convex on $\mathbb{R}_{+}$. 
\end{proposition}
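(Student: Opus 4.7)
The plan is to read off the convexity of $b$ directly from the convexity of the stopping region $\mathcal{S}$ established in Proposition \ref{Proposition S convex}, by recognizing that $\mathcal{S}$ is nothing but the epigraph of $b$ (restricted to $\mathbb{R}_+^2$). More precisely, since Proposition \ref{Proposition rewrite C and S} gives
\begin{equation*}
\mathcal{S} = \{(x,y) \in \mathbb{R}_+^2 : y \geq b(x)\},
\end{equation*}
the convexity of $b$ on $\mathbb{R}_+$ is equivalent to the convexity of this sublevel-type set as a subset of $\mathbb{R}^2$.

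First I would fix $x_1, x_2 \in \mathbb{R}_+$ and $\lambda \in (0,1)$, and note that (by definition) the two boundary points $(x_1, b(x_1))$ and $(x_2, b(x_2))$ both belong to $\mathcal{S}$. Invoking Proposition \ref{Proposition S convex}, I conclude that their convex combination
\begin{equation*}
\bigl(\lambda x_1 + (1-\lambda) x_2,\, \lambda b(x_1) + (1-\lambda) b(x_2)\bigr)
\end{equation*}
also lies in $\mathcal{S}$. Applying Proposition \ref{Proposition rewrite C and S} once more to this convex combination yields
\begin{equation*}
b\bigl(\lambda x_1 + (1-\lambda) x_2\bigr) \;\leq\; \lambda b(x_1) + (1-\lambda) b(x_2),
\end{equation*}
which is exactly the convexity of $b$ on $\mathbb{R}_+$.

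Since all of the real work (convexity of $V$, convexity of $\mathcal{S}$, and the characterization of $\mathcal{S}$ as a hypograph of $b$) has already been done in Propositions \ref{Proposition Properties of value function}, \ref{Proposition rewrite C and S} and \ref{Proposition S convex}, there is no genuine obstacle here; the only point that requires a moment of care is the finiteness of $b(x_i)$, so that the chosen boundary points are well-defined elements of $\mathbb{R}_+^2$. This follows at once from the bound $V(x,y) \leq C(x+y)$ in Proposition \ref{Proposition Properties of value function} together with the linear growth of $F$ in $y$, which forces $b(x) < \infty$ for every $x \in \mathbb{R}_+$.
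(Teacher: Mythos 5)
Your argument is correct and is essentially the paper's proof: the paper likewise observes that $\mathcal{S}$ is the epigraph of $b$ and invokes the standard fact (citing Borwein and Lewis) that a function with convex epigraph is convex, which is exactly the elementary computation you spell out. The only quibble is your justification of $b(x)<\infty$: the bound $V(x,y)\le C(x+y)$ by itself does not force $V=F$ for large $y$ (it only yields $V-F\le I$); finiteness instead follows, e.g., from $V(x,y)\le \frac{Q_1 x}{\delta_1}+v_2(y)=F(x,y)$ for $y\ge y^*$, whence $b(x)\le y^*$ --- though the convexity inequality would in any case hold trivially at a point where $b=+\infty$.
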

\begin{proof}
Notice that the stopping region $\mathcal{S}  =  \{ (x,y) \in \mathbb{R}_{+}^{2}: ~ y  \geq  b(x) \}$
is the epigraph of $b$. Due to Proposition \ref{Proposition S convex} it follows from standard results (see for example Borwein and Lewis \cite{Borwein}, p.\,43) that the boundary is convex on $\mathbb{R}_{+}$. 
\end{proof}

\begin{proposition}\label{Proposition boundary continuous}
The optimal boundary $b$ of (\ref{Definition b(x)}) is continuous on $\mathbb{R}_{+}$. 
\end{proposition}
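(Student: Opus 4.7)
The plan is to deduce continuity from the three properties of $b$ already in hand: monotonicity and right-continuity (Proposition \ref{Proposition b nonincreasing and rightcontinuous}) together with convexity (Proposition \ref{Proposition b convex}). The key observation is the classical fact from convex analysis that any finite real-valued convex function on an open interval is automatically continuous there (cf.\ Borwein and Lewis \cite{Borwein}).

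First, I would check that $b$ is finite on $\mathbb{R}_+$. By the monotonicity in Proposition \ref{Proposition b nonincreasing and rightcontinuous}(i), it suffices to bound $b(0)$, and this is readily done via the one-dimensional benchmark of Section \ref{Section Benchmark problem}: at $x=0$ the process $X^0$ is absorbed at $0$, so problem (\ref{Value Function}) collapses to (\ref{OSP one dimensional}) for $v_2$, whose optimal trigger is the finite threshold $y^{*}$. Thus $b(x) \leq b(0) \leq y^{*} < \infty$ for every $x \geq 0$.

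With finiteness secured, the convex-analysis result immediately yields continuity of $b$ on the open half-line $(0,\infty)$. It then remains only to address the left endpoint $x=0$; since $0$ is a boundary point of $\mathbb{R}_+ = [0,\infty)$, continuity there amounts to right-continuity, which is the content of Proposition \ref{Proposition b nonincreasing and rightcontinuous}(ii). Combining the two observations gives continuity of $b$ on all of $\mathbb{R}_+$.

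Given the preparatory propositions, I do not expect any genuine obstacle. The heavy lifting—monotonicity, right-continuity, and convexity—was carried out earlier, each resting ultimately on the convexity of $V$ (Proposition \ref{Proposition Properties of value function}) and on the down-connectedness of $\mathcal{C}$. The only point that demands any care is ensuring that $b$ is real-valued, since the ``convex implies continuous on the interior'' principle does not apply to extended-real convex functions; once the bound $b(\cdot) \leq y^{*}$ is noted, the proof reduces to a one-line appeal to convex analysis followed by the right-continuity at $0$.
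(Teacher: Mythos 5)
Your proof is correct and follows essentially the same route as the paper: convexity yields continuity on the open half-line $(0,\infty)$, and continuity at the endpoint $x=0$ reduces to the right-continuity already established in Proposition \ref{Proposition b nonincreasing and rightcontinuous}(ii) (the paper phrases this last step as a short contradiction argument using monotonicity and the closedness of $\mathcal{S}$, but the content is the same). Your explicit verification that $b$ is real-valued, via $b(x)\leq b(0)\leq y^{*}<\infty$ from the benchmark problem, addresses a point the paper leaves implicit and is needed before invoking the ``convex implies continuous on the interior'' principle.
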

\begin{proof}
The continuity of $b$ on $(0,\infty)$ follows from Proposition \ref{Proposition b convex}, as $b$ is convex on an open set.  
It remains to show that the boundary is continuous in $x=0$. Assume that $b(0)   \neq  b(0+)$. While $b(0) < b(0+)$ is a contradiction to Proposition \ref{Proposition b nonincreasing and rightcontinuous} as $b$ is nonincreasing, supposing that $b(0) > b(0+)$ contradicts the closedness of the stopping region $\mathcal{S}$. The boundary $b$ is thus continuous on $\mathbb{R}_+$. 
\end{proof}

\begin{remark}
As already noticed in Remark \ref{rem:approach}, continuity of $b$ could be also obtained by relying on a result by Peskir \cite{PeskirC}, which connects the principle of smooth-fit with the continuity of the boundary for models with two-dimensional diffusions. 
\end{remark}

It now becomes clear to what extent the optimal investment problem (\ref{Value Function}) is related to the benchmark problems we studied in Section \ref{Section Benchmark problem}. 
Since the optimal boundary is continuous, convex and nonincreasing on $\mathbb{R}_{+}$, it follows that $b(0) = y^*$ and $b(x) < y^{*}$ for all $x > 0$. Furthermore, due to the fact that $x^{*}$ is the solution to the optimal stopping problem on $\mathbb{R}_{+} \times \{ 0 \}$, the boundary (\ref{Definition b(x)}) is such that $b(x) = 0$ for $x \geq x^{*}$. The solutions to the benchmark problems therefore give the investment thresholds for the company at the $x$- and $y$-axis.  
\subsection*{An Integral Equation for the Optimal Boundary.} \label{Subsection Deriving an integral equation}
In this section, we aim at characterizing the optimal boundary $b$ as the unique solution to an integral equation in a certain functional class. For that purpose, we make use of the probabilistic representation of the value function $V$ developed in Theorem \ref{Theorem Probabilistic Repres of v}. As a first step, we derive a lower bound for $b$ of (\ref{Definition b(x)}). Notice that Dynkin's formula implies 
\begin{align}\label{Dynkin on function F}
\mathbb{E} [ e^{-r \tau} F(X_{\tau}^{x} , Y_{\tau}^{y})] =  F(x,y) + \mathbb{E} \Bigl[ \int_{0}^{\tau} e^{-rs} ( \mathcal{L} -r)  F(X_{s}^{x} , Y_{s}^{y}) ds \Bigl]
\end{align}
for any bounded stopping time $\tau$. By localization arguments and (\ref{e -r tau f(X,Y) on tau=infty}), we conclude that (\ref{Dynkin on function F}) holds for any $\tau \in \mathcal{T}$ and we can thus rewrite $V$ as 
\begin{align*}
V(x,y) = F(x,y) + \sup_{\tau \in \mathcal{T}} \mathbb{E} \Bigl[ \int_{0}^{\tau} e^{-rs} (\mathcal{L} - r) F(X_{s}^{x} , Y_{s}^{y} ) ds \Bigl].
\end{align*}
Observe that it is never optimal to stop whenever $(\mathcal{L} - r) F(x,y) > 0$, consequently we have 
\begin{align}\label{h leq b by this sets}
\{ (x,y) \in \mathbb{R}_{+}^{2} : ~ (\mathcal{L} - r)F(x,y) > 0 \}  \subseteq  \{ (x,y) \in \mathbb{R}_{+}^{2}: ~ V(x,y) > F(x,y) \}  = \mathcal{C}.
\end{align}
Define
\begin{align}\label{Definition h(x)}
h(x) := \sup \{ y \in \mathbb{R}_{+} : ~ (\mathcal{L } - r) F(x,y) > 0 \}.
\end{align}
We state the following result.
\begin{lemma}\label{Lemma h(x)}
The function $h$ of (\ref{Definition h(x)}) is nonincreasing, continuous and it is given by the unique solution to the equation $(\mathcal{L} - r)F(x, \cdot) = 0$. In particular, 
\begin{align}\label{subseth(x)}
\{ (x,y) \in \mathbb{R}_{+}^{2}: ~ y > h(x) \}  =  \{ (x,y) \in \mathbb{R}_{+}^{2}: ~ rI - Q_{1}x - Q_{2}y < 0 \}.
\end{align}
\end{lemma}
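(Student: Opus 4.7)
The plan is straightforward: I would compute $(\mathcal{L}-r)F$ in closed form, observe that it is affine in $(x,y)$, and then read off all three assertions from the resulting expression.

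First, since $F(x,y) = Q_1 x/\delta_1 + Q_2 y/\delta_2 - I$ is affine, all its second derivatives vanish, and only the drift and zero-order terms of the generator
\[
\mathcal{L} = \tfrac{1}{2}\sigma_1^2 x^2 \partial_{xx} + \tfrac{1}{2}\sigma_2^2 y^2 \partial_{yy} + \alpha_1 x \partial_x + \alpha_2 y \partial_y
\]
contribute to $\mathcal{L}F$. Using $\delta_i = r - \alpha_i$, a direct computation then gives
\[
(\mathcal{L}-r)F(x,y) = \frac{(\alpha_1 - r)Q_1}{\delta_1}\, x + \frac{(\alpha_2 - r)Q_2}{\delta_2}\, y + rI = rI - Q_1 x - Q_2 y.
\]

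Next, for fixed $x \in \mathbb{R}_+$ the map $y \mapsto rI - Q_1 x - Q_2 y$ is strictly decreasing with slope $-Q_2 < 0$ and has a unique real root $y(x) := (rI - Q_1 x)/Q_2$. By this strict monotonicity, the set $\{y \in \mathbb{R}_+ : (\mathcal{L}-r)F(x,y) > 0\}$ equals $[0,y(x))$ when $y(x)>0$ and is empty when $y(x) \leq 0$; combining the two cases with the convention $\sup \emptyset = 0$ produces the closed form $h(x) = \max\{0,\, (rI - Q_1 x)/Q_2\}$. From this explicit formula, $h$ is visibly nonincreasing and continuous on $\mathbb{R}_+$, and on the region where the root $y(x)$ of $(\mathcal{L}-r)F(x,\cdot)=0$ is nonnegative, $h$ coincides with it. The set identity (\ref{subseth(x)}) then follows by a one-line comparison of $y > \max\{0,\,(rI - Q_1 x)/Q_2\}$ with $rI - Q_1 x - Q_2 y < 0$.

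There is essentially no obstacle here; the entire content of the lemma reduces to a single elementary calculation. The only point worth flagging is the cancellation $(\alpha_i - r)/\delta_i = -1$, which is an immediate consequence of Assumption \ref{Assr>a} together with the definition of $\delta_i$, and which is precisely what makes the expression $(\mathcal{L}-r)F(x,y)$ reduce to the clean affine function $rI - Q_1 x - Q_2 y$.
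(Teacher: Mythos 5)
Your proposal is correct and follows essentially the same route as the paper: compute $(\mathcal{L}-r)F(x,y)=rI-Q_1x-Q_2y$, use its strict monotonicity in $y$ to identify $h$ with the unique root, and read off monotonicity, continuity and the set identity from the explicit affine formula. If anything, you are slightly more careful than the paper in writing $h(x)=\max\{0,(rI-Q_1x)/Q_2\}$ to respect the convention $\sup\emptyset=0$ for $x\geq rI/Q_1$, where the paper simply records the linear expression.
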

\begin{proof}
We set $g(x,y) := (\mathcal{L} - r) F(x,y) = rI - Q_{1}x - Q_{2} y.$ Notice that $g$ is strictly decreasing and continuous in $x$ and $y$. Hence, for $x_{2} > x_{1}$ we have $g(x_{1} ,h(x_{2})) \geq g(x_{2} , h(x_{2})) \geq 0$, where the latter inequality is due to (\ref{Definition h(x)}), and it follows that $h(x_{1}) \geq h(x_{2})$. The continuity of $g$ and (\ref{Definition h(x)}) guarantee that $h$ solves $g(x, \cdot) =0$. Furthermore, as $g(x, \cdot)$ is strictly decreasing, $h$ the unique solution. Consequently, it admits the representation
\begin{align*}
h(x) = \frac{1}{Q_{2}} (rI - Q_{1} x).
\end{align*}
It is evident that $h$ is continuous on $\mathbb{R}_{+}$ and (\ref{subseth(x)}) follows from the above results. 
\end{proof}
Consider the class of functions
\begin{align*}
\mathcal{M} := \{ f: \mathbb{R} \mapsto \mathbb{R} , \text{ continuous, decreasing and s.t.} ~ f(x) \geq h(x) \}
\end{align*}
and notice that $\mathcal{M}$ is nonempty as $h \in \mathcal{M}$ due to Lemma \ref{Lemma h(x)}. 
\begin{theorem}\label{Theorem Integral Equation}
The optimal boundary $b$ of (\ref{Definition b(x)}) is the unique function  $y \in \mathcal{M}$ such that
\begin{align}\label{Integral Equation w/o densities}
F(x, y(x)) = \mathbb{E} \Bigl[ \int_{0}^{\infty} e^{-rt} \Bigl( Q_{1} X_{t}^{x} + Q_{2} Y_{t}^{y(x)} - rI \Bigl) \one_{\{  Y_{t}^{y(x)} \geq y(X_{t}^{x}) \}} dt \Bigl], \quad x>0.
\end{align} 
Equivalently, with regards to Lemma \ref{Lemma Densities}, one has
\begin{align}\label{Integral Equation w Densities}
\frac{Q_{1}x}{\delta_{1}} + \frac{Q_{2} y(x)}{\delta_{2}} - I 
= \int_{0}^{\infty} e^{-rt} \Big( \int_{0}^{\infty} \rho_{1} (t,x ,\psi) 
\Big( \int_{y(\psi)}^{\infty} (Q_{1} \psi + Q_{2} \eta - rI ) \rho_{2} (t , y(x) , \eta) d \eta \Big) d \psi \Big) dt.
\end{align} 
\end{theorem}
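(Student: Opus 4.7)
First, $b \in \mathcal{M}$: continuity is Proposition \ref{Proposition boundary continuous}, monotonicity is Proposition \ref{Proposition b nonincreasing and rightcontinuous}(i), and $b \geq h$ follows from the inclusion (\ref{h leq b by this sets}) together with the definition (\ref{Definition h(x)}) of $h$. To verify (\ref{Integral Equation w/o densities}), I would evaluate the probabilistic representation (\ref{Value function Probabilistic new}) at the boundary point $(x, b(x))$: since $\mathcal{S}$ is closed and $(x, b(x)) \in \partial \mathcal{C} \subset \mathcal{S}$, one has $V(x, b(x)) = F(x, b(x))$, which is exactly (\ref{Integral Equation w/o densities}) with $y(\cdot) = b(\cdot)$. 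The equivalent form (\ref{Integral Equation w Densities}) then follows by expressing the expectation through the transition densities of Lemma \ref{Lemma Densities}.

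\textbf{Uniqueness.} For any $c \in \mathcal{M}$ satisfying (\ref{Integral Equation w/o densities}), define
\begin{align*}
U^c(x,y) := \mathbb{E}\Bigl[\int_0^\infty e^{-rt} (Q_1 X_t^x + Q_2 Y_t^y - rI)\,\one_{\{Y_t^y \geq c(X_t^x)\}}\,dt\Bigr].
\end{align*}
I would then apply the four-step scheme of Peskir \cite{PeskirU}, adapted to the two-dimensional setting as in De Angelis et al.\ \cite{GF}. The first step is to prove $U^c(x,y) = F(x,y)$ for all $y \geq c(x)$: fixing such $(x,y)$, set $\sigma_c := \inf\{t \geq 0 : Y_t^y < c(X_t^x)\}$ (so the indicator equals $1$ on $[0, \sigma_c)$); then the strong Markov property at $\sigma_c$ together with the integral equation evaluated at $(X_{\sigma_c}, c(X_{\sigma_c})) = (X_{\sigma_c}, Y_{\sigma_c})$ and Dynkin's formula applied to $F$ (using the direct calculation $(\mathcal{L}-r)F(x,y) = rI - Q_1 x - Q_2 y$) cause the two contributions to cancel, leaving $U^c = F$. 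The second step subtracts (\ref{Value function Probabilistic new}) from the definition of $U^c$ to obtain the comparison
\begin{align*}
V(x,y) - U^c(x,y) = \mathbb{E}\Bigl[\int_0^\infty e^{-rt}(Q_1 X_t + Q_2 Y_t - rI)\bigl(\one_{\{Y_t \geq b(X_t)\}} - \one_{\{Y_t \geq c(X_t)\}}\bigr)\,dt\Bigr].
\end{align*}

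\textbf{Conclusion and main obstacle.} The identity $c \equiv b$ is extracted by contradiction. Evaluating the previous display at $(x_0, c(x_0))$ and using the integral equation $U^c(x_0, c(x_0)) = F(x_0, c(x_0))$, one sees that the left-hand side is strictly positive when $c(x_0) < b(x_0)$ (since $(x_0, c(x_0)) \in \mathcal{C}$) and vanishes when $c(x_0) > b(x_0)$ (since $V = F$ on $\mathcal{S}$). For the right-hand side, the bound $c \geq h$ guarantees $Q_1 X_t + Q_2 Y_t - rI \geq 0$ on $\{Y_t \geq c(X_t)\}$, so the integrand is nonpositive on the wedge $\{c(X) \leq Y < b(X)\}$ and nonnegative on $\{b(X) \leq Y < c(X)\}$. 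The main obstacle is to turn this sign analysis into a clean contradiction when $b$ and $c$ may cross, so that both wedges can be simultaneously nonempty and the two contributions may compete: the standard remedy, as in \cite{PeskirU, GF}, is a local analysis near $(x_0, c(x_0))$ using the continuity of $b$ and $c$ and the nondegeneracy of the diffusion, together with the $C^1$-regularity of $V$ from Proposition \ref{Proposition Value function C1}. The smooth-fit for $V$ fails for $U^c$ whenever $c \neq b$, and this mismatch provides the final contradiction, forcing $c \equiv b$.
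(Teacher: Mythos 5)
Your proposal matches the paper's proof: existence is obtained exactly as in the paper, by noting $b\in\mathcal{M}$ via Propositions \ref{Proposition b nonincreasing and rightcontinuous} and \ref{Proposition boundary continuous} together with (\ref{h leq b by this sets}), and then evaluating the probabilistic representation (\ref{Value function Probabilistic new}) at $(x,b(x))$ and using $V(x,b(x))=F(x,b(x))$; for uniqueness, both you and the paper delegate to the four-step procedure of De Angelis et al.\ \cite{GF} refining Peskir \cite{PeskirU}, and your sketch of the first steps (the identity $U^c=F$ on $\{y\geq c(x)\}$ via the strong Markov property and Dynkin's formula for $F$, and the comparison of the two representations) is accurate. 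One minor caveat: in the cited scheme the concluding contradiction is obtained probabilistically, by comparing the two representations along suitable entry/exit times of the wedge between $b$ and $c$ and using that the nondegenerate diffusion spends strictly positive expected time there where the integrand has a definite sign, rather than from a failure of smooth fit for $U^c$ -- but since the paper itself offers no more than the citation at this point, this does not constitute a gap relative to the paper's own proof.
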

\begin{proof} As for the existence, it is sufficient to show that $b$ of (\ref{Definition b(x)}) solves the equation. Notice that $b \in \mathcal{M}$ due to Proposition \ref{Proposition b nonincreasing and rightcontinuous}, Proposition \ref{Proposition boundary continuous} and simple comparison arguments resulting from (\ref{h leq b by this sets}). Furthermore, by evaluating both sides of the probabilistic representation  (\ref{Probabilistic representation of v}) of $V$ at points $y = b(x)$, one finds (\ref{Integral Equation w/o densities}), upon using $V(x,b(x)) = F(x, b(x))$. \\
In order to show that $b$ is the unique solution to (\ref{Integral Equation w/o densities}) in $\mathcal{M}$, one can adopt the four-step procedure in De Angelis et al.\,\cite{GF}, extending and refining the original probabilistic arguments from Peskir \cite{PeskirU}. 
\end{proof}
\begin{remark}
The equation (\ref{Integral Equation w Densities}) can be reformulated in the canonical Fredholm form. Define
\begin{align*}
K(x, \psi, \alpha, \beta) = \int_{0}^{\infty} e^{-rt} \rho_{1} (t,x, \psi) \int_{\beta}^{\infty} \Bigl(Q_{1} \psi + Q_{2}\eta -rI\Bigl) \rho_{2} (t, \alpha, \eta) d \eta dt
\end{align*}
and after applying Fubini's theorem the equation (\ref{Integral Equation w Densities}) can be written as
\begin{align*}
\frac{Q_{1}x}{\delta_{1}} + \frac{Q_{2} b(x)}{\delta_{2}} - I = \int_{0}^{\infty} K(x, \psi, b(x) , b(\psi)) d\psi.
\end{align*}
We thus obtain the representation 
\begin{align}\label{Fredholm integral equation}
b(x) = f(x) + \underbrace{\lambda \int_{0}^{\infty} K(x, \psi, b(x) , b(\psi)) d\psi}_{:= G},
\end{align}
where we set 
\begin{align*}
\lambda := \frac{\delta_{2}}{Q_{2}} \hspace{1cm} \text{and} \hspace{1cm} f(x) := \frac{\delta_{2}}{Q_{2}} \Bigl( I - \frac{Q_{1} x}{\delta_{1}} \Bigl).
\end{align*}
Following Press and Teukolsky \cite{Press}, (\ref{Fredholm integral equation}) is a nonlinear, inhomogeneous Fredholm integral equation of second kind. \\
It is interesting to notice that $b(x) \geq f(x)$ for all $x \in \mathbb{R}_+$, where $f(x)$ represents the price of the second product that makes the company \textit{indifferent} between investing and passing up on the investment opportunity. However, as the company wants to maximize its expected profit, it  aims to invest at a larger price level of the second product. Consequently, it adds the quantity $G$, which is strictly positive due to (\ref{h leq b by this sets}).
\end{remark}
\section{Comparative Statics Analysis}\label{Section Comparative Statics Analysis}
In this section we perform some comparative statics analysis of the value function $V$ and the optimal boundary $b$ of (\ref{Definition b(x)}). 
Differently to the majority of the contributions on real options problems, we are able to propose rigorous analytical proofs of the dependency of the value function $V$ on $\sigma_i, \, i=1,2$ and $\alpha_i, \, i=1,2$. Moreover, we implement a recursive numerical method to investigate the sensitivity of the optimal boundary with respect to the model's parameters. \\
The next important technical proposition will be used in Propositions \ref{Proposition Comp Stat Sigma} and \ref{Proposition Comp Stat Alpha}. Its proof can be found in Appendix \ref{Appendix Lemma Variational Inequality}.  
\begin{proposition}\label{Proposition Variational Inequality}
The value function $V$ of (\ref{Value Function}) is such that $V \in C^{1} (\mathbb{R}^2_+)\cap \mathcal{W}^{2,2}_{\text{loc}} (\mathbb{R}_+^2)$ and satisfies the variational inequality 
\begin{align*}
\max \{ (\mathcal{L} -r) v(x,y) , \, F(x,y) - v(x,y) \} = 0
\end{align*}
for a.e. $(x,y) \in \mathbb{R}_{+}^{2}$. 
\end{proposition}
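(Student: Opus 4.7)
My plan is to combine classical $\mathcal{W}^{2,p}$-regularity for obstacle problems with a probabilistic verification via the It\^o-Krylov formula. Proposition \ref{Proposition Value function C1} already delivers $V \in C^1(\mathbb{R}^2_+)$, so only the local Sobolev regularity and the a.e.\ validity of the variational inequality remain to be shown.

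First, I fix an arbitrary bounded open rectangle $R$ with $\overline{R} \subset (0,\infty)^2$. On $\overline{R}$ the generator $\mathcal{L}$ is uniformly elliptic with smooth coefficients, the obstacle $F$ is affine, and $V\vert_{\partial R}$ is continuous by Proposition \ref{Proposition Properties of value function}. Classical obstacle-problem theory (see e.g.\ Bensoussan-Lions, or Friedman) then yields a unique $u \in \mathcal{W}^{2,p}(R) \cap C(\overline{R})$, for every $p \in (1,\infty)$, satisfying
\begin{align*}
\max\{(\mathcal{L}-r)u, F-u\} = 0 \quad \text{a.e.\ in } R, \qquad u = V \text{ on } \partial R.
\end{align*}
Since $R$ is arbitrary, establishing $u = V$ on $R$ will transfer both the Sobolev regularity and the variational inequality from $u$ to $V$.

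The identification $u=V$ is a verification argument. I would apply the It\^o-Krylov formula to $e^{-rt} u(X^x_t, Y^y_t)$ up to $\tau \wedge \tau_R$, where $\tau_R$ denotes the exit time of $(X^x,Y^y)$ from $R$. Using $(\mathcal{L}-r)u \leq 0$ a.e., $u \geq F$ in $R$ and $u = V$ on $\partial R$, one obtains
\begin{align*}
u(x,y) \geq \mathbb{E}\big[ e^{-r\tau} F(X^x_\tau, Y^y_\tau) \one_{\{\tau < \tau_R\}} + e^{-r\tau_R} V(X^x_{\tau_R}, Y^y_{\tau_R}) \one_{\{\tau \geq \tau_R\}} \big]
\end{align*}
for every $\tau \in \mathcal{T}$. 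Taking the supremum and invoking the strong Markov property of $(X,Y)$ at $\tau_R$ yields $u \geq V$. Conversely, choosing $\tau = \tau^*$ from (\ref{Optimal stopping time}) turns the It\^o identity into an equality, because $u = F$ on the stopping set inside $R$ (by the obstacle equation) and $(\mathcal{L}-r)u = 0$ on its complement inside $R$; this gives $u \leq V$.

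The main obstacle is the rigorous application of the It\^o-Krylov formula to a merely weakly twice differentiable function $u$: it requires $p$ large enough for $\nabla u$ to be H\"older-continuous so that the formula holds pathwise, uniform non-degeneracy of the diffusion on $\overline{R}$ (guaranteed by $\overline{R} \subset (0,\infty)^2$), and integrability of the resulting stochastic terms, which follows from (\ref{e -rt Xt uniformly integrable}) and Assumption \ref{Assr>a}. Once these ingredients are in place, the identification $u = V$ holds on every such $R$, so $V$ belongs to $\mathcal{W}^{2,p}_{\text{loc}}$ on the interior of $\mathbb{R}^2_+$ and the variational inequality is satisfied a.e., which in particular yields the announced $\mathcal{W}^{2,2}_{\text{loc}}$-regularity.
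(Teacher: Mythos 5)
Your overall strategy --- solve the obstacle problem locally on a rectangle $R$ with boundary datum $V\vert_{\partial R}$ and identify the solution $u$ with $V$ by verification --- is sound, but it is genuinely different from the paper's route. The paper (Appendix \ref{Appendix Lemma Variational Inequality}) works directly with $V$: it uses $(\mathcal{L}-r)V=0$ in $\mathcal{C}$, the convexity of $V$ and the already-established $C^1$ regularity to show that $V_{xx}$ and $V_{yy}$ remain locally bounded up to $\partial\mathcal{C}$, deduces that $V_x$ and $V_y$ are locally Lipschitz, and then invokes a theorem of Bernstein to pass from directional to joint $\mathcal{W}^{2,2}_{\text{loc}}$ regularity; the inequality on $\mathcal{S}$ follows from $(\mathcal{L}-r)V=(\mathcal{L}-r)F\le 0$ a.e.\ there. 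Your approach delivers the whole $\mathcal{W}^{2,p}_{\text{loc}}$ scale in one stroke and avoids the convexity/Bernstein machinery, at the price of an It\^o--Krylov verification; it is in fact close in spirit to the paper's proof of Theorem \ref{Theorem Probabilistic Repres of v} in Appendix \ref{Appendix Proof of probabilistic repr.}, except that there the localized problem carries boundary datum $F$ and a limit $n\to\infty$ is required, whereas your choice of datum $V\vert_{\partial R}$ identifies $u$ with $V$ on $R$ directly.

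The one step that fails as written is the converse inequality $u\le V$. You take $\tau=\tau^*$, the hitting time of the stopping set $\mathcal{S}$ of the \emph{original} problem, and assert that $u=F$ on $\mathcal{S}\cap R$ and $(\mathcal{L}-r)u=0$ on its complement in $R$. Neither fact is available a priori: the localized obstacle problem has its own contact set $\{u=F\}\cap R$, and you cannot assume it coincides with $\mathcal{S}\cap R$ before proving $u=V$ --- which is precisely what is at stake, so the argument is circular. The standard repair is to use instead $\hat\tau$, the first hitting time of $\{u=F\}$; the It\^o--Krylov formula up to $\hat\tau\wedge\tau_R$ then gives equality because $(\mathcal{L}-r)u=0$ a.e.\ on $\{u>F\}\cap R$, whence $u(x,y)=\mathbb{E}\bigl[e^{-r\hat\tau}F(X^x_{\hat\tau},Y^y_{\hat\tau})\one_{\{\hat\tau<\tau_R\}}+e^{-r\tau_R}V(X^x_{\tau_R},Y^y_{\tau_R})\one_{\{\hat\tau\ge\tau_R\}}\bigr]$; bounding $F\le V$ and using the supermartingale (Snell-envelope) property of $\{e^{-rt}V(X^x_t,Y^y_t)\}$ yields $u\le V$. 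With this fix the identification holds on every such $R$ and the rest of your argument goes through; note only that, exactly as in the paper's own proof, what one really obtains is regularity on the open quadrant, the axes being degenerate for $\mathcal{L}$.
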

The next result exploits the convexity and regularity of $V$ in order to prove monotonicity of $V$ with respect to $\sigma_1$. The same rationale has already been employed in Olsen and Stensland \cite{Olsen}, where, however, the delicate issue of the regularity of $V$ seems to be overlooked (as a matter of fact, the optimal value in Olsen and Stensland \cite{Olsen} is implicitly assumed to be of class $C^2$). 
\begin{proposition}\label{Proposition Comp Stat Sigma}
Let $\hat{\sigma}_1 > \sigma_1$ and let $\hat{X}_t^x$ denote the unique solution to (\ref{dyn}) when volatility is $\hat{\sigma}_1$. Moreover, let $\hat{V}$ denote the value function of (\ref{Value Function}) with underlying state $(\hat{X}_t^x , Y_t^y)$. It follows that 
\begin{align*}
\hat{V} (x,y) \geq V(x,y),  
\end{align*}
for all $(x,y) \in \mathbb{R}_+^2$. 
\end{proposition}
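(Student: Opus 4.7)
The plan is to show $\hat V \geq V$ by applying a generalized It\^o formula to $\hat V$ evaluated along the \emph{lower}-volatility process $(X^x, Y^y)$, and then exploiting the fact that $\hat V$ is convex in $x$ together with its variational inequality.

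First, I would observe that Proposition~\ref{Proposition Variational Inequality} applies verbatim to the problem driven by $(\hat X^x, Y^y)$ as well (all arguments in its proof are independent of the specific value of $\sigma_1$). Therefore $\hat V \in C^1(\mathbb{R}_+^2) \cap \mathcal{W}^{2,2}_{\mathrm{loc}}(\mathbb{R}_+^2)$, satisfies the variational inequality $\max\{(\hat{\mathcal{L}}-r)\hat V,\, F - \hat V\}=0$ a.e., and is convex on $\mathbb{R}_+^2$ (by the analogue of Proposition~\ref{Proposition Properties of value function}). Here $\hat{\mathcal{L}}$ denotes the infinitesimal generator associated with $(\hat X^x, Y^y)$, and one has the pointwise a.e.\ identity
\begin{equation*}
(\mathcal{L}-r)\hat V(x,y) = (\hat{\mathcal{L}}-r)\hat V(x,y) - \tfrac{1}{2}(\hat\sigma_1^2 - \sigma_1^2)\, x^2\, \hat V_{xx}(x,y).
\end{equation*}
The variational inequality gives $(\hat{\mathcal{L}}-r)\hat V \leq 0$ a.e., while the convexity of $\hat V$ in $x$ gives $\hat V_{xx} \geq 0$ a.e. Since $\hat\sigma_1^2 > \sigma_1^2$, I conclude that $(\mathcal{L}-r)\hat V \leq 0$ a.e.\ on $\mathbb{R}_+^2$.

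Next, for an arbitrary $\tau \in \mathcal{T}$, I would apply the It\^o--Krylov formula (valid since $\hat V \in \mathcal{W}^{2,2}_{\mathrm{loc}}$ and the diffusion coefficients of $(X,Y)$ are locally Lipschitz and nondegenerate on compacts of $(0,\infty)^2$) to $e^{-rt}\hat V(X_t^x, Y_t^y)$ along a localizing sequence $(\tau_n)$ stopping $X$ and $Y$ before they exit compact sets of $(0,\infty)^2$. Taking expectations kills the local-martingale part and yields
\begin{equation*}
\mathbb{E}\bigl[e^{-r(\tau\wedge\tau_n)}\hat V(X_{\tau\wedge\tau_n}^x, Y_{\tau\wedge\tau_n}^y)\bigr] = \hat V(x,y) + \mathbb{E}\!\int_0^{\tau\wedge\tau_n} e^{-rs}(\mathcal{L}-r)\hat V(X_s^x, Y_s^y)\, ds.
\end{equation*}
Using $(\mathcal{L}-r)\hat V \leq 0$ and $\hat V \geq F$ (which holds by the analogue of Proposition~\ref{Proposition Properties of value function}), I get
\begin{equation*}
\hat V(x,y) \geq \mathbb{E}\bigl[e^{-r(\tau\wedge\tau_n)} F(X_{\tau\wedge\tau_n}^x, Y_{\tau\wedge\tau_n}^y)\bigr].
\end{equation*}
Finally I would pass to the limit $n\to\infty$: the linear growth of $F$, the uniform integrability of $\{e^{-r\tau}X_\tau^x\mathbf{1}_{\{\tau<\infty\}}\}$ and $\{e^{-r\tau}Y_\tau^y\mathbf{1}_{\{\tau<\infty\}}\}$ from \eqref{e -rt Xt uniformly integrable}, together with the convention \eqref{convention lim e X eins=infty}, allow the standard dominated/uniform integrability argument to conclude $\hat V(x,y) \geq \mathbb{E}[e^{-r\tau} F(X_\tau^x, Y_\tau^y)]$. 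Taking the supremum over $\tau \in \mathcal{T}$ yields $\hat V(x,y) \geq V(x,y)$.

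The main obstacle I expect is the technical justification of the It\^o--Krylov formula and the passage to the limit under merely $\mathcal{W}^{2,2}_{\mathrm{loc}}$ regularity: one must localize both in time (the stopping time $\tau$ can be infinite) and in space (the diffusion coefficient $\sigma_1 x$ degenerates at $x=0$, though the processes never reach $0$ from strictly positive starting points). A clean way to handle this is to first prove the inequality on the interior $(0,\infty)^2$ via the localization above and then extend by continuity of both $V$ and $\hat V$ to the axes; this is the only step where one has to be careful, as the remaining algebraic manipulation is straightforward.
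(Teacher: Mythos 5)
Your proposal is correct and follows essentially the same route as the paper: both arguments invoke the variational inequality for $\hat V$, decompose $(\mathcal{L}-r)\hat V$ as $(\hat{\mathcal{L}}-r)\hat V$ plus the difference term $-\tfrac{1}{2}(\hat\sigma_1^2-\sigma_1^2)x^2\hat V_{xx}$, use convexity of $\hat V$ to sign that term, apply a (localized) Dynkin/It\^o--Krylov formula along the lower-volatility process, and conclude via $\hat V \geq F$. The only cosmetic difference is that you work with an arbitrary $\tau$ and take the supremum at the end, whereas the paper plugs in the optimal stopping time $\tau^*$ for $V$ directly; moreover, your explicit attention to the sign of $\hat V_{xx}$ (rather than $V_{xx}$) is the correct reading of the argument.
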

\begin{proof}
Recall $\mathcal{L}$ as in (\ref{L infinitesimal generator}) and let $\hat{\mathcal{L}}$ denote the infinitesimal generator when volatility is $\hat{\sigma_1}$. 
Notice that Proposition \ref{Proposition Variational Inequality} implies $(\mathcal{L}-r)V(x,y) \leq 0$ as well as $(\hat{\mathcal{L}} - r) \hat{V} (x,y) \leq 0$ for a.e. $(x,y) \in \mathbb{R}_{+}^{2}$.
Moreover, the fact that $V \in \mathcal{W}^{2,2}(\mathbb{R}_+^2)$ by Proposition \ref{Proposition Variational Inequality} implies that we can argue as in the proof of Proposition \ref{Proposition vn in W2p Qn} in order to apply Dynkin's formula and obtain
\begin{align}\label{Dynkin on V hat}
\mathbb{E}[ e^{-r\tau^{*}} \hat{V}(X_{\tau^{*}}^{x}, Y_{\tau^{*}}^{y} ) ] = \hat{V}(x,y) + \mathbb{E} \Big[ \int_{0}^{\tau^{*}} e^{-rs} (\mathcal{L} -r) \hat{V}(X_{s}^{x}, Y_{s}^{y}) ds \Big],
\end{align}
where $\tau^* = \tau^{*} (x,y)$ denotes the optimal stopping time for the stopping problem with value function $V(x,y)$. 
From (\ref{Dynkin on V hat}) we obtain by simple manipulation that 
\begin{align*}
\mathbb{E}[ e^{-r\tau^{*}} \hat{V}(X_{\tau^{*}}^{x},& Y_{\tau^{*}}^{y} ) ] \\ &= \hat{V}(x,y) + 
\mathbb{E} \Big[ \int_{0}^{\tau^{*}} e^{-rs} \Big( (\hat{\mathcal{L}} -r) \hat{V}(X_{s}^{x}, Y_{s}^{y}) + (\mathcal{L} - \hat{\mathcal{L}} ) \hat{V} (X_{s}^{x}, Y_{s}^{y}) \Big) ds \Big] \\
&\leq \hat{V}(x,y) +  \mathbb{E} \Big[ \int_{0}^{\tau^{*}} e^{-rs}  (\mathcal{L} - \hat{\mathcal{L}} ) \hat{V} (X_{s}^{x}, Y_{s}^{y}) ds \Big] \\ 
&= \hat{V}(x,y) + \mathbb{E} \Big[ \int_{0}^{\tau^{*}} e^{-rs} \Big( \frac{1}{2} (\sigma_1^2 - \hat{\sigma}_1^2 ) x^2 \hat{V}_{xx} (X_{s}^{x}, Y_{s}^{y}) \Big) ds \Big] \\
&\leq \hat{V}(x,y)
\end{align*}
where the latter inequality follows from the convexity of $V$. By noticing that Proposition \ref{Proposition Variational Inequality} implies $\hat{V} \geq F$ on $\mathbb{R}_+^2$, we then obtain 
\begin{align*}
\hat{V}(x,y) \geq \mathbb{E} [ e^{-r \tau^*} \hat{V} (X_{\tau^*}^{x} , Y_{\tau^*}^{y} ) ] \geq \mathbb{E} [ e^{-r \tau^*} F(X_{\tau^*}^{x} , Y_{\tau^*}^{y} )] = V(x,y),
\end{align*}
which concludes our claim.
\end{proof}
\begin{proposition}\label{Proposition Comp Stat Alpha}
Let $\hat{\alpha}_1 > \alpha_1$ and let $\hat{X}_t^x$ denote the unique solution to (\ref{dyn}) when the drift is $\hat{\alpha}_1$. Furthermore, let $\hat{V}$ denote the value function of (\ref{Value Function}) when the underlying state is $(\hat{X}_t^x , Y_t^y )$. We have 
\begin{align*}
\hat{V} (x,y) \geq V(x,y)
\end{align*}
for all $(x,y) \in \mathbb{R}_+^2$. 
\end{proposition}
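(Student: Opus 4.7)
The plan is to mirror the argument of Proposition \ref{Proposition Comp Stat Sigma}, but using the monotonicity of $\hat{V}$ in place of its convexity. Indeed, the only structural difference between the two comparative statics is which second-order coefficient of $\mathcal{L}$ is being perturbed: here we perturb the drift $\alpha_1 x \partial_x$ rather than the diffusion $\tfrac12\sigma_1^2 x^2 \partial_{xx}$, so we need a sign on $\hat{V}_x$ rather than on $\hat{V}_{xx}$. Since Proposition \ref{Proposition Properties of value function} gives that $\hat{V}$ is non-decreasing in $x$, we immediately have $\hat{V}_x \geq 0$ a.e., which is exactly what is needed.

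More concretely, let $\mathcal{L}$ and $\hat{\mathcal{L}}$ denote the infinitesimal generators associated to drift $\alpha_1$ and $\hat{\alpha}_1$ respectively, and let $\tau^\ast = \tau^\ast(x,y)$ be the optimal stopping time for $V(x,y)$. First, I would invoke Proposition \ref{Proposition Variational Inequality} applied to $\hat{V}$ to obtain $\hat{V} \in C^1(\mathbb{R}_+^2) \cap \mathcal{W}^{2,2}_{\mathrm{loc}}(\mathbb{R}_+^2)$ together with $(\hat{\mathcal{L}} - r)\hat{V} \leq 0$ and $\hat{V} \geq F$ a.e.\ on $\mathbb{R}_+^2$. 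Second, I would apply (a suitable localized version of) Dynkin's formula to $\hat{V}$ along the trajectories of $(X_t^x, Y_t^y)$ driven by the original drift $\alpha_1$, exactly as in (\ref{Dynkin on V hat}), yielding
\begin{align*}
\mathbb{E}[ e^{-r\tau^{\ast}} \hat{V}(X_{\tau^{\ast}}^{x}, Y_{\tau^{\ast}}^{y}) ] = \hat{V}(x,y) + \mathbb{E}\Bigl[ \int_0^{\tau^\ast} e^{-rs} (\mathcal{L}-r)\hat{V}(X_s^x, Y_s^y)\, ds \Bigr].
\end{align*}

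Third, I would split $(\mathcal{L}-r)\hat{V} = (\hat{\mathcal{L}}-r)\hat{V} + (\mathcal{L}-\hat{\mathcal{L}})\hat{V}$, observe that $(\mathcal{L}-\hat{\mathcal{L}})\hat{V}(x,y) = (\alpha_1 - \hat{\alpha}_1)\,x\,\hat{V}_x(x,y) \leq 0$ a.e.\ (using $\alpha_1 < \hat{\alpha}_1$ and $\hat{V}_x \geq 0$), and conclude
\begin{align*}
\mathbb{E}[ e^{-r\tau^{\ast}} \hat{V}(X_{\tau^{\ast}}^{x}, Y_{\tau^{\ast}}^{y}) ] \leq \hat{V}(x,y).
\end{align*}
Combining with $\hat{V} \geq F$ and the definition of $\tau^\ast$ yields
\begin{align*}
\hat{V}(x,y) \geq \mathbb{E}[e^{-r\tau^\ast} \hat{V}(X_{\tau^\ast}^x, Y_{\tau^\ast}^y)] \geq \mathbb{E}[e^{-r\tau^\ast} F(X_{\tau^\ast}^x, Y_{\tau^\ast}^y)] = V(x,y),
\end{align*}
which is the claim.

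The main subtlety, as in Proposition \ref{Proposition Comp Stat Sigma}, is the rigorous justification of Dynkin's formula: since $\hat{V}$ is only in $\mathcal{W}^{2,2}_{\mathrm{loc}}$ and not in $C^2$, one cannot invoke the classical It\^o formula directly. I would handle this by localizing $\tau^\ast$ by exit times from relatively compact subdomains $Q_n \uparrow \mathbb{R}_+^2$, applying the mollified/Sobolev version of It\^o's formula on each $Q_n$ as in the proof of Proposition \ref{Proposition vn in W2p Qn}, and then passing to the limit using uniform integrability of $\{e^{-r\tau}\hat{V}(X_\tau^x,Y_\tau^y),\, \tau\in\mathcal{T}\}$ (cf.\ Remark \ref{Remark Martingale}). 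All other steps are routine once the regularity and monotonicity of $\hat{V}$ have been established.
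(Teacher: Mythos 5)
Your overall strategy is exactly the paper's: apply Dynkin's formula to $\hat{V}$ along the dynamics with drift $\alpha_1$, split $(\mathcal{L}-r)\hat{V} = (\hat{\mathcal{L}}-r)\hat{V} + (\mathcal{L}-\hat{\mathcal{L}})\hat{V}$, and use $\hat{V}_x \geq 0$ (monotonicity from Proposition \ref{Proposition Properties of value function}) in place of the convexity used for the volatility case. That part is correct and is precisely how the paper argues.

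There is, however, one genuine gap in your final chain of inequalities. Unlike $\sigma_1$, the drift $\alpha_1$ enters the \emph{payoff} itself: $F(x,y) = \tfrac{Q_1 x}{\delta_1} + \tfrac{Q_2 y}{\delta_2} - I$ with $\delta_1 = r - \alpha_1$, so the stopping problem defining $\hat{V}$ has payoff $F(\cdot\,;\hat{\alpha}_1)$ while $V(x,y) = \mathbb{E}[e^{-r\tau^*}F(X^x_{\tau^*},Y^y_{\tau^*};\alpha_1)]$. The variational inequality for $\hat{V}$ gives $\hat{V} \geq F(\cdot\,;\hat{\alpha}_1)$, not $\hat{V} \geq F(\cdot\,;\alpha_1)$, so your line
\begin{align*}
\hat{V}(x,y) \geq \mathbb{E}[e^{-r\tau^\ast} \hat{V}(X_{\tau^\ast}^x, Y_{\tau^\ast}^y)] \geq \mathbb{E}[e^{-r\tau^\ast} F(X_{\tau^\ast}^x, Y_{\tau^\ast}^y)] = V(x,y)
\end{align*}
silently uses two different functions called $F$. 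The chain is repaired by inserting the additional observation that $F(\cdot\,;\hat{\alpha}_1) \geq F(\cdot\,;\alpha_1)$, which holds because $\hat{\alpha}_1 > \alpha_1$ implies $r - \hat{\alpha}_1 < r - \alpha_1$ and hence $\tfrac{Q_1 x}{r-\hat{\alpha}_1} \geq \tfrac{Q_1 x}{r-\alpha_1}$. The paper makes this step explicit by writing $F(x,y;\alpha_1)$ and noting $F(\cdot\,;\hat{\alpha}_1) \geq F(\cdot\,;\alpha_1)$ before concluding. With that one extra inequality your argument is complete and coincides with the paper's proof; your remarks on justifying Dynkin's formula via the $\mathcal{W}^{2,2}_{\mathrm{loc}}$ regularity and localization are consistent with how the paper handles it.
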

\begin{proof}
We can argue similarly as in Proposition \ref{Proposition Comp Stat Sigma}. Let $\hat{\mathcal{L}}$ be as in (\ref{L infinitesimal generator}) but with drift coefficient $\hat{\alpha}_1$. Proposition \ref{Proposition Variational Inequality} again implies 
$(\mathcal{L}-r)V(x,y) \leq 0$ as well as $(\hat{\mathcal{L}} - r) \hat{V} (x,y) \leq 0$ for a.e. $(x,y) \in \mathbb{R}_{+}^{2}$. Moreover, due to Proposition \ref{Proposition Variational Inequality} we can argue as in the proof of Proposition \ref{Proposition vn in W2p Qn} in order to apply Dynkin's formula and obtain 
\begin{align}\label{Dynkin 2 on V hat}
\mathbb{E}[ e^{-r\tau^{*}} \hat{V}(X_{\tau^{*}}^{x}, Y_{\tau^{*}}^{y} ) ] = \hat{V}(x,y) + \mathbb{E} \Big[ \int_{0}^{\tau^{*}} e^{-rs} (\mathcal{L} -r) \hat{V}(X_{s}^{x}, Y_{s}^{y}) ds \Big],
\end{align}
where $\tau^* = \tau^{*} (x,y)$ denotes the optimal stopping time for the stopping problem with value function $V(x,y)$. It follows that 
\begin{align*}
\mathbb{E}[ e^{-r\tau^{*}} \hat{V}(X_{\tau^{*}}^{x},& Y_{\tau^{*}}^{y} ) ] \\ &= \hat{V}(x,y) + 
\mathbb{E} \Big[ \int_{0}^{\tau^{*}} e^{-rs} \Big( (\hat{\mathcal{L}} -r) \hat{V}(X_{s}^{x}, Y_{s}^{y}) + (\mathcal{L} - \hat{\mathcal{L}} ) \hat{V} (X_{s}^{x}, Y_{s}^{y}) \Big) ds \Big] \\
&\leq \hat{V}(x,y) +  \mathbb{E} \Big[ \int_{0}^{\tau^{*}} e^{-rs}   (\mathcal{L} - \hat{\mathcal{L}} ) \hat{V} (X_{s}^{x}, Y_{s}^{y})   ds \Big] \\ 
&= \hat{V}(x,y) + \mathbb{E} \Big[ \int_{0}^{\tau^{*}} e^{-rs} \Big( (\alpha_1 - \hat{\alpha}_1) x \hat{V}_{x} (X_{s}^{x}, Y_{s}^{y}) \Big) ds \Big] \\
&\leq \hat{V}(x,y),
\end{align*}
for all $(x,y) \in \mathbb{R}_+^2$, upon using that $\hat{V}$ is nondecreasing by Proposition \ref{Proposition Properties of value function}. We now write $F(x,y ; \alpha_1)$ in order to emphasize the dependency of $F$ on the drift coefficient $\alpha_1$. Notice that $ F(\cdot \,; \hat{\alpha}_1) \geq F(\cdot \, ; \alpha_1)$. Repeating arguments as in the proof of Proposition \ref{Proposition Comp Stat Sigma}, we obtain
\begin{align*}
\hat{V}(x,y) \geq \mathbb{E} [ e^{-r \tau^*} \hat{V} (X_{\tau^*}^{x} , Y_{\tau^*}^{y} ) ] \geq \mathbb{E} [ e^{-r \tau^*} F(X_{\tau^*}^{x} , Y_{\tau^*}^{y}; \hat{\alpha}_1 )] \geq \mathbb{E} [ e^{-r \tau^*} F(X_{\tau^*}^{x} , Y_{\tau^*}^{y}; \alpha_1 )] = V(x,y),
\end{align*}
where $\tau^* = \tau^{*} (x,y)$ again denotes the optimal stopping time for the stopping problem with value function $V(x,y)$.
\end{proof}
The results of Propositions \ref{Proposition Comp Stat Sigma} and \ref{Proposition Comp Stat Alpha} remain valid for a change in the coefficients $\alpha_2$ and $\sigma_2$.
\begin{corollary}\label{Corollary Comp stat a2 o2}
The value function $V$ of (\ref{Value Function}) is increasing in $\alpha_2$ as well as $\sigma_2$.
\end{corollary}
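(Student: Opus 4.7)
The plan is to exploit the symmetric structure of Problem \eqref{Value Function} with respect to the two state processes $X$ and $Y$, and simply replicate the arguments of Propositions \ref{Proposition Comp Stat Sigma} and \ref{Proposition Comp Stat Alpha} with the roles of $X$ and $Y$ interchanged. All the ingredients needed for those proofs (namely the regularity $V \in C^{1}(\mathbb{R}^2_+) \cap \mathcal{W}^{2,2}_{\text{loc}}(\mathbb{R}^2_+)$, the variational inequality from Proposition \ref{Proposition Variational Inequality}, the convexity and monotonicity of $V$ from Proposition \ref{Proposition Properties of value function}) are symmetric in the two coordinates, so the same recipe applies.

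For the dependency on $\sigma_2$, I would fix $\hat{\sigma}_2 > \sigma_2$, denote by $\hat{Y}^y$ the corresponding diffusion and by $\hat{V}$ the associated value function, and let $\hat{\mathcal{L}}$ be the infinitesimal generator when the volatility of the second component is $\hat{\sigma}_2$. Let $\tau^* = \tau^*(x,y)$ be optimal for $V(x,y)$. Applying Dynkin's formula to $\hat{V}$ along $(X^x, Y^y)$ (which is justified by $\hat{V} \in \mathcal{W}^{2,2}_{\text{loc}}$ exactly as in the proof of Proposition \ref{Proposition Comp Stat Sigma}) and splitting $(\mathcal{L} - r)\hat{V} = (\hat{\mathcal{L}} - r)\hat{V} + (\mathcal{L} - \hat{\mathcal{L}})\hat{V}$, the first term is $\leq 0$ a.e.\ by the variational inequality for $\hat{V}$, while the correction term equals $\tfrac{1}{2}(\sigma_2^2 - \hat{\sigma}_2^2) y^2 \hat{V}_{yy}$, which is $\leq 0$ because $\hat{V}$ is convex in $y$ (Proposition \ref{Proposition Properties of value function}) and $\hat{\sigma}_2 > \sigma_2$. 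Combining with $\hat{V} \geq F$ yields
\begin{align*}
\hat{V}(x,y) \geq \mathbb{E}[ e^{-r\tau^*} \hat{V}(X^x_{\tau^*}, Y^y_{\tau^*})] \geq \mathbb{E}[ e^{-r\tau^*} F(X^x_{\tau^*}, Y^y_{\tau^*})] = V(x,y).
\end{align*}

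For the dependency on $\alpha_2$, the argument is analogous with $\hat{\alpha}_2 > \alpha_2$: the correction term becomes $(\alpha_2 - \hat{\alpha}_2) y \hat{V}_y \leq 0$, since $\hat{V}$ is nondecreasing in $y$ by Proposition \ref{Proposition Properties of value function}. Here one must further note that $F$ itself depends on $\alpha_2$ through $\delta_2 = r - \alpha_2$: since $\hat{\alpha}_2 > \alpha_2$ gives $\hat{\delta}_2 < \delta_2$ and hence $F(x,y; \hat{\alpha}_2) \geq F(x,y; \alpha_2)$ for all $(x,y) \in \mathbb{R}^2_+$, we can close the chain of inequalities exactly as at the end of the proof of Proposition \ref{Proposition Comp Stat Alpha}.

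The main technical point, rather than an obstacle, is to check that the Dynkin-type identity on which Propositions \ref{Proposition Comp Stat Sigma} and \ref{Proposition Comp Stat Alpha} rely can be invoked for $\hat{V}$ along $(X^x, Y^y)$ stopped at $\tau^*$; but since the regularity $\hat{V} \in C^{1}(\mathbb{R}^2_+) \cap \mathcal{W}^{2,2}_{\text{loc}}(\mathbb{R}^2_+)$ holds uniformly in the coefficients (by Proposition \ref{Proposition Variational Inequality}, whose proof does not hinge on the specific values of $\alpha_i, \sigma_i$), the same localization argument used in Proposition \ref{Proposition vn in W2p Qn} carries over verbatim, and no new estimate is required.
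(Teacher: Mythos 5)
Your proposal is correct and is exactly the argument the paper intends: the corollary is stated with no separate proof, just the remark that Propositions \ref{Proposition Comp Stat Sigma} and \ref{Proposition Comp Stat Alpha} ``remain valid'' for $\alpha_2$ and $\sigma_2$, i.e.\ one repeats those proofs with the roles of $X$ and $Y$ interchanged, using convexity of $\hat{V}$ in $y$ for the $\sigma_2$ case and monotonicity in $y$ together with $F(\cdot\,;\hat{\alpha}_2)\geq F(\cdot\,;\alpha_2)$ for the $\alpha_2$ case. Your additional remark on the validity of the Dynkin-type identity for $\hat{V}$ is consistent with how the paper justifies it in the original propositions.
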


\subsection*{Numerical evaluation}
In the following, we implement a numerical scheme in order to determine the optimal investment boundary $b$ and to investigate its sensitivity with respect to the parameters $r$, $\alpha_i$, and $\sigma_i$, for $i=1,2$. Our scheme relies on the integral equation \eqref{Integral Equation w/o densities} uniquely solved by $b$, and on an application of the Monte-Carlo method. As such, the method can be efficiently employed in problems with dimension larger than two as well, whenever an integral equation for the free boundary can be derived. Alternative numerical methods are clearly possible as well, and in fact employed in the related literature.
For instance, Compernolle et al.\ \cite{Dutch} propose a finite difference scheme for the numerical approximation of the variational inequality associated to the problem's value function. However, as it is typical for analytical methods, this approach suffers from the curse of dimensionality so that an efficient approximation in larger dimensions can become problematic. Lange et al.\ \cite{Lange} propose a scheme in which the decision maker is only permitted to exercise the option at a set of Poisson arrival times that arrive at a finite rate. The project's value is then defined as the ``fixed point'' of an iterative scheme, where each iteration adds a single Poisson arrival time at which the decision maker is able to stop. This procedure defines a monotonically increasing sequence of lower bounds of the project's value, which can thus be found -- if it is bounded -- as the limit of this sequence. While this approach seems suitable also for problems in large dimensions, the exogenous Poisson process, however, adds a restriction, which can be seen as a liquidity constraint.

Recall that $b$ uniquely solves \eqref{Integral Equation w/o densities}. Let $\zeta$ be an auxiliary exponentially distributed random variable with parameter $r$ that is independent of $(W^X,W^Y)$. It follows that (\ref{Integral Equation w/o densities}) can be reformulated as 
\begin{align}\label{Integral equation in Comp Stat}
b(x) &= f(x) + \lambda \mathbb{E} \Bigl[  \int_0^\infty e^{-rt} \Bigl( Q_{1} X_{t}^{x} + Q_{2} Y_{t}^{b(x)} - rI \Bigl) \one_{\{  Y_{t}^{b(x)} \geq b(X_{t}^{x}) \}} dt  \Bigl] \nonumber \\
&= f(x) + \lambda \frac{1}{r} \mathbb{E} \Bigl[  \Bigl( Q_{1} X_{\zeta}^{x} + Q_{2} Y_{\zeta}^{b(x)} - rI \Bigl) \one_{\{  Y_{\zeta}^{b(x)} \geq b(X_{\zeta}^{x}) \}}  \Bigl].
\end{align}
The latter representation is useful, as it allows for an application of Monte-Carlo methods in order to estimate expectations. In the following, we apply an iterative procedure, inspired by the contributions of Christensen and Salminen \cite{ChrisSalm} and Detemple and Kitapbayev \cite{Detemple}. For $(x,y) \in \mathbb{R}_+^2$ and a function $b: \mathbb{R}_+ \to \mathbb{R}_+$ , we define the operator 
\begin{align}\label{Operator Definition}
\Psi (x, y; b) :=  f(x) + \lambda \frac{1}{r} \mathbb{E} \Bigl[  \Bigl( Q_{1} X_{\zeta}^{x} + Q_{2} Y_{\zeta}^{y} - rI \Bigl) \one_{\{  Y_{\zeta}^{y} \geq b(X_{\zeta}^{x}) \}}  \Bigl].
\end{align}
It follows that the equation (\ref{Integral equation in Comp Stat}) rewrites as a fixed point problem
\begin{align}\label{Fixed point problem}
b(x) = \Psi (x, b(x) ; b), \quad x \in \mathbb{R}_+ , 
\end{align}
which we aim to solve by an iterative scheme. In order to do so, we define the sequence of boundaries 
\begin{align}\label{Iterative scheme}
b^{(n)} (x) = \Psi ( x, b^{(n-1)}(x) ; b^{(n-1)} ), \quad x \in \mathbb{R}_+ ,
\end{align}
for $n \geq 1$ and choose the initial boundary $b^{(0)}$ such that $b^{(0)} (0) = y^* , \, b^{(0)}  (x^*) = 0, \, b^{(0)} (x^*)$ is the vertex of a parabola and $b^{(0)} (x) = 0$ for all $x \geq x^*$. Moreover, for a given boundary $b^{(k)}$ we  estimate the expectation in (\ref{Operator Definition}) by 
\begin{align}\label{Monte Carlo}
\frac{1}{N} \sum_{i=1}^{N} \big( Q_1 X_{\zeta_i }^{i , x} + Q_2 Y_{\zeta_i }^{i, b^{(k)} (x) } - rI \big) \one_{ \big\{ Y_{\zeta_i }^{i, b^{(k)} (x) } \geq b^{(k)} (X_{\zeta_i }^{i , x}) \big\} } , 
\end{align}
where $N$ is the total amount of implemented realizations of an exponential random variable with parameter $r$. Consequently, for each $i = 1,...,N$, $\zeta_i$ denotes the value of time, while $X_{\zeta_i}^{i,x}$ and $Y_{\zeta_i}^{i,y}$ are the prices of the two products. Under the described procedure, the scheme (\ref{Iterative scheme}) is then iterated until the variation between steps falls below a predetermined level. 
\begin{remark}
In principle, the suggested numerical approach is suitable for a general class of optimal stopping problems for which an integral equation for the free boundary can be derived (see, e.g., Christensen and Salminen \cite{ChrisSalm}, Cai et al.\ \cite{Cai}). If an educated initial guess regarding the shape of the free boundary is possible, the algorithm seems to converge fast to an approximate solution of the integral equation. Notice also that the suggested method does not necessarily rely on the function $F(x,\,\cdot)$ to be linear or invertible. As a matter of fact, for a general payoff function $F$ one can (trivially) rewrite \eqref{Integral Equation w/o densities} as
\begin{align}
\label{eq:inteqgen}
F(x, b (x)) + b(x) &= b(x) + \mathbb{E}\Big[ \int_0^\infty e^{-rt} K (X_t^x , Y_t^{b(x)}) \one_{ \{ Y_t^{b(x)} \geq b(X_t^x) \} } dt \Big],
\end{align}
and then implement the iterative scheme 
\begin{align*}
b^{(n)}(x) = \Psi (x, b^{(n-1)}(x) , b^{(n-1)} ), \quad x \in \mathbb{R},
\end{align*}
where the operator now takes the slightly different form
\begin{align*}
\Psi (x,y;b) := y - F(x, y) + \frac{1}{r} \mathbb{E} \Big[ K(X_{\zeta}^x , Y_{\zeta}^{y}) \one_{ \{ Y_\zeta^{y} \geq b(X_\zeta^x) \} } \Big].
\end{align*}
Moreover, as we rely on an application of the Monte-Carlo method in order to evaluate the expected value on the right-hand side of \eqref{eq:inteqgen}, the proposed algorithm does not require the processes $X$ and $Y$ to have known densities $\rho_1$ and $\rho_2$, respectively. An Euler approximation of the dynamics of $X$ and $Y$ could indeed be used in order to simulate the random variable $(X_{\zeta}^x , Y_{\zeta}^{b(x)})$.
\end{remark}

\begin{figure}[H]
\begin{center}
  \includegraphics[width=9cm]{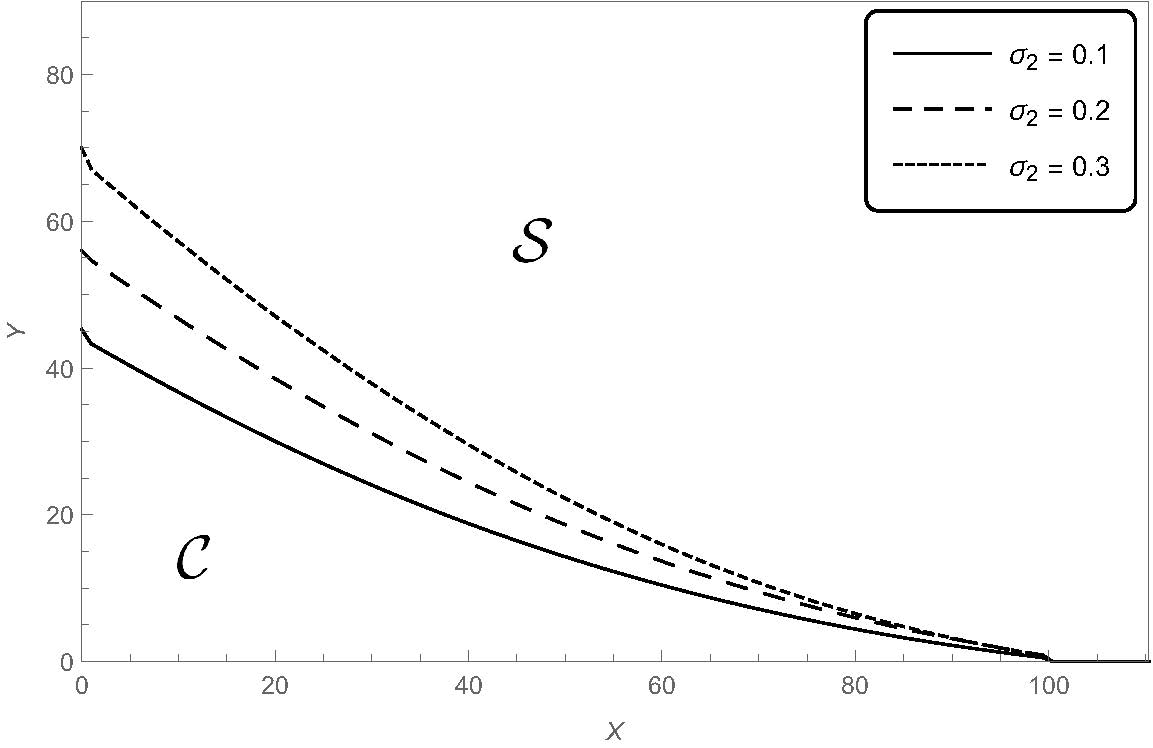}
  \caption{The optimal boundary for different values of $\sigma_2$ and following parameters: $r =0.1 ,\, \alpha_1 = 0.03 , \, \alpha_2 = 0.03 ,\, \sigma_1 = 0.15 ,\, Q_1 =5 ,\, Q_2 = 10 ,\, I = 4000$} \label{Figure: Comp Stat Sigma}
\end{center}
\end{figure} 
In Figure \ref{Figure: Comp Stat Sigma} we can observe the dependency of the optimal boundary $b$ with respect to $\sigma_2$. It is evident, that the boundary increases with $\sigma_2$. A larger  volatility coefficient may be interpreted as a higher level of uncertainty, which is equivalent to a higher price fluctuation in our model. The price thus has larger distortions in the downward direction -- but also upwards. The firm exploits the latter fact and thus waits for higher prices to evolve. Furthermore, this effect also results in a larger expected profit of the firm, which also follows by Proposition \ref{Proposition Comp Stat Sigma}. Notice that the threshold value $x^{*}$ does not change in Figure \ref{Figure: Comp Stat Sigma},  as it depends exclusively on the parameters $Q_{1}, r, \alpha_1$ and $\sigma_1$. A change in the volatility coefficient $\sigma_{2}$ thus has no influence on the investment threshold on the $x$-axis. \\ 
\begin{figure}[H]
\begin{center}
  \includegraphics[width=9cm]{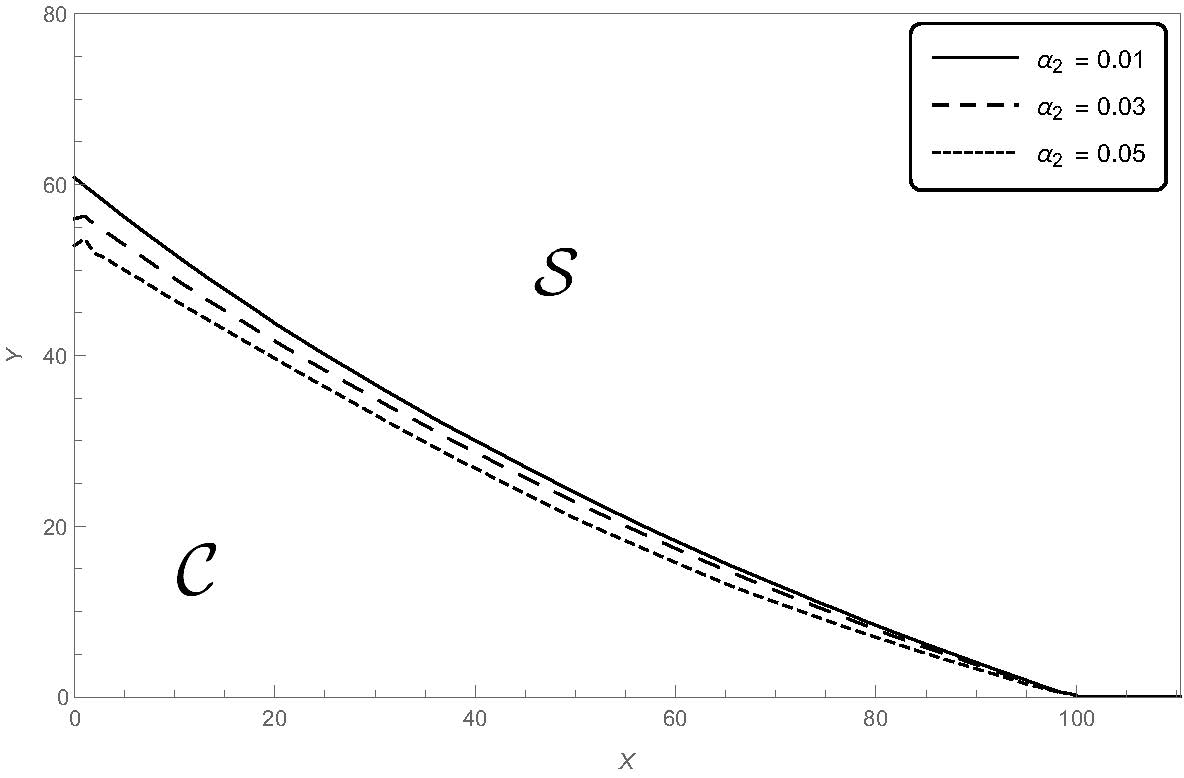}
  \caption{The optimal boundary for different values of $\alpha_2$ and following parameters: $r =0.1 ,\, \alpha_1 = 0.03 ,\,  \sigma_1 = 0.15 ,\, \sigma_2 = 0.2, \, Q_1 =5 ,\, Q_2 = 10 ,\, I = 4000$} \label{Figure: Comp Stat Alpha} 
\end{center}
\end{figure} 
Figure \ref{Figure: Comp Stat Alpha} shows the optimal boundary $b$ for different values of $\alpha_2$. We can see that, differently to what is happening for the volatility, the optimal boundary $b$ is decreasing in $\alpha_2$. A larger drift coefficient $\alpha_2$ implies higher expected prices of the second product on the market and, as a result, the value of the investment increases. To understand the observed effect on the optimal boundary $b$ we notice that the function $F$, which represents the value of exercising the investment option immediately, depends explicitly on $\alpha_2$. Notice that $F$ increases for larger values of $\alpha_2.$ The company thus has an incentive to invest earlier into the production plant and consequently, the boundary decreases. \\
\begin{figure}[h] 
\begin{center}
  \includegraphics[width=9cm]{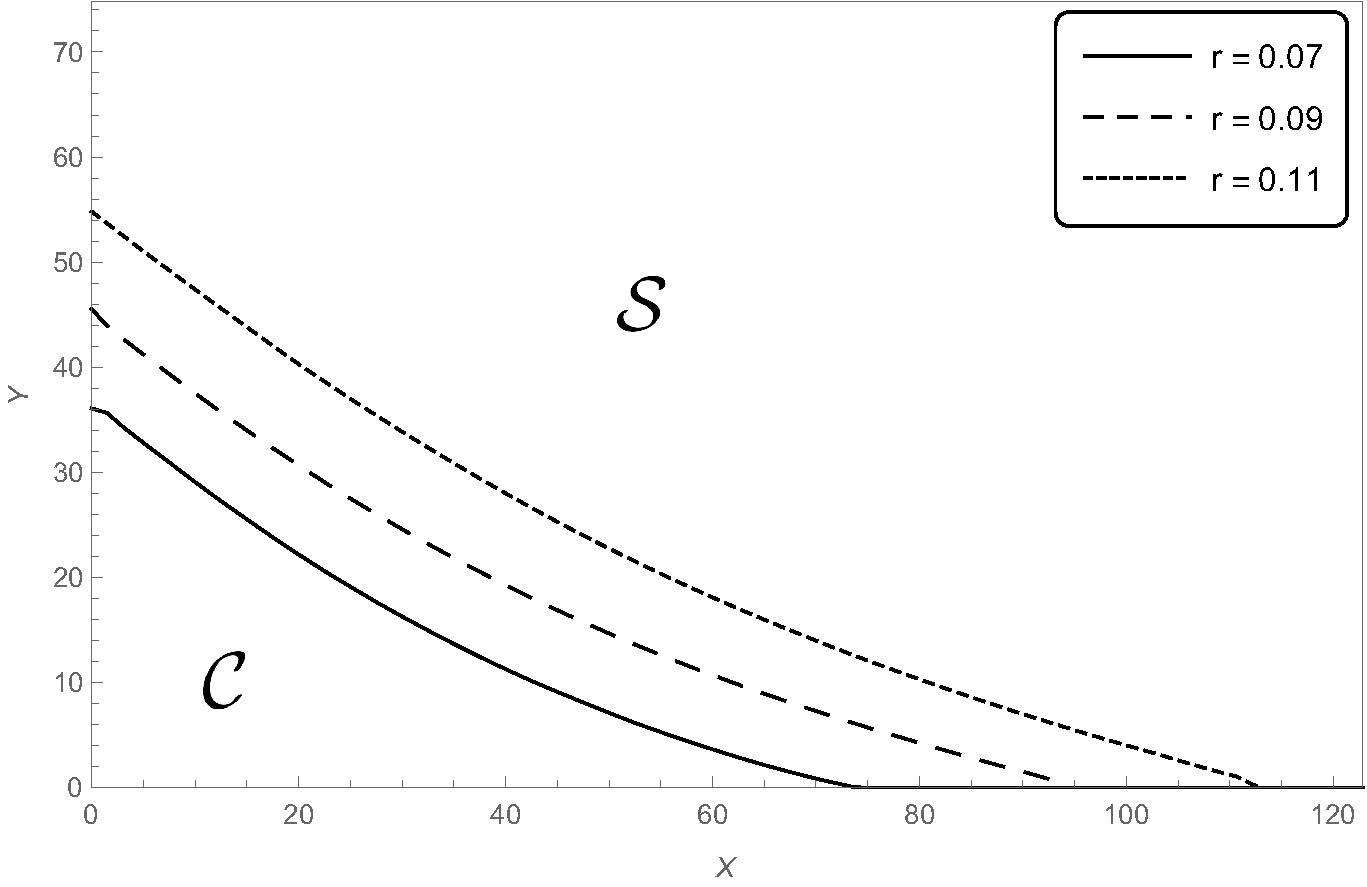}
  \caption{The optimal boundary for different values of $r$ and following parameters: $\alpha_1 = 0.02, \, \alpha_2 = 0.03, \,  \sigma_1 = 0.15 , \, \sigma_2 = 0.15, \, Q_1 =5 ,\, Q_2 = 10 ,\, I = 4000$} \label{Figure: Comp Stat R} 
\end{center} 
\end{figure} \\
In Figure \ref{Figure: Comp Stat R} we can observe the sensitivity of the optimal boundary $b$ with respect to the discount factor $r$. We observe that the boundary $b$ increases in the discount factor. As in the case of a change in the drift coefficient $\alpha_2$, we notice that the function $F$ depends explicitly on $r$. Since $F$ decreases with $r$, the value of exercising the investment immediately decreases, so that the company prefers to delay the investment. Consequently, the boundary $b$ increases. Notice that, differently to what we can observe in Figures \ref{Figure: Comp Stat Sigma} and \ref{Figure: Comp Stat Alpha}, a change in $r$ shifts the investment thresholds on both axes, as in fact $r$ affects both $x^*$ and $y^*$.

\section{Concluding Remarks}
\label{sec:conclusion}

We considered a two-dimensional real options problem of a company facing an irreversible investment decision. By performing a detailed study relying on probabilistic and analytic techniques, we were able to derive a complete characterization of the investment rule. This is triggered by a price-dependent curve that uniquely solves a nonlinear integral equation in a suitable functional class. Such a result has been established by deriving a probabilistic representation of the problem's value function, which, in turn, follows from a suitable approximation procedure whose analysis employs general results from PDE theory. In this respect, the approach that we have used in this work is actually quite flexible and does hold for more general dynamics and payoff functions. Our solution method can be therefore though of as a possible general recipe for the study of multi-dimensional optimal timing problems arising in applications. For instance, real option problems (cf.\ Dixit and Pindyck \cite{Dixit}) in dimension larger than two and with running profits can be considered, as well as financial questions like the so-called exchange-of-baskets-problem (cf.\ Christensen and Salminen \cite{ChrisSalm}, among others), or optimal timing problems in environmental economics (see, e.g., Section 5 in Pindyck \cite{Pindyck-Env}).

Furthermore, differently to other related contributions, we provide a rigorous analytical comparative statics analysis, which is then illustrated through the implementation of a probabilistic numerical scheme for the determination of the optimal investment boundary. Again, the suggested numerical approach does not hinge on the specific problem we are investigating, but it is well suited in order to accommodate other applications as well -- even in larger dimensions.


\appendix
\section{Proof of Proposition \ref{Proposition Properties of value function}} \label{Appendix Proof of properties of value function}
\textit{Lower and Upper Bounds}. Observe that the first lower bound follows by taking the a-priori suboptimal stopping time $\tau = 0$. For the second lower bound, consider the stopping time $\sigma := \sigma (x,y) := \inf \{ t \geq 0 : ~ F(X_{\tau}^{x} , Y_{\tau}^{y} ) > 0 \} $ and notice that $e^{-r \sigma} F(X_{\sigma}^{x} , Y_{\sigma}^{y} ) \one_{ \{ \sigma = \infty\} } = 0$ under the convention (\ref{convention lim e X eins=infty}). It is evident that $\mathbb{P} ( \sigma < + \infty) > 0$, and we thus have 
\begin{align*}
V(x,y) \geq \mathbb{E} [ e^{-r \sigma} F(X_{\sigma}^{x} , Y_{\sigma}^{y}) ] > 0,
\end{align*}
for all $(x,y) \in \mathbb{R}_+$. 
On the other hand, one obtains the upper bound by observing that
\begin{align*}
V(x,y) &= \sup_{\tau \in \mathcal{T}} \mathbb{E} \Bigl[ e^{-r \tau} \Bigl( \frac{Q_{1}}{\delta_{1}} X_{\tau}^{x} + \frac{Q_{2}}{\delta_{2}} Y_{\tau}^{y} - I \Bigl) \Bigl] 
\leq \sup_{\tau \in \mathcal{T}} \Bigl[ \frac{Q_{1}}{\delta_{1}} \mathbb{E} [ e^{-r \tau} X_{\tau}^{x} ] + \frac{Q_{2}}{\delta_{2}}  \mathbb{E} [e^{-r \tau}  Y_{\tau}^{y}] \Bigl] \\
&= \frac{Q_{1}}{\delta_{1}} x + \frac{Q_{2}}{\delta_{2}} y  \leq  C (x +y)
\end{align*} 
upon setting $C := \max \{ \frac{Q_{1}}{\delta_{1}} , \frac{Q_{2}}{\delta_{2}} \} > 0$, and using the uniform integrability stated in Remark \ref{Remark Technicalities}. 
\hspace*{0.2cm}\textit{Monotonicity.} Let $(x,y) \in \mathbb{R}_{+}^{2}$ and $\epsilon > 0$. Consider an $\epsilon$-optimal stopping time $\tau^{\epsilon} := \tau^{\epsilon} (x,y)$ for the optimal investment problem with value function $V(x,y)$. For any $\varphi > 0$, it follows that
\begin{align}
V(x+ \varphi ,y) - V(x,y) + \epsilon &\geq  \mathbb{E}[ e^{-r \tau^{\epsilon}} F(X_{\tau^{\epsilon}}^{x+ \varphi} , Y_{\tau^{\epsilon}}^{y})] - \mathbb{E}[ e^{-r \tau^{\epsilon}} F(X_{\tau^{\epsilon}}^{x} , Y_{\tau^{\epsilon}}^{y})]  
= \frac{Q_{1} \varphi}{\delta_1}\mathbb{E}[ e^{-r \tau^{\epsilon}} X_{\tau^{\epsilon}}^{1}]  \, \geq \, 0, 
\end{align}
where the last inequality holds due to the nonnegativity of $X_{t}^{x}$ and Assumption \ref{Assr>a}. Rearranging terms yields
\begin{align*}
V(x + \varphi, y) + \epsilon \geq V(x,y)
\end{align*}
and $V$ is thus nondecreasing in $x$ by arbitrariness of $\epsilon > 0$. Moreover, by employing similar arguments, we obtain that $V$ is nondecreasing in $y$. \\
\hspace*{0.2cm}\textit{Continuity.} Let $\{ (x_{n} , y_{n}), \,\, n \in \mathbb{N} \} \, \subset \mathbb{R}_{+}^{2}$ be a sequence converging to $(x,y) \in \mathbb{R}_{+}^{2}$. For $\epsilon > 0$, consider an $\epsilon$-optimal stopping time $\tau^{\epsilon} := \tau^{\epsilon} (x,y)$ for the stopping problem with value function $V(x,y)$. It follows that
\begin{align*}
V(x,y) - V(x_{n}, y_{n})  &\leq \epsilon  +  \mathbb{E} [e^{-r \tau^{\epsilon}} ( F(X_{\tau^{\epsilon}}^{x} , Y_{\tau^{\epsilon}}^{y}) \, - \, F(X_{\tau^{\epsilon}}^{x_{n}}, Y_{\tau^{\epsilon}}^{y_{n}}) ) ] \\
&= \epsilon  + (x - x_{n})  \frac{Q_{1}}{\delta_{1}} \mathbb{E} [ e^{-r \tau^{\epsilon}}  X_{\tau^{\epsilon}}^{1} ]  + (y - y_{n}) \frac{Q_{2}}{\delta_{2}}  \mathbb{E} [ e^{-r \tau^{\epsilon}}  Y_{\tau^{\epsilon}}^{1} ].
\end{align*}
and by rearranging terms and letting $n \to \infty$ we obtain 
\begin{align} \label{liminf v(xn,yn) geq v - eps}
\liminf_{n \to \infty} V(x_{n} , y_{n})  \geq  V(x,y) - \epsilon.
\end{align} 
On the other hand, consider an $\epsilon$-optimal stopping time $\tau_{n}^{\epsilon} := \tau^{\epsilon} (x_{n}, y_{n})$ for the stopping problem with value function $V(x_{n}, y_{n})$. 
By noticing that
\begin{align} \label{Expectation e -rt X mit Ito }
\mathbb{E} [ e^{-r \tau} X_{t}^{x} ]  =  x - \mathbb{E} \Bigl[ \int_{0}^{\tau} e^{-rt} \delta_{1} X_{t}^{x} dt \Bigl] 
\quad
\text{and} 
\quad
\mathbb{E} [ e^{-r \tau} Y_{t}^{y} ]  =  y - \mathbb{E} \Bigl[ \int_{0}^{\tau} e^{-rt} \delta_{2} Y_{t}^{y} dt \Bigl].
\end{align}
for all stopping times $\tau \in \mathcal{T}$,  
we obtain
\begin{align*}
V(x_{n}, y_{n}) - V(x,y)  &\leq  \epsilon  + \mathbb{E} [ e^{-r \tau_{n}^{\epsilon}} (F(X_{\tau_{n}^{\epsilon}}^{x_{n}} , Y_{\tau_{n}^{\epsilon}}^{y_{n}} ) - F(X_{\tau_{n}^{\epsilon}}^{x} , Y_{\tau_{n}^{\epsilon}}^{y} ))] 
\\
&=  \epsilon  +  \frac{Q_{1}}{\delta_{1}} \Bigl( (x_{n} - x)  +  \mathbb{E} \Bigl[ \int_{0}^{\tau_{n}^{\epsilon}} e^{-rt} \delta_{1} (X_{t}^{x} - X_{t}^{x_{n}} )dt \Bigl] \Bigl)  \\
& \hspace*{0.67cm} + \frac{Q_{2}}{\delta_{2}} \Bigl( (y_{n} - y)  +  \mathbb{E} \Bigl[ \int_{0}^{\tau_{n}^{\epsilon}} e^{-rt} \delta_{2} (Y_{t}^{y} -  Y_{t}^{y_{n}} ) dt \Bigl] \Bigl) \\
&\leq  \epsilon  +  \frac{Q_{1}}{\delta_{1}} \Bigl((x_{n} -x)  +  \vert x_{n} - x \vert \delta_{1} \mathbb{E} \Bigl[  \int_{0}^{\infty} e^{-rt} X_{t}^{1} dt\Bigl] \Bigl) \\
& \hspace*{0.67cm} +  \frac{Q_{2}}{\delta_{2}} \Bigl( (y_{n} - y)  +  \vert y_{n} - y \vert \delta_{2} \mathbb{E} \Bigl[ \int_{0}^{\infty} e^{-rt} Y_{t}^{1} dt \Bigl] \Bigl)
\end{align*}
and taking the limit as $n \to \infty$ this results to
\begin{align}\label{limsup v leq eps + v}
\limsup_{n \to \infty} V(x_{n} , y_{n})  \leq  \epsilon  +  V(x,y).
\end{align}
The continuity of $V$ then follows from (\ref{liminf v(xn,yn) geq v - eps}) and (\ref{limsup v leq eps + v}) by arbitrariness of $\epsilon > 0$. \\
\hspace*{0.2cm}\textit{Convexity.} Take any  $(x_{1} , y_{1}), (x_{2} , y_{2}) \in \mathbb{R}_{+}^{2}$ and consider a convex combination $(x,y) := \lambda (x_{1} , y_{1}) + (1 - \lambda) (x_{2} , y_{2})$ for $\lambda \in (0,1)$. We obtain 
\begin{align*}
V(x,y) &= \sup_{\tau \in \mathcal{T}} \mathbb{E} \Bigl[ e^{-r \tau} \Bigl( \frac{Q_{1} x X_{\tau}^{1}}{\delta_{1}} + \frac{ Q_{2} y Y_{\tau}^{1}}{\delta_{2}} - I \Bigl) \Bigl] \\
&= \sup_{\tau \in \mathcal{T}} \mathbb{E} \Bigl[ \lambda e^{-r \tau} \Bigl( \frac{Q_{1} x_{1}  X_{\tau}^{1} }{\delta_{1}} + \frac{ Q_{2} y_{1} Y_{\tau}^{1}}{\delta_{2}} - I \Bigl) \\
& \hspace{3.45cm}+ (1 - \lambda) e^{-r \tau} \Bigl( \frac{Q_{1} x_{2}  X_{\tau}^{1} }{\delta_{1}} + \frac{ Q_{2} y_{2} Y_{\tau}^{1}}{\delta_{2}} - I \Bigl) \Bigl] \\[0.15cm]
&\leq \lambda  \sup_{\tau \in \mathcal{T}} \mathbb{E} [ e^{-r \tau} F(X_{\tau}^{x_{1}}, Y_{\tau}^{y_{1}}) ] + (1- \lambda) \sup_{\tau \in \mathcal{T}} \mathbb{E} [ e^{-r \tau} F(X_{\tau}^{x_{2}} , Y_{\tau}^{y_{2}} ) ] \\
&= \lambda V(x_{1} ,y_{1}) + (1 - \lambda) V(x_{2} , y_{2}),
\end{align*}
and the claim follows.
\section{Proof of Theorem \ref{Theorem Probabilistic Repres of v}}\label{Appendix Proof of probabilistic repr.}
We argue by adopting arguments presented in Section 3.1 in De Angelis et al.\,\cite{GF}. At this point it would be convenient for us to study the variational inequality associated with the optimal stopping problem, but since the coefficients of the stochastic differential equations in (\ref{dyn}) are unbounded on the state space $\mathbb{R}_{+}^{2}$, classical results from the PDE literature are not directly applicable. Instead we will approximate the optimal stopping problem (\ref{Value Function}) by a sequence of problems on bounded domains. To this end, define a sequence $\{ Q_{n} , n \in \mathbb{N} \}$ of sets satisfying the following conditions.
\begin{align}
\text{i)}& ~Q_{n} \text{ is open, bounded and connected for all } n \in \mathbb{N} , \label{Qn open, bounded, connected} \\
\text{ii)}& ~ Q_{n} \subset Q_{n+1} \text{ for all } n \in \mathbb{N}, \label{Qn subset Qn+1} \\
\text{iii)}& \lim_{n \to \infty} Q_{n} := \cup_{n \geq 0} Q_{n} = \mathbb{R}_{+}^{2}, \label{lim Qn = R2}\\
\text{iv)}& ~  \partial Q_{n} \in C^{2 + \alpha_{n} } \text{ for some }\alpha_{n} > 0. \label{partial Qn in C 2+alpha}
\end{align}
Note that it is always possible to find such a sequence. Furthermore, we define
\begin{align}\label{Definition sigma n}
\sigma_{n} = \sigma_{n} (x,y) := \inf \{ t \geq 0:~ (X_{t}^{x} , Y_{t}^{y}) \notin Q_{n} \}
\end{align}
and state the following remark. 
\begin{remark}
The condition (\ref{Qn subset Qn+1}) implies that the sequence $\{ \sigma_{n} , n \in \mathbb{N} \}$ is strictly increasing as $n \to \infty$ with limit 
\begin{align}\label{sigma n to sigma infty}
\sigma_{n} \uparrow \sigma_{\infty} = \sigma_{\infty} (x,y) := \inf \{ t \geq 0: ~ (X_{t}^{x}, Y_{t}^{y} ) \notin \mathbb{R}_{+}^{2} \}. 
\end{align}
The boundaries $0$ as well as $+\infty$ of the processes $X_{t}^x$ and $Y_{t}^y$ are natural, meaning they are unattainable whenever the processes are started in the interior of the state space (cf.\,Borodin and Salminen \cite{Borodin}, p.\,136). For the stopping time $\sigma_{\infty}(x,y)$ specified in (\ref{sigma n to sigma infty}) it thus follows that
\begin{align}\label{sigma infty = infty }
\sigma_{\infty} = \sigma_{\infty} (x,y) = \infty \quad \mathbb{P} \text{-a.s.}
\end{align}
for every $(x,y) \in \mathbb{R}_{+}^{2}$. 
\end{remark}
Upon using the stopping time $\sigma_{n}(x,y)$ of (\ref{Definition sigma n}) we localize the optimal stopping problem (\ref{Value Function}) by setting 
\begin{align}\label{Approximating Optimal Stopping problem}
V_{n} (x,y) := \sup_{\tau \in \mathcal{T}} \mathbb{E} [ e^{-r (\tau \wedge \sigma_{n}) } F(X_{ \tau \wedge \sigma_{n} }^{x} , Y_{\tau \wedge \sigma_{n}}^{y} ) ],\quad (x,y) \in \mathbb{R}_{+}^{2}.
\end{align}
 The continuation and stopping regions of this stopping problem are given by
\begin{align}
\mathcal{C}_{n} &:= \{ (x,y) \in \mathbb{R}_{+}^{2}: ~ V_{n} (x,y) > F(x,y) \} 
\hspace*{1cm}
\mathcal{S}_{n} := \{ (x,y) \in \mathbb{R}_{+}^{2}:~ V_{n}(x,y) = F(x,y) \}, \label{Continuation and Stopping region Sn}
\end{align}
respectively. Furthermore, we note that the second-order elliptic differential operator associated with the two-dimensional diffusion $ (X_{t}^{x} , Y_{t}^{y})$ is given by $\mathcal{L} := \mathcal{L}_{X} + \mathcal{L}_{Y}$, where
\begin{align}\label{L infinitesimal generator}
\mathcal{L}_{X} := \frac{1}{2} \sigma_{1}^{2} x^2 \frac{\partial^2}{\partial x^2} + \alpha_{1} x \frac{\partial}{\partial x} \hspace{1cm} \mathcal{L}_{Y} := \frac{1}{2} \sigma_{2}^{2} y^2 \frac{\partial^2}{\partial y^2} + \alpha_{2} y \frac{\partial}{\partial y}, \hspace{2cm}
\end{align}
since we are dealing with two \textit{uncorrelated} geometric Brownian motions (cf.\,Borodin and Salminen \cite{Borodin}, p.\,136). Moreover, by employing standard arguments (cf.\,Peskir and Shiryaev \cite{Peskir}, p.\,49), we can associate the function $V_n \mid_{Q_n}$ to the variational inequality
\begin{align}\label{Variational Inequality for vn}
\max \{ ( \mathcal{L} - r) u(x,y), - u(x,y) + F(x,y) \} = 0, \quad (x,y) \in Q_{n}
\end{align}
with the boundary condition
\begin{align}\label{Boundary condition for vn}
u(x,y) = F(x,y), \quad (x,y) \in \partial Q_{n}.
\end{align}
The following Proposition verifies that the function of (\ref{Approximating Optimal Stopping problem}) indeed solves the system of equations stated in (\ref{Variational Inequality for vn}) and (\ref{Boundary condition for vn}) above.
\begin{proposition}\label{Proposition vn in W2p Qn}
The function $V_{n}$ of (\ref{Approximating Optimal Stopping problem}) uniquely solves the variational inequality (\ref{Variational Inequality for vn}) a.e. in $Q_{n}$ with boundary condition (\ref{Boundary condition for vn}) and we have $V_{n} \in \mathcal{W}^{2,p} (Q_{n})$ for $1 \leq p < \infty$, where $\mathcal{W}^{2,p} (Q_{n})$ denotes the Sobolev space of order 2 (cf.\,Brezis \cite{Brezis}, Chapter 8.2). Furthermore, the stopping time 
\begin{align}\label{Optimal Stopping time tau n}
\tau_{n}^{*} := \inf \{ t \geq 0: ~ (X_{t}^{x}, Y_{t}^{y}) \notin \mathcal{C}_{n} \}, \quad (x,y) \in \mathbb{R}_{+}^{2}
\end{align}
is optimal for the problem (\ref{Approximating Optimal Stopping problem}). 
\end{proposition}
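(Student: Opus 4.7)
The plan is to proceed in three steps: (i) obtain a $\mathcal{W}^{2,p}$ solution $u_n$ to the variational inequality on $Q_n$ from classical PDE theory; (ii) identify $u_n$ with $V_n$ by a verification argument; (iii) extract optimality of $\tau_n^*$ from the verification.

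For step (i), I would invoke classical results on obstacle problems for uniformly elliptic operators with smooth coefficients in smooth bounded domains (e.g.\ Bensoussan and Lions, or Friedman, \emph{Variational Principles and Free Boundary Problems}). Conditions (\ref{Qn open, bounded, connected})--(\ref{partial Qn in C 2+alpha}) guarantee that $Q_n$ is a bounded domain with $C^{2+\alpha_n}$ boundary; moreover, since such $Q_n$ can (and should) be chosen to lie in a compact subset of $(0,\infty)^2$, the coefficients $\frac{1}{2}\sigma_1^2 x^2$, $\frac{1}{2}\sigma_2^2 y^2$, $\alpha_1 x$, $\alpha_2 y$ are smooth and bounded on $\overline{Q_n}$ and the operator $\mathcal{L}-r$ is uniformly elliptic there. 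The obstacle $F$ is affine (hence smooth) and is used both as the obstacle and as the Dirichlet datum, which is the compatible setting for the classical theory. This yields a unique $u_n\in \mathcal{W}^{2,p}(Q_n)$ (for every $1\le p<\infty$) satisfying (\ref{Variational Inequality for vn}) a.e.\ in $Q_n$ together with (\ref{Boundary condition for vn}).

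For step (ii), I would apply a generalized Dynkin formula to $e^{-r t}u_n(X^x_t,Y^y_t)$ stopped at $\tau\wedge \sigma_n$, for an arbitrary $\tau\in\mathcal{T}$. Since $u_n\in \mathcal{W}^{2,p}(Q_n)$ with $p>2$ and the diffusion $(X,Y)$ has a smooth, non-degenerate transition density on $Q_n$, such a formula is valid by a Krylov-type extension of It\^o's formula (cf.\ Krylov, \emph{Controlled Diffusion Processes}, Section~2.10). This gives
\begin{equation*}
u_n(x,y)=\mathbb{E}\bigl[e^{-r(\tau\wedge\sigma_n)}u_n(X^x_{\tau\wedge\sigma_n},Y^y_{\tau\wedge\sigma_n})\bigr]-\mathbb{E}\Bigl[\int_0^{\tau\wedge\sigma_n}e^{-rs}(\mathcal{L}-r)u_n(X^x_s,Y^y_s)\,ds\Bigr].
\end{equation*}
Using $u_n\geq F$ in $Q_n$, $u_n=F$ on $\partial Q_n$, and $(\mathcal{L}-r)u_n\leq 0$ a.e.\ from (\ref{Variational Inequality for vn}), I obtain $u_n(x,y)\geq \mathbb{E}[e^{-r(\tau\wedge\sigma_n)}F(X^x_{\tau\wedge\sigma_n},Y^y_{\tau\wedge\sigma_n})]$; taking the supremum over $\tau$ yields $u_n\geq V_n$.

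For step (iii), I would take $\tau=\tau_n^*$ in the same formula. On $\{s<\tau_n^*\}$ we are in $\mathcal{C}_n$, where $u_n>F$ and the variational inequality forces $(\mathcal{L}-r)u_n=0$ a.e., so the integral term vanishes. At time $\tau_n^*\wedge\sigma_n$ the process is either in $\mathcal{S}_n$ (where $u_n=F$) or on $\partial Q_n$ (where $u_n=F$ by (\ref{Boundary condition for vn})), giving $u_n(x,y)=\mathbb{E}[e^{-r(\tau_n^*\wedge\sigma_n)}F(X^x_{\tau_n^*\wedge\sigma_n},Y^y_{\tau_n^*\wedge\sigma_n})]\leq V_n(x,y)$. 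Combined with step (ii) this proves $u_n=V_n$ and the optimality of $\tau_n^*$. The main obstacle I anticipate is the careful justification of the generalized Dynkin formula for $\mathcal{W}^{2,p}$ functions: this is typically handled by mollifying $u_n$ into $C^2$ approximations, applying the classical It\^o formula, and passing to the limit using uniform $L^p$ bounds on $(\mathcal{L}-r)u_n^\varepsilon$ together with the non-degeneracy of the diffusion on $Q_n$; one also has to ensure the $Q_n$ stay away from the degenerate boundary $\{xy=0\}$ so that uniform ellipticity holds at each step, which is consistent with (\ref{lim Qn = R2}).
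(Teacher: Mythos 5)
Your proposal is correct and follows essentially the same route as the paper: existence and uniqueness of a $\mathcal{W}^{2,p}(Q_n)$ solution to the obstacle problem from Friedman's classical theory, identification with $V_n$ via a generalized Dynkin formula, and optimality of $\tau_n^*$ by choosing the first exit time from $\mathcal{C}_n$ so that the integral term vanishes. The only (immaterial) difference is that where you cite a Krylov-type It\^o formula, the paper carries out the approximation explicitly via Meyers--Serrin smooth approximants, H\"older's inequality and the $L^q$-bound on the transition densities of Lemma \ref{Lemma Densities}; your closing remark about mollification and non-degeneracy on $Q_n$ away from $\{xy=0\}$ is exactly the step the paper makes rigorous.
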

\begin{proof}
$(i)$ The existence and uniqueness of a function $u_{n} \in \mathcal{W}^{2,p}(Q_{n})$ for all $p \in [1, \infty) $ solving the variational inequality (\ref{Variational Inequality for vn}) with boundary condition (\ref{Boundary condition for vn}) is guaranteed by the results derived in Friedman \cite{Friedman}, since the coefficients of the dynamics (\ref{dyn}) are continuous as well as bounded on $\bar{Q}_{n}$ and we have made sufficient assumptions regarding the set $Q_{n}$ and their boundary in (\ref{Qn open, bounded, connected}) and (\ref{partial Qn in C 2+alpha}) (cf.\,Friedman \cite{Friedman}, Theorem 3.2 \& 3.4). Furthermore, we are able to continuously extend the function $u_{n}$ outside of $Q_{n}$ by setting
\begin{align}\label{Extension of un outside Qn}
u_{n} (x,y) = F(x,y), \quad (x,y) \in \mathbb{R}_{+}^{2} \setminus Q_{n}.
\end{align}
In the following we refer to this extension and denote it, with a slight abuse of notation, again by $u_{n}$. \\
\hspace*{0.3cm}
$(ii)$ It is therefore left to check that the value function $V_{n}$ of (\ref{Approximating Optimal Stopping problem}) in fact coincides with the unique solution $u_{n}$ introduced in point $(i)$ over $\mathbb{R}_{+}^{2}$ as well as that the stopping time (\ref{Optimal Stopping time tau n}) is optimal for the problem stated in (\ref{Approximating Optimal Stopping problem}). We first treat the case of  $(x,y) \in \mathbb{R}_{+}^{2} \setminus Q_{n}$, for which the definition of $\sigma_{n}$ in (\ref{Definition sigma n}) evidently yields
$\sigma_{n} (x,y) = 0$. Together with (\ref{Approximating Optimal Stopping problem}) it consequently follows 
\begin{align*}
V_{n} (x,y) = \sup_{\tau \in \mathcal{T} } \mathbb{E} [ e^{-r (\tau \wedge \sigma_{n} )} F(X_{\tau \wedge \sigma_{n}}^{x}, Y_{\tau \wedge \sigma_{n}}^{y}) ] 
= F(x,y). 
\end{align*}
Upon using (\ref{Extension of un outside Qn}), we therefore obtain 
\begin{align*}
V_{n} (x,y) = F(x,y) = u_{n} (x,y), 
\end{align*}
and the claim follows for any $(x,y) \in \mathbb{R}_{+}^{2} \setminus Q_{n}$. 
We now let $(x,y) \in Q_{n}$. In this case we obtain the proof by showing both inequalities $V_{n} \geq u_{n}$ as well as $V_{n} \leq u_{n}$. The theorem of Meyers-Serrin (cf.\,Gilbarg and Trudinger \cite{Gilbarg}, Theorem 7.9) implies there exists a sequence of smooth functions $\{ u_{n}^{k} ( \cdot) ,~  k \in \mathbb{N} \} \subset C^{\infty} (Q_{n})$ such that 
\begin{align}\label{unk converges to un}
u_{n}^{k} \,   \rightarrow \, u_{n}, \quad \text{as} ~ k \to \infty
\end{align}
in $\mathcal{W}^{2,p} (Q_{n})$ for $1 \leq p < \infty$. Since the function $u_{n} $ is continuous and $\bar{Q}_{n}$ is compact, the convergence in (\ref{unk converges to un}) is actually uniform on $\bar{Q}_{n}$ (cf.\,Gilbarg and Trudinger \cite{Gilbarg}, Lemma 7.1). We have enough regularity for the functions $u_{n}^{k} $ to apply Dynkin's formula and obtain 
\begin{align}\label{unk Dynkins formula}
u_{n}^{k} (x,y) = \mathbb{E} \Bigl[ e^{-r (\tau \wedge \sigma_{n}) } u_{n}^{k} (X_{\tau \wedge \sigma_{n} }^{x}, Y_{\tau \wedge \sigma_{n}}^{y} ) - \int_{0}^{\tau \wedge \sigma_{n} } e^{-rt} ( \mathcal{L} - r) u_{n}^{k} (X_{t}^{x} , Y_{t}^{y} ) dt \Bigl]
\end{align}
for any bounded stopping time $\tau$. Using standard localization arguments as well as (\ref{e -r tau f(X,Y) on tau=infty}), one can check that this equality holds true for all stopping times $\tau \in \mathcal{T}$. \\ 
In the next step we will study this equation in the limit as $k \to \infty$. The term on the left-hand side of (\ref{unk Dynkins formula}) converges pointwisely by (\ref{unk converges to un}), whereas the uniform convergence on $\bar{Q}_{n}$ guarantees
\begin{align*}
\lim_{k \to \infty} \mathbb{E} \Bigl[ e^{-r (\tau \wedge \sigma_{n}) } u_{n}^{k} (X_{\tau \wedge \sigma_{n} }^{x}, Y_{\tau \wedge \sigma_{n}}^{y} )\Bigl] \, = \, \mathbb{E} \Bigl[ e^{-r (\tau \wedge \sigma_{n}) } u_{n} (X_{\tau \wedge \sigma_{n} }^{x}, Y_{\tau \wedge \sigma_{n}}^{y} )\Bigl].
\end{align*}
It is therefore left to check that the integral term in (\ref{unk Dynkins formula}) converges, and we need to show 
\begin{align*}
\lim_{k \to \infty} \mathbb{E} \Bigl[ \int_{0}^{\tau \wedge \sigma_{n} } e^{-rt} ( \mathcal{L} - r) u_{n}^{k} (X_{t}^{x} , Y_{t}^{y} ) dt \Bigl]  =  \mathbb{E} \Bigl[ \int_{0}^{\tau \wedge \sigma_{n} } e^{-rt} ( \mathcal{L} - r) u_{n} (X_{t}^{x} , Y_{t}^{y} ) dt \Bigl].
\end{align*} 
We recall Lemma \ref{Lemma Densities} $(ii)$ and take $ q > 1$ suitable for $\bar{Q}_{n}$ and $p$ such that $\frac{1}{q} + \frac{1}{p} =1$. For a multi-index $\alpha$ we specify the following norm on the Sobolev space $\mathcal{W}^{2,p} (Q_{n})$
\begin{align*}
\vert \vert u \vert \vert_{\mathcal{W}^{2,p} (Q_{n})} := \sum_{ \vert \alpha \vert \leq 2 } \vert \vert D^{\alpha} u \vert \vert_{L^{p}(Q_{n})} = \sum_{ \vert \alpha \vert \leq 2 } \Bigl( \int_{Q_{n}} \vert D^{\alpha} u (\xi , \zeta) \vert^{p} d \xi d \zeta \Bigl)^{\frac{1}{p}} 
\end{align*} 
and note that $\mathcal{W}^{2,p} (Q_{n})$ is a Banach space equipped with this norm. 
Hölder's inequality for $p$ and $q$ as defined above then yields 
\begin{align}\label{Expectation Hoelderinequality}
\Bigl\vert \mathbb{E}  \Bigl[ \int_{0}^{\tau  \wedge \sigma_{n} } e^{-rt} ( \mathcal{L} - r)(u_{n}^{k} - u_{n}) (X_{t}^{x} , Y_{t}^{y} ) dt \Bigl] \Bigl\vert 
\leq C \vert \vert u_{n}^{k} - u_{n} \vert \vert_{\mathcal{W}^{2,p}(Q_{n}) },
\end{align}
where $C>0$ denotes a positive constant. The right-hand side of (\ref{Expectation Hoelderinequality}) vanishes as $k \to \infty$ since (\ref{unk converges to un}) holds true and we finally obtain
\begin{align}\label{un Dynkin}
u_{n} (x,y) = \mathbb{E} \Bigl[ e^{-r (\tau \wedge \sigma_{n}) } u_{n} (X_{\tau \wedge \sigma_{n} }^{x}, Y_{\tau \wedge \sigma_{n}}^{y} ) - \int_{0}^{\tau \wedge \sigma_{n} } e^{-rt} ( \mathcal{L} - r) u_{n} (X_{t}^{x} , Y_{t}^{y} ) dt \Bigl], 
\end{align}
for all $\tau \in \mathcal{T}$. As this is well-defined, since $(\mathcal{L} -r) u_{n}$ is defined up to a null set of Lebesgue measure, the variational inequality (\ref{Variational Inequality for vn}) implies on the one hand that
\begin{align*}
u_{n} (x,y) \geq \mathbb{E} [ e^{-r (\tau \wedge \sigma_{n}) } u_{n} (X_{\tau \wedge \sigma_{n} }^{x}, Y_{\tau \wedge \sigma_{n}}^{y} ) ],\quad \forall \tau \in \mathcal{T}
\end{align*}
and furthermore
\begin{align*}
u_{n} (x,y ) \geq \mathbb{E} [ e^{-r (\tau \wedge \sigma_{n})}  F(X_{\tau \wedge \sigma_{n}}^{x} , Y_{\tau \wedge \sigma_{n} }^{y}) ], \quad \forall \tau \in \mathcal{T}.
\end{align*}
By the arbitrariness of $\tau \in \mathcal{T}$ we have 
\begin{align}\label{un geq vn}
u_{n} (x,y) \geq \sup_{\tau \in \mathcal{T}} \mathbb{E} [ e^{-r (\tau \wedge \sigma_{n})} F(X_{\tau \wedge \sigma_{n}}^{x} , Y_{\tau \wedge \sigma_{n} }^{y}) ] 
= v_{n}(x,y)  
\end{align}
and this concludes the first part of the proof. To obtain the reverse, we consider the stopping time 
\begin{align*}
\hat{\tau} = \hat{\tau} (x,y) := \inf \{ t \geq 0: ~ u_{n} (X_{t}^{x}, Y_{t}^{y}) = F(X_{t}^{x}, Y_{t}^{y} ) \}
\end{align*}
and recall $ u_{n} = F$ on $\mathbb{R}_{+}^{2} \setminus {Q}_{n}$. Since $u_{n}$ is continuous on the bounded set $\bar{Q}_{n}$, it is bounded as well and with the convention (\ref{convention lim e X eins=infty}) we have
\begin{align*}
e^{-r (\hat{\tau} \wedge \sigma_{n} )} u_{n} (X_{\hat{\tau} \wedge \sigma_{n} }^{x}, Y_{\hat{\tau} \wedge \sigma_{n} }^{y}) \one_{ \{\hat{\tau} \wedge \sigma_{n} = \infty \} } = \limsup_{t \to \infty} e^{-rt} u_{n} (X_{t}^{x} , Y_{t}^{y}) = 0.
\end{align*}
Consequently, we obtain
\begin{align*}
e^{r (\hat{\tau} \wedge \sigma_{n} ) } u_{n} (X_{\hat{\tau} \wedge \sigma_{n} }^{x}, Y_{\hat{\tau} \wedge \sigma_{n} }^{y} ) &= e^{r (\hat{\tau} \wedge \sigma_{n} ) } u_{n} (X_{\hat{\tau} \wedge \sigma_{n} }^{x}, Y_{\hat{\tau} \wedge \sigma_{n} }^{y} ) \one_{ \{ \hat{\tau} \wedge \sigma_{n} < \infty \} } \\
&= e^{-r (\hat{\tau} \wedge \sigma_{n}) } F(X_{\hat{\tau} \wedge \sigma_{n}}^{x}, Y_{\hat{\tau} \wedge \sigma_{n}}^{y}) \one_{ \{\hat{\tau} \wedge \sigma_{n} < \infty\}} \\
&=  e^{-r (\hat{\tau} \wedge \sigma_{n}) } F(X_{\hat{\tau} \wedge \sigma_{n}}^{x}, Y_{\hat{\tau} \wedge \sigma_{n}}^{y}),
\end{align*}
where the last equality follows from Remark \ref{Remark Technicalities}. Hence, upon using the fact that $( \mathcal{L} -r) u_{n} = 0$ on the set $ \{(x,y) \in Q_{n}: ~ u_{n} (x,y) > F(x,y) \}$ by (\ref{Variational Inequality for vn}), we finally have 
\begin{align}\label{un leq vn}
u_{n} (x,y) &= \mathbb{E} \Bigl[ e^{-r( \hat{\tau} \wedge \sigma_{n}) } u_{n} (X_{ \hat{\tau} \wedge \sigma_{n} }^{x}, Y_{ \hat{\tau} \wedge \sigma_{n} }^{y} ) - \int_{0}^{ \hat{\tau} \wedge \sigma_{n} } e^{-rt} (\mathcal{L} - r)u_{n} (X_{t}^{x} , Y_{t}^{y} )dt \Bigl] \nonumber \\
&= \mathbb{E}[ e^{-r ( \hat{\tau} \wedge \sigma_{n})} F(X_{ \hat{\tau} \wedge \sigma_{n} }^{x} , Y_{ \hat{\tau} \wedge \sigma_{n} }^{y} ) ] 
\leq \sup_{\tau \in \mathcal{T}} \mathbb{E} [ e^{-r (\tau \wedge \sigma_{n} )} F(X_{\tau \wedge \sigma_{n} }^{x}, Y_{\tau \wedge \sigma_{n}}^{y} ) ] \nonumber \\
&= V_{n} (x,y).
\end{align}
Combining (\ref{un geq vn}) and (\ref{un leq vn}) we conclude that $u_{n} = V_{n}$ on $\mathbb{R}_{+}^{2}$. Moreover, as the inequality in (\ref{un leq vn}) becomes an equality, the stopping time $\hat{\tau}$ is optimal for the problem (\ref{Approximating Optimal Stopping problem}) and coincides with the stopping time $\tau_{n}^{*}$ defined in (\ref{Optimal Stopping time tau n}). 
\end{proof}
\begin{remark}\label{Remark unique C1 repr}
In the following, we will refer to the unique $C^{1}$ representative of the elements in the class $\mathcal{W}^{2,p} (Q_{n})$, as the Sobolev inclusions (cf.\,Brezis \cite{Brezis}, Corollary 9.13 \& 9.15) guarantee a continuous embedding of $\mathcal{W}^{2,p} (Q_{n})$ into $C^{1} (\bar{Q}_{n})$ for $p \in (2,\infty)$, and the boundary condition (\ref{Boundary condition for vn}) is thus well posed for such functions. 
\end{remark}
In the next Proposition we derive a probabilistic representation for the value function of (\ref{Approximating Optimal Stopping problem}). The proof follows, apart from a small technicality, by our results stated before in this section. 
\begin{proposition} \label{Proposition Probabilistic Repr of vn}
The function $V_n$ of (\ref{Approximating Optimal Stopping problem}) admits the representation
\begin{align}\label{Probabilistic representation of vn}
V_{n} (x,y) = \mathbb{E} \Bigl[ e^{-r \sigma_{n} } F( X_{\sigma_{n}}^{x} , Y_{\sigma_{n}}^{y} ) - \int_{0}^{\sigma_{n}} e^{-rt} (r I - Q_{1} X_{t}^{x} - Q_{2} Y_{t}^{y}) \one_{ \{ (X_{t}^{x} , Y_{t}^{y}) \in \mathcal{S} \} } dt \Bigl]
\end{align}
for all $(x,y) \in \mathbb{R}_{+}^{2}$. 
\end{proposition}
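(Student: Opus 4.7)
The plan is to specialize the integration-by-parts identity (\ref{un Dynkin}), already proved in Proposition \ref{Proposition vn in W2p Qn}, to the choice $\tau = \sigma_n$ and then to convert the integrand using the variational inequality (\ref{Variational Inequality for vn}) together with the affine structure of $F$. As a preliminary observation, $\sigma_n$ is $\mathbb{P}$-a.s.\ finite since $Q_n$ is bounded while each geometric Brownian motion exits any bounded set in finite time; combined with the uniform integrability recalled in Remark \ref{Remark Technicalities}, this allows one either to take $\tau = \sigma_n$ directly in (\ref{un Dynkin}) or to pass to the limit $\tau \uparrow \infty$ in $\tau \wedge \sigma_n$ by dominated convergence. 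Using the identification $u_n = V_n$ on $\mathbb{R}_+^2$, this step yields
\begin{equation*}
V_n(x,y) = \mathbb{E}\Bigl[ e^{-r\sigma_n} V_n(X_{\sigma_n}^x, Y_{\sigma_n}^y) - \int_0^{\sigma_n} e^{-rt}(\mathcal{L} - r) V_n(X_t^x, Y_t^y)\, dt \Bigr].
\end{equation*}
The terminal term is then handled by the boundary condition (\ref{Boundary condition for vn}), which forces $V_n(X_{\sigma_n}^x, Y_{\sigma_n}^y) = F(X_{\sigma_n}^x, Y_{\sigma_n}^y)$ almost surely.

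The next step is a pointwise a.e.\ evaluation of $(\mathcal{L} - r)V_n$. On the continuation region $\mathcal{C}_n \cap Q_n$, the variational inequality (\ref{Variational Inequality for vn}) gives $(\mathcal{L} - r)V_n = 0$. On the stopping region $\mathcal{S}_n \cap Q_n$ one has $V_n = F$ pointwise, and the classical fact that weak second derivatives of two $\mathcal{W}^{2,p}$ functions coincide a.e.\ on the set where the functions themselves coincide (cf.\ Gilbarg and Trudinger \cite{Gilbarg}) yields $(\mathcal{L} - r)V_n = (\mathcal{L} - r)F$ a.e.\ on $\mathcal{S}_n \cap Q_n$. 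A direct calculation using the explicit form of $F$ together with $\delta_i = r - \alpha_i$ produces
\begin{equation*}
(\mathcal{L} - r) F(x,y) = \frac{Q_1 x (\alpha_1 - r)}{\delta_1} + \frac{Q_2 y (\alpha_2 - r)}{\delta_2} + rI = rI - Q_1 x - Q_2 y,
\end{equation*}
which is exactly the integrand appearing in (\ref{Probabilistic representation of vn}). Collecting these contributions inside the expectation produces the claimed representation, with the indicator initially over $\mathcal{S}_n \cap Q_n$.

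The main obstacle, and the one likely alluded to as the ``small technicality'', is to replace the indicator $\one_{\{(X_t^x, Y_t^y) \in \mathcal{S}_n\}}$ by $\one_{\{(X_t^x, Y_t^y) \in \mathcal{S}\}}$. Since the integrand in (\ref{Probabilistic representation of vn}) is supported on $\{t < \sigma_n\}$ (so the process lies in $Q_n$) and the transition densities $\rho_1, \rho_2$ of Lemma \ref{Lemma Densities} are absolutely continuous, it suffices to show that $\mathcal{S}_n \cap Q_n$ and $\mathcal{S} \cap Q_n$ coincide up to a Lebesgue-null set. One inclusion, $\mathcal{S} \subseteq \mathcal{S}_n$, is immediate from $V_n \leq V$. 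The reverse inclusion on $Q_n$ I would obtain by exploiting the monotone convergence $V_n \uparrow V$ (which follows from the monotonicity of $\{\sigma_n\}$ combined with the continuity of $V$ established in Proposition \ref{Proposition Properties of value function}), so that on the open set $\{V > F\} \cap Q_n$ one has $V_n > F$ for all sufficiently large $n$, hence the respective continuation and stopping regions agree on $Q_n$. Everything else is straightforward manipulation of the already-built machinery of Proposition \ref{Proposition vn in W2p Qn}.
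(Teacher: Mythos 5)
Your first three steps are essentially the paper's own argument: the paper likewise obtains
\begin{align*}
V_{n} (x,y) = \mathbb{E} \Bigl[ e^{-r \sigma_{n}} F(X_{\sigma_{n}}^{x} , Y_{\sigma_{n}}^{y} ) - \int_{0}^{\sigma_{n}} e^{-rt} (\mathcal{L} - r)V_{n} (X_{t}^{x} , Y_{t}^{y} ) dt \Bigl]
\end{align*}
by the Dynkin/mollification argument of Proposition \ref{Proposition vn in W2p Qn} (you invoke \eqref{un Dynkin} directly with $\tau\geq\sigma_n$, which amounts to the same thing), uses the boundary condition \eqref{Boundary condition for vn}, and then identifies $(\mathcal{L}-r)V_n=(\mathcal{L}-r)F\,\one_{\{(x,y)\in\mathcal{S}_n\}}=(rI-Q_1x-Q_2y)\one_{\{(x,y)\in\mathcal{S}_n\}}$ a.e.\ in $Q_n$ via exactly the ``equal Sobolev functions have equal weak derivatives a.e.\ on the coincidence set'' argument you describe (Lemma B.1 in De Angelis et al.\ \cite{GF}). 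One small caveat: $\sigma_n$ need not be $\mathbb{P}$-a.s.\ finite (if $\alpha_i<\sigma_i^2/2$ the process can converge to the origin without leaving a $Q_n$ containing it), but the conventions \eqref{convention lim e X eins=infty}--\eqref{e -r tau f(X,Y) on tau=infty} make the terminal term vanish on $\{\sigma_n=\infty\}$, so this is harmless.

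The genuine problem is your last paragraph. The claim that $\mathcal{S}_n\cap Q_n$ and $\mathcal{S}\cap Q_n$ coincide up to a Lebesgue-null set does not follow from the pointwise convergence $V_n\uparrow V$: for a point $(x,y)\in\mathcal{C}\cap Q_n$ that convergence only yields $V_m(x,y)>F(x,y)$ for all $m\geq N(x,y)$ with $N$ depending on the point, which says nothing about the \emph{fixed} index $n$ appearing in the proposition. In fact the inclusion $\mathcal{S}_n\cap Q_n\subseteq\mathcal{S}$ is false in general: the localized value $V_n$ is pinned to $F$ on $\partial Q_n$, so $\mathcal{S}_n$ typically contains an open set of points of $\mathcal{C}$ near $\partial Q_n$, and on $\mathcal{S}_n\setminus\mathcal{S}$ the integrand $rI-Q_1x-Q_2y$ is $\leq 0$ but not identically zero, so swapping $\one_{\{\cdot\in\mathcal{S}_n\}}$ for $\one_{\{\cdot\in\mathcal{S}\}}$ genuinely changes the value of the expectation. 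The resolution is that the representation one can (and should) prove at this stage is the one with $\one_{\{(X_t^x,Y_t^y)\in\mathcal{S}_n\}}$ --- this is what the paper's own proof establishes via \eqref{L -r Vn}, and it is the version actually used in the proof of Theorem \ref{Theorem Probabilistic Repres of v}, where the passage from $\mathcal{S}_n$ to $\mathcal{S}$ is performed only in the limit $n\to\infty$ using $\mathcal{S}\subset\mathcal{S}_{n+1}\subset\mathcal{S}_n$ and $\bigcap_n\mathcal{S}_n=\mathcal{S}$. The occurrence of $\mathcal{S}$ rather than $\mathcal{S}_n$ in the displayed statement appears to be a typo; you should not try to prove it as written, and the argument you propose for doing so does not work.
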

\begin{proof}
The proof follows by adapting arguments presented in the proof of Proposition \ref{Proposition vn in W2p Qn}, that is finding a sequence (\ref{unk converges to un}), applying Dynkins formula as in  (\ref{unk Dynkins formula}) and taking the limit as $k \to \infty$, we then obtain the representation 
\begin{align}\label{Probabilistic representation of vn in proof}
V_{n} (x,y) = \mathbb{E} \Bigl[ e^{-r \sigma_{n}} F(X_{\sigma_{n}}^{x} , Y_{\sigma_{n}}^{y} ) - \int_{0}^{\sigma_{n}} e^{-rt} (\mathcal{L} - r)V_{n} (X_{t}^{x} , Y_{t}^{y} ) dt \Bigl],
\end{align}
by recalling that $V_{n}$ solves the boundary condition (\ref{Boundary condition for vn}). Moreover, due to Proposition \ref{Proposition vn in W2p Qn} and arguing as in Lemma B.1 in De Angelis et al.\,\cite{GF}, we have  
\begin{align}\label{L -r Vn}
(\mathcal{L} - r) V_{n}(x,y) = (\mathcal{L}- r) F(x,y)\one_{ \{ (x,y) \in \mathcal{S}_{n} \} } = (r I - Q_{1} x - Q_{2} y) \one_{ \{ (x,y) \in \mathcal{S}_{n} \} } 
\end{align}
for a.e. $(x,y) \in Q_{n}$. Due to Lemma \ref{Lemma Densities} we can use (\ref{L -r Vn}) in (\ref{Probabilistic representation of vn in proof}) and the claim follows. 
\end{proof}
In the forthcoming Proposition we explore some properties of the \textit{sequence} of  functions $V_{n}$, most importantly its behaviour in the limit as $n \to \infty$. This is essential for the proof of Theorem \ref{Theorem Probabilistic Repres of v}, as we aim at studying the limit of (\ref{Probabilistic representation of vn}) for $n \to \infty$. 
\begin{proposition}\label{Proposition vn leq v, v conv to v}
The sequence $\{ V_{n} (\cdot) , \,\, n \in \mathbb{N} \}$ is ascending and such that $V_{n}  \leq V $ on $\mathbb{R}_{+}^{2}$ for all $n \in \mathbb{N}$. Moreover, it converges pointwisely to the value function $V$ of the stopping problem (\ref{Value Function}).
\end{proposition}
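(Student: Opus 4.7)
The claim decomposes into three statements: $(a)$ $V_n \leq V$ on $\mathbb{R}_+^2$ for every $n$, $(b)$ the sequence is nondecreasing in $n$, and $(c)$ $V_n(x,y) \uparrow V(x,y)$ pointwise as $n \to \infty$. The plan is to derive $(a)$ and $(b)$ by a direct comparison of the classes of admissible stopping times entering the definitions of $V_n$, $V_{n+1}$, and $V$, and to obtain $(c)$ by combining $\mathbb{P}$-a.s.\ convergence with the uniform integrability provided by Remark \ref{Remark Technicalities}.

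For $(a)$, the key observation is that whenever $\tau \in \mathcal{T}$ also $\tau \wedge \sigma_n \in \mathcal{T}$; hence the class of rules appearing in the definition of $V_n$ is contained in $\mathcal{T}$, and
\[
V_n(x,y) = \sup_{\tau \in \mathcal{T}} \mathbb{E}\bigl[e^{-r(\tau \wedge \sigma_n)} F(X^x_{\tau \wedge \sigma_n}, Y^y_{\tau \wedge \sigma_n})\bigr] \leq \sup_{\rho \in \mathcal{T}} \mathbb{E}\bigl[e^{-r\rho} F(X^x_\rho, Y^y_\rho)\bigr] = V(x,y).
\]
For $(b)$, since $Q_n \subseteq Q_{n+1}$ the stopping times satisfy $\sigma_n \leq \sigma_{n+1}$ $\mathbb{P}$-a.s., and therefore $\tau \wedge \sigma_n = (\tau \wedge \sigma_n) \wedge \sigma_{n+1}$ for every $\tau \in \mathcal{T}$. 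Using this identity in the expectation defining $V_n$ and taking the supremum over $\tau$ gives $V_n \leq V_{n+1}$.

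Points $(a)$ and $(b)$ guarantee the existence of a pointwise limit $V_\infty \leq V$. To complete $(c)$ I would prove the reverse inequality $V_\infty \geq V$ by fixing an arbitrary $\tau \in \mathcal{T}$ and passing to the limit in
\[
V_n(x,y) \geq \mathbb{E}\bigl[e^{-r(\tau \wedge \sigma_n)} F(X^x_{\tau \wedge \sigma_n}, Y^y_{\tau \wedge \sigma_n})\bigr].
\]
Since $\sigma_n \uparrow \sigma_\infty = +\infty$ $\mathbb{P}$-a.s.\ by \eqref{sigma infty = infty }, one has $\tau \wedge \sigma_n \to \tau$ $\mathbb{P}$-a.s.; combining this with the continuity of $F$ and of the paths of $(X^x,Y^y)$, together with the conventions on $\{\tau = \infty\}$ recalled in Remark \ref{Remark Technicalities}, the integrands converge $\mathbb{P}$-a.s.\ to $e^{-r\tau} F(X^x_\tau, Y^y_\tau)$. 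The interchange of limit and expectation is then justified by the uniform integrability of the families $\{e^{-r\rho} X^x_\rho \one_{\{\rho < \infty\}} : \rho \in \mathcal{T}\}$ and $\{e^{-r\rho} Y^y_\rho \one_{\{\rho < \infty\}} : \rho \in \mathcal{T}\}$ stated in Remark \ref{Remark Technicalities}, in view of the sublinear bound $|F(x,y)| \leq \frac{Q_1}{\delta_1} x + \frac{Q_2}{\delta_2} y + I$. Sending first $n \to \infty$ and then taking the supremum over $\tau \in \mathcal{T}$ on the right produces $V_\infty(x,y) \geq V(x,y)$, which combined with $V_\infty \leq V$ yields the claim.

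The main obstacle is the interchange of limit and expectation in step $(c)$: because $\tau$ may be infinite on a set of positive probability and the payoff $F$ is unbounded, pointwise convergence alone does not suffice, and the argument crucially rests on the uniform integrability in Remark \ref{Remark Technicalities}. Everything else is essentially bookkeeping around the monotonicity $\sigma_n \leq \sigma_{n+1}$ and the definitions.
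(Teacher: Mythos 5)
Your proposal is correct and follows essentially the same route as the paper: the first two claims by direct comparison of stopping rules, and the convergence by combining $\sigma_n \uparrow \infty$ $\mathbb{P}$-a.s.\ with the uniform integrability of Remark \ref{Remark Technicalities} to pass the limit inside the expectation. The only cosmetic difference is that the paper fixes an $\epsilon$-optimal stopping time and bounds $V - V_n$ directly via Vitali's theorem, whereas you work with an arbitrary $\tau$ and take the supremum after the limit; the two are interchangeable here.
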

\begin{proof}
The first two claims follow by recalling (\ref{sigma n to sigma infty}) and simple comparison arguments.
In order to check the convergence of the sequence, we consider an $\epsilon$-optimal stopping time $\tau^{\epsilon} = \tau^{\epsilon} (x,y)$ for the stopping problem with value function  $V(x,y)$. We obtain
\begin{align}\label{0 leq v - vn leq E}
0 \leq V(x,y) - V_{n} (x,y) 
&\leq \mathbb{E} [ e^{-r \tau^{\epsilon} } F(X_{\tau^{\epsilon}}^{x} , Y_{\tau^{\epsilon}}^{y} ) ] - \mathbb{E} [ e^{-r (\tau^{\epsilon} \wedge \sigma_{n}) } F(X_{\tau^{\epsilon} \wedge \sigma_{n} }^{x} , Y_{\tau^{\epsilon} \wedge \sigma_{n}}^{y} ) ]  + \epsilon \nonumber \\[0.3cm]
&= \mathbb{E} [ (e^{-r \tau^{\epsilon}} F(X_{\tau^{\epsilon}}^{x}, Y_{\tau^{\epsilon}}^{y} ) - e^{-r \sigma_{n} } F(X_{\sigma_{n}}^{x} , Y_{\sigma_{n}}^{y} )) \one_{ \{ \sigma_{n} < \tau^{\epsilon} \} } ] + \epsilon.
\end{align}
and due to Assumption \ref{Assr>a} and (\ref{e -rt Xt uniformly integrable}), the sequence of random variables 
\begin{align*}
W_{n} = \Bigl( e^{-r \tau^{\epsilon} } F (X_{\tau^{\epsilon}}^{x} , Y_{\tau^{\epsilon}}^{y} ) - e^{-r \sigma_{n}} F(X_{\sigma_{n}}^{x} , Y_{\sigma_{n}}^{y} ) \Bigl) \one_{ \{ \sigma_{n} < \tau^{\epsilon} \} }
\end{align*}
is uniformly integrable. Moreover, the sequence converges in measure and we have $\lim_{n \to \infty} W_{n} = 0$ $\mathbb{P}$-a.s. The convergence theorem of Vitali (cf.\,Folland \cite{Folland}, p.\,187) then implies
\begin{align*}
\lim_{n \to \infty} \mathbb{E} [ (e^{-r \tau^{\epsilon}} F(X_{\tau^{\epsilon}}^{x}, Y_{\tau^{\epsilon}}^{y} ) - e^{-r \sigma_{n} } F(X_{\sigma_{n}}^{x} , Y_{\sigma_{n}}^{y} )) \one_{ \{ \sigma_{n} < \tau^{\epsilon} \} } ] = 0
\end{align*}
and the claim follows by the arbitrariness of $\epsilon > 0$ in (\ref{0 leq v - vn leq E}). 
\end{proof}
We now state the proof of Theorem \ref{Theorem Probabilistic Repres of v}, in which we derive the probabilistic representation (\ref{Probabilistic representation of v}) of the value function $V$ of the optimal stopping problem (\ref{Value Function}).
\subsubsection*{Proof of Theorem \ref{Theorem Probabilistic Repres of v}}
We study the representation (\ref{Probabilistic representation of vn}) in the limit as $n \to \infty$. Notice that the left-hand side converges pointwisely to the function $V$, due to Proposition \ref{Proposition vn leq v, v conv to v}. 
It is therefore left to check that the following equality holds true
\begin{align} \label{lim E F - int }
\lim_{n \to \infty} \mathbb{E} \Bigl[ e^{-r \sigma_{n} } F( X_{\sigma_{n}}^{x} , Y_{\sigma_{n}}^{y} ) - &\int_{0}^{\sigma_{n}} e^{-rt} (r I - Q_{1} X_{t}^{x} - Q_{2} Y_{t}^{y}) \one_{ \{ (X_{t}^{x} , Y_{t}^{y}) \in \mathcal{S} \} } dt \Bigl] \nonumber \\
= \mathbb{E} \Bigl[ &\int_{0}^{\infty} e^{-rt} (Q_{1} X_{t}^{x} + Q_{2} Y_{t}^{y} - rI) \, \one_{ \{ (X_{t}^{x} , Y_{t}^{y} ) \in \mathcal{S} \}} dt \Bigl].
\end{align}
Note that since 
\begin{align*}
\mathbb{E} [e^{-r \sigma_{n}} F(X_{\sigma_{n}}^{x} , Y_{\sigma_{n}}^{y}) ] = \frac{Q_{1}}{\delta_{1}} \mathbb{E} [e^{-r \sigma_{n}} X_{\sigma_{n}}^{x} ] + \frac{Q_{2}}{\delta_{2}} \mathbb{E} [e^{-r \sigma_{n}} Y_{\sigma_{n}}^{y} ] - e^{-r \sigma_{n}} I
\end{align*}
and $\sigma_{n} \uparrow \infty$, Remark \ref{Remark Technicalities} together with Vitali's convergence theorem yields
\begin{align}\label{lim E F = 0}
\lim_{n \to \infty} \mathbb{E} [e^{-r \sigma_{n}} F(X_{\sigma_{n}}^{x} , Y_{\sigma_{n}}^{y}) ] = 0.
\end{align}
We now seek to study the limit of the integral term of (\ref{lim E F - int }). Observe that $V_{n} \leq V_{n+1} \leq V$ implies $\mathcal{S} \subset \mathcal{S}_{n+1} \subset \mathcal{S}_{n}$ for all $n \in \mathbb{N}$, while the pointwise convergence of $V_n \uparrow V$ implies that $\lim_{n \to \infty} \mathcal{S}_{n} := \bigcap_{n \geq 0} \mathcal{S}_{n} = \mathcal{S}$. We therefore have
\begin{align*}
\lim_{n \to \infty} \one_{ [0, \sigma_{n} ]} (t) e^{-rt} (rI - Q_{1} X_{t}^{x} &- Q_{2} Y_{t}^{y} ) \one_{ \{ (X_{t}^{x} , Y_{t}^{y}) \in \mathcal{S}_{n} \} }  =  e^{-rt} (rI - Q_{1} X_{t}^{x} - Q_{2} Y_{t}^{y} ) \one_{ \{ (X_{t}^{x} , Y_{t}^{y}) \in \mathcal{S} \} }
\end{align*}
for a.e. $(t,x,y) \in \mathbb{R}_{+} \times \mathbb{R}_{+}^{2}$. Moreover, notice that for a constant $C$ depending on $Q_{1}, Q_{2}, I$ and $r$, we have 
\begin{align}\label{vert ert rI - Q1X - Q2Y beschrankt}
\vert e^{-rt} (rI - Q_{1} X_{t}^{x} - Q_{2} Y_{t}^{y} ) \one_{ \{ (X_{t}^{x} , Y_{t}^{y} ) \in \mathcal{S}_{n} \} } \vert  &\leq e^{-rt} \vert rI - Q_{1} X_{t}^{x} - Q_{2} Y_{t}^{y} ) \vert \nonumber \\[0.2cm]
 &\leq e^{-rt} C(1 + X_{t}^{x} + Y_{t}^{y} ),
\end{align}
where the last term is integrable due to Assumption \ref{Assr>a}. Applying dominated convergence then yields
\begin{align}\label{lim E int ert rI - Q1 usw}
\lim_{n \to \infty} \mathbb{E} \Bigl[ \int_{0}^{\sigma_{n}} e^{-rt} (rI - Q_{1} X_{t}^{x} &- Q_{2} Y_{t}^{y} ) \one_{ \{(X_{t}^{x} , Y_{t}^{y}) \in \mathcal{S}_{n} \} } dt \Bigl] \nonumber \\
&= \mathbb{E} \Bigl[ \int_{0}^{\infty} e^{-rt} (rI - Q_{1} X_{t}^{x} - Q_{2} Y_{t}^{y} ) \one_{ \{ (X_{t}^{x} , Y_{t}^{y} ) \in \mathcal{S} \} } dt \Bigl],
\end{align}
and the claim follows by (\ref{lim E F = 0}) and (\ref{lim E int ert rI - Q1 usw}).
\section{Proof of Proposition \ref{Proposition Variational Inequality}} \label{Appendix Lemma Variational Inequality}
By taking the stopping time $\tau = 0$, it immediately follows that $V(x,y) \geq F(x,y)$ for all $(x,y) \in \mathbb{R}_{+}^{2}$.
It is thus left to prove that $(\mathcal{L}- r)V(x,y) \leq 0$ a.e. on $\mathbb{R}_{+}^{2}$. Since the function $F$ is continuous on $\mathbb{R}_{+}^{2}$, standard results from optimal stopping theory (cf.\,Peskir and Shiryaev \cite{Peskir}, Chapter 3, 7.1) and PDE theory for elliptic equations imply that the value function is such that $V \in C^{2,2} (\mathcal{C})$ and it solves 
\begin{align}\label{Dirichlet Problem 1} 
 (\mathcal{L}-r)V = 0 \quad \text{on} ~ \mathcal{C}.
\end{align}
 As $V$ solves \ref{Dirichlet Problem 1}, rearranging terms upon using Proposition \ref{Proposition Properties of value function} implies 
\begin{align*}
0  \leq  \frac{1}{2} \sigma_{1}^{2} x^{2} V_{xx}  &= - \frac{1}{2} \sigma_{2}^{2} y^{2} V_{yy}  - \alpha_{1} x V_{x} - \alpha_{2} y V_{y} + rV 
\leq  r V - \alpha_{1} x V_{x} - \alpha_{2} y V_{y}, \quad \text{on} ~ \mathcal{C},
\end{align*}
which is equivalent to 
\begin{align*}
0  \leq  V_{xx}  \leq  \frac{2}{\sigma_{1}^{2} x^{2}} \Bigl[ rV - \alpha_{1} x V_{x} - \alpha_{2} y V_{y} \Bigl], \quad \text{on} ~ \mathcal{C}. 
\end{align*}
But the right-hand side defines a continuous function on $\mathbb{R}_+^2$ by Proposition \ref{Proposition Value function C1}, and therefore there exists finite 
\begin{align*}
\lim_{\mathcal{C} \ni (x_n , y_n) \to (x,y) \in \partial \mathcal{C} } V_{xx} (x,y).
\end{align*}
 Similarly, one is able to prove existence of the second derivative with respect to $y$ at $\partial \mathcal{C}$. Hence, $V_x$ as well as $V_y$ are therefore locally Lipschitz on $\bar{\mathcal{C}}$, as well as on int$(\mathcal{S})$, where $V=F$. We now show that $V_x$ and $V_y$ are locally Lipschitz continuous on $\mathbb{R}_+^2$. To this end, set $b^{-1} (y) := \inf \{ y \in \mathbb{R}_+ : ~ y > b(x) \}$ and $g(x,y) := - \frac{2}{\sigma_1^2} ( \alpha_1 x V_x (x,y) + \alpha_2 V_y (x,y) + r V(x,y) ) \in C(\mathbb{R}_+^2)$. Then, for $x < b^{-1} (y)$ and $x' > b^{-1} (y)$ we obtain by convexity of $V$ that
\begin{align*}
0 \leq V_x (x' ,y) - V_x (x,y) = \int_x^{b^{-1} (y)} \big( g(z,y) - \frac{\sigma_2^2}{\sigma_1^2} y^2 V_{yy} (z,y) \big) \frac{1}{z^2} dz + \int_{b^{-1}(y)}^{x'} F_{xx} (z,y) dz \\
\leq \int_{x}^{b^{-1} (y)} g (z,y) \frac{1}{z^2} dz \leq K(x,y) (b^{-1} (y) -x ) \leq K(x,y) \vert x' - x \vert ,
\end{align*}
where we used that $V$ solves (\ref{Dirichlet Problem 1}). Here, the constant $K>0$ is such that $K:= K(x,y) \in L^{\infty}_{\text{loc}} (\mathbb{R}^2_+)$. Therefore, $V_x (\cdot, \, y)$ is locally Lipschitz on $\mathbb{R}_+^2$; i.e. $V_{xx}  \in L^{\infty}_{\text{loc}} (\mathbb{R}^2_+)$.  Analogously, one obtains the same result for $y \mapsto V_y (x,y)$. We thus have $V(\cdot,y), V(x,\cdot) \in \mathcal{W}^{2,2}_{\text{loc}} (\mathbb{R}^2_+)$ (cf.\,Evans and Gariepy \cite{Evans}, p.\,164, Theorem 2(ii)), and finally a result by S.\,Bernstein (cf.\,Krantz \cite{Krantz}, Theorem 3) yields $V \in \mathcal{W}^{2,2}_{\text{loc}} (\mathbb{R}^2_+)$. \\
\hspace*{0.3cm} It remains to check that $(\mathcal{L}-r) V \leq 0 $ a.e. in $\mathcal{S}$. For that, we notice  $(\mathcal{L}-r) V  = (\mathcal{L} - r) F $ for a.e. $(x,y) \in \mathcal{S}$, which can in fact be proved by arguing as in De Angelis et al.\,\cite{GF}, Lemma B.1, due to the fact that $V \in \mathcal{W}^{2,2}_{\text{loc}} (\mathbb{R}_+^2)$. Moreover, we have
\begin{align*}
\mathcal{S} = \{ (x,y) \in \mathbb{R}_{+}^{2} : ~ V(x,y) = F(x,y) \}  \subseteq  \{ (x,y) \in \mathbb{R}_{+}^{2} : ~ (\mathcal{L} -r) F(x,y) \leq 0 \}
\end{align*}
which follows from (\ref{h leq b by this sets}). We thus obtain $(\mathcal{L} - r) V(x,y) \leq 0$ for a.e.\ $(x,y) \in \mathcal{S}$, which then completes the proof. 

\section*{Acknowledgements}
The authors thank two anonymous Referees for pertinent and useful comments on a first version of this work. 

\section*{Funding}
The authors gratefully acknowledge funding by the \textit{Deutsche Forschungsgemeinschaft} (DFG, German Research Foundation) – SFB 1283/2 2021 – 317210226.


\end{document}